\def \simiid {\overset{\text{i.i.d.}}{\sim}}
\def \d {{\rm d}}
\def \bv {v}
\def \A {\mathcal{A}}
\def \B {\mathcal{B}}
\def \C {\mathcal{C}}
\def \sX {\mathcal{X}}
\def \E {\mathbb{E}}
\def \P {\mathbb{P}}
\def \R {\mathbb{R}}
\def \N {\mathbb{N}}
\def \P {\mathbb{P}}
\def \sX {\mathcal{X}}
\def \sY {\mathcal{Y}}
\def \sO {\mathcal{O}}
\newcommand{\PRP}{P_\mathrm{RP}}
\newcommand{\PCD}{P_\mathrm{CD}}
\newcommand{\PMG}{P_\mathrm{MG}}
\newcommand{\tPMG}{\tilde{P}_\mathrm{MG}}
\newcommand{\PMGi}{P_{\mathrm{MG}, i}}
\newcommand{\PNR}{P_\mathrm{NR}}
\newcommand{\QNR}{Q_\mathrm{NR}}
\newcommand{\Prev}{P_\mathrm{R}}
\newcommand{\Var}{\text{Var}}
\newcommand{\Cov}{\text{Cov}}
\newtheorem{theorem}{Theorem}[section]
\newtheorem{proposition}[theorem]{Proposition} 
\newtheorem{lemma}[theorem]{Lemma} 
\newtheorem{definition1}[theorem]{Definition} 
\newtheorem{remark1}[theorem]{Remark} \newenvironment{remark}{\begin{remark1}\rm}{\hfill$\square$\end{remark1}}
\long\def\symbolfootnote[#1]#2{\begingroup\def\thefootnote{\hspace*{-1mm}\fnsymbol{footnote}}\footnote[#1]{#2}\endgroup}
\title{\bf 
A fast non-reversible sampler for Bayesian finite mixture models}
\author{Filippo Ascolani\thanks{Duke University, Department of Statistical Science, Durham, NC, United States (filippo.ascolani@duke.edu)}\, and Giacomo Zanella\thanks{Bocconi University, Department of Decision Sciences and BIDSA, Milan, Italy (giacomo.zanella@unibocconi.it)\\ GZ acknowledges support from the European Research Council (ERC), through StG ``PrSc-HDBayLe''
grant ID 101076564.}}
\date{ \today}
\begin{document}

\maketitle
\thispagestyle{empty}

\abstract{Finite mixtures are a cornerstone of Bayesian modelling, and it is well-known that sampling from the resulting posterior distribution can be a hard task.  
In particular, popular reversible Markov chain Monte Carlo schemes  are often slow to converge  
when the number of observations $n$ is large.  
In this paper we introduce a novel and simple non-reversible sampling scheme for Bayesian finite mixture models, which is shown to drastically outperform classical samplers in many scenarios of interest, especially during convergence phase 
and when components in the mixture have non-negligible overlap.  
At the theoretical level, we show that the performance of the proposed non-reversible scheme cannot be worse than the standard one, in terms of asymptotic variance, by more than a factor of four; and we provide a scaling limit analysis suggesting that the non-reversible sampler can reduce the convergence time from $\mathcal{O}(n^2)$ to $\mathcal{O}(n)$. 
We also discuss why 
the statistical features of mixture models make them an ideal case for the use of non-reversible discrete samplers. 
}


\section{Introduction}

\subsection{Bayesian finite mixture models}

Let $K \in \N$ and consider a finite mixture model \citep{marin2005bayesian, fruhwirth2006finite, mclachlan2019finite} defined as
\begin{equation}\label{eq:original_model}
\begin{aligned}
        Y_i \mid \bm{\theta}, \bm{w} &\simiid \sum_{k = 1}^Kw_kf_{\theta_k}(\cdot)\qquad\qquad&i=1,\dots,n
        \\  
\theta_k &\simiid p_0&k=1,\dots,K\\
\bm{w}&\sim \text{Dir}(\bm{\alpha}),
\end{aligned}
\end{equation}
where $\bm{\theta} = (\theta_1, \dots, \theta_K)$, $\bm{w} = (w_1, \dots, w_K)$, $\bm{\alpha} = (\alpha_1, \dots, \alpha_K)$ and $\text{Dir}(\bm{\alpha})$ denotes the Dirichlet distribution on the $(K-1)$-dimensional simplex $\Delta_{K-1}\subset \R^K$ with parameters $\bm{\alpha}$.
Here $f_\theta$ is a probability density on a space $\sY$ depending on a parameter $\theta \in \Theta \subset \R^d$, to which a prior distribution with density $p_0$ is assigned. 
For example, one could have $\sY=\Theta=\R^d$ and $f_\theta(y) = N(y \mid \theta, \Sigma)$ for some fixed $\Sigma$, where $N(y \mid \theta, \Sigma)$ denotes the density of the multivariate normal with mean vector $\theta$ and covariance matrix $\Sigma$ at a point $y$.

A popular strategy to perform posterior computations with model \eqref{eq:original_model} (also for maximum likelihood estimation, as originally noted in \cite{dempster1977maximum}) consists in augmenting the model  
as follows
\begin{equation}\label{eq:augmented_model}
    Y_i \mid c, \bm{\theta}, \bm{w} \simiid f_{\theta_{c_i}}(y), \quad c_i \mid \bm{\theta}, \bm{w} \simiid \text{Cat}(\bm{w}), \quad \bm{w} \sim \text{Dir}(\bm{\alpha}), \quad \theta_k \simiid p_0,
\end{equation}
where $c= (c_1, \dots, c_n) \in [K]^n$, with $[K] = \{1, \dots, K\}$, is the set of allocation variables and Cat$(\bm{w})$ denotes the Categorical distribution with probability weights $\bm{w}$. 
Given a sample $Y = (Y_1, \dots, Y_n)$, the joint posterior distribution of $(c, \bm{\theta}, \bm{w})$ then reads
\begin{equation}\label{eq:augmented_posterior}
\begin{aligned}
\pi(c, \bm{\theta}, \bm{w}) \, &\propto \, \left[ \prod_{i = 1}^nw_{c_i}f_{\theta_{c_i}}(Y_i)\right]\text{Dir}(\bm{w} \mid \bm{\alpha})\prod_{k = 1}^Kp_0(\theta_k)\\
& \propto \, \left[\prod_{k = 1}^Kw_k^{n_k(c) + \alpha_k - 1}\right]\prod_{k = 1}^K\prod_{i \, : \, c_i = k}f_{\theta_{k}}(Y_i)p_0(\theta_k),
\end{aligned}
\end{equation}
where $n_k(c) = \sum_{i = 1}^n \mathbbm{1}(c_i = k)$ and $\mathbbm{1}$ denotes the indicator function. 
In particular, it is possible to integrate out $(\bm{\theta}, \bm{w})$ from \eqref{eq:augmented_posterior} to obtain the marginal posterior distribution of $c$ given by
\begin{equation}\label{eq:marg_c}
\pi(c) \, \propto \, \left[\prod_{k = 1}^K\Gamma\left(\alpha_k + n_k(c)\right)\right]\prod_{k = 1}^K\int_{\Theta}\prod_{i \, : \, c_i = k}f_{\theta_{k}}(Y_i)p_0(\theta_k)\, \d \theta_k,
\end{equation}
from which we deduce the so-called full-conditional distribution of $c_i$
\begin{align}\label{eq:predictive}
\pi(c_i = k \mid c_{-i}) \, &\propto \, \left[\alpha_k + n_k(c_{-i})\right] p(Y_i \mid Y_{-i}, c_{-i}, c_i = k)
&k\in[K]
\end{align} 
where $c_{-i} = (c_1, \dots, c_{i-1}, c_{i+1}, \dots, c_n)$, $Y_{-i} = (Y_1, \dots, Y_{i-1}, Y_{i+1}, \dots, Y_n)$  and
\[
p(Y_i \mid Y_{-i}, c_{-i}, c_i = k) = \int_\Theta f_{\theta}(Y_i)\frac{\prod_{j \neq i \, : \, c_j = k}f_{\theta}(Y_j)p_0(\theta)}{\int_\Theta \prod_{j \neq i \, : \, c_j = k}f_{\theta'}(Y_j)p_0(\theta') \, \d \theta'} \, \d \theta
\]
is the predictive distribution of observation $Y_i$ given $Y_{-i}$ and the allocation variables.
If $p_0$ is conjugate with respect to the density $f_\theta$, then $p(Y_i \mid Y_{-i}, c_{-i}, c_i = k)$ and thus $\pi(c_i = k \mid c_{-i})$ are available in closed form. For example, if $f_\theta(y) = N(y \mid \theta, \Sigma)$ and $p_0(\theta) = N(\theta \mid \theta_0, \Sigma_0)$ it holds that $p(Y_i \mid Y_{-i}, c_{-i}, c_i = k) = N(Y_i \mid \bar{\mu}, \bar{\Sigma})$, where
\[
\bar{\Sigma} = \Sigma + \left(\Sigma_0^{-1}+n_k(c_{-i})\Sigma^{-1} \right)^{-1}, \quad \bar{\mu} = \left(\Sigma_0^{-1}+n_k(c_{-i})\Sigma^{-1} \right)^{-1}\left(\Sigma_0^{-1}\theta_0+n_k(c_{-i})\Sigma^{-1}\bar{Y}_{k, -i} \right)
\]
and $\bar{Y}_{k, -i} = n^{-1}_k(c_{-i})\sum_{j \neq i \, : \, c_j = k}Y_j$. Analogous expressions are available for likelihoods in the exponential family, see e.g.\ \citet[Sec.3.3]{robert2007bayesian} for details.

\subsection{The Marginal Gibbs (MG) sampler}\label{sec:marginal}
Most popular algorithms for finite mixture models are based on the augmentation in \eqref{eq:augmented_model}, see e.g.\ \cite{diebolt1994estimation}. Here we consider the random-scan\footnote{Here we consider the random-scan strategy since it simplifies some of the proofs and comparisons below. We expect the behaviour of $\PMG^n$, where $P^k = P\dots P$ denotes the $k$-th power of a Markov kernel $P$, to be roughly comparable to the one of the deterministic-scan version (which updates $c_i$ for $i=1,\dots,n$ sequentially at each iteration) in  most cases of interest, and thus stick to the random-scan for simplicity.} Gibbs sampler which iterates updates from $\pi(c_i \mid c_{-i})$ for randomly chosen $i\in [n]$. Its Markov kernel $\PMG$ is defined as
\begin{align*}
    \PMG(c, c') &= \frac{1}{n}\sum_{i = 1}^n\PMGi(c, c'), 
    &c,c'\in[K^n]
\end{align*}
with $\PMGi(c, c') = \delta_{c_{-i}}( c_{-i}')\pi(c_i \mid c_{-i})$.
The associated pseudocode is given in Algorithm \ref{alg:PMG}.
\begin{algorithm}[htbp]
\begin{algorithmic}
\State Initialize $c^{(0)}\in [K]^n$. 
\For{$t \geq 1$}
    \State Sample $i\sim \text{Unif}\left(\{1, \dots, n\}\right)$, where $\text{Unif}$ denotes the uniform distribution.
    \State Sample $c^{(t)}_i \sim \pi(c_i \mid c_{-i}^{(t-1)})$, with $\pi(c_i = k \mid c_{-i})$ as in \eqref{eq:predictive}, 
    and set $c^{(t)}_{-i}=c_{-i}^{(t-1)}$.
   \EndFor
\end{algorithmic}
\caption{(Marginal sampler $\PMG$)
\label{alg:PMG}}
\end{algorithm}
We refer to $\PMG$ as \emph{marginal} sampler, since it targets the marginal posterior distribution of $c$ defined in \eqref{eq:marg_c}. 
Once a sample from $\pi(c)$ is available, draws from $\pi(c,\bm{\theta}, \bm{w})$ can be obtained by sampling from $\pi(\bm{\theta}, \bm{w} \mid c)$, so that Algorithm \ref{alg:PMG} can be used to perform full posterior inference on $\pi(c, \bm{\theta}, \bm{w})$. 

Being an irreducible and aperiodic Markov kernel on a finite space, $\PMG$ is uniformly ergodic for every fixed $n$, see e.g.\ \citet[Theorem 4.9]{levin2017markov} and \citet[Sec.3.3]{roberts2004general} for discussion about uniform ergodicity. 
However, as we will see later on, its rate of convergence tends to deteriorate quickly as $n$ increases. 

A popular alternative to the marginal sampler is the so-called \emph{conditional} sampler introduced in \cite{diebolt1994estimation}, which directly targets $\pi(c, \bm{\theta}, \bm{w})$ defined in \eqref{eq:augmented_posterior} 
alternating updates of $(\bm{\theta}, \bm{w}) \mid c$ and $c \mid (\bm{\theta}, \bm{w})$. 
We postpone the discussion of this algorithm to Section \ref{sec:conditional}, since the latter is always dominated by $\PMG$ in terms of mixing speed (see e.g.\ Proposition \ref{prop:asymp_variances_cond}).

\subsection{Illustrative example}
It is well-known that $\PMG$ can be slow to converge when $n$ is large \citep{celeux2000computational, lee2009bayesian}. As a first illustrative example, we take 
model \eqref{eq:original_model} with $K = 2$, $f_\theta(y) = N(y \mid \theta, 1)$, $p_0(\theta) = N(\theta \mid 0, 1)$, $\bm{\alpha} = (0.5, 0.5)$, and we consider the posterior distribution given $(Y_1,\dots,Y_n)$ generated as
\begin{align}\label{eq:truth_example}
Y_i &\simiid 0.9N(0.9, 1)+0.1N(-0.9, 1), 
& i = 1, \dots, n
\end{align}
with $n=2000$.  
This is a relatively simple one-dimensional problem, with data generated from two components with a reasonable degree of separation between them (the two means are almost two standard deviations away from each other).
\begin{figure}[h]
\centering
\includegraphics[width=.32\textwidth]{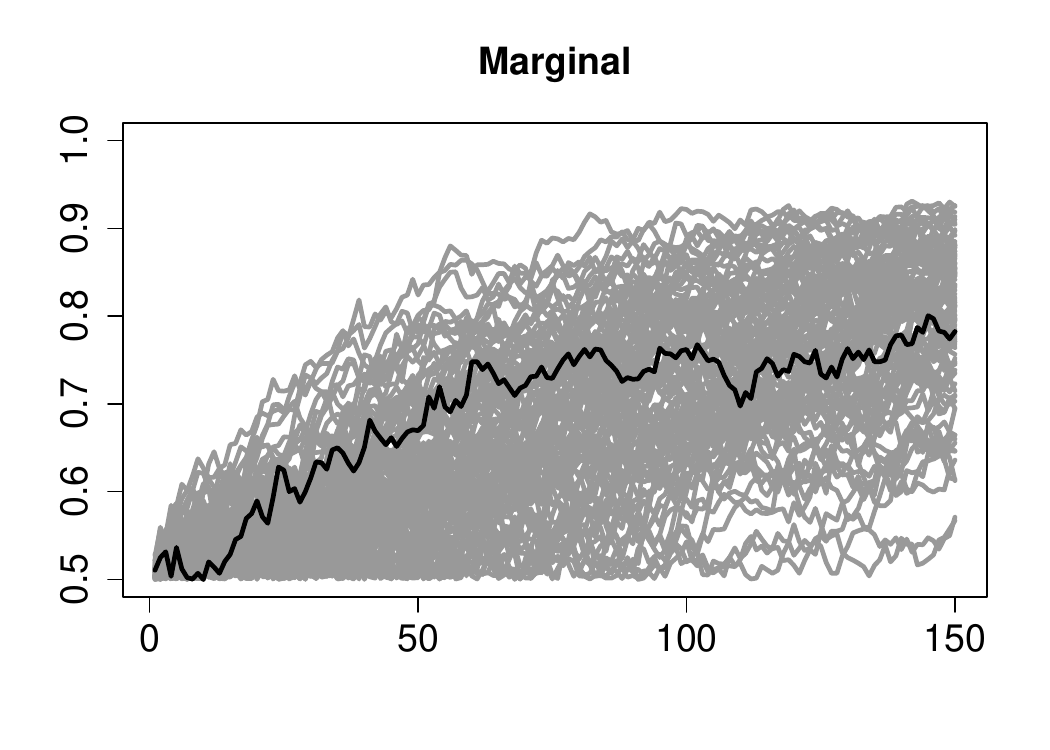} \,
\includegraphics[width=.32\textwidth]{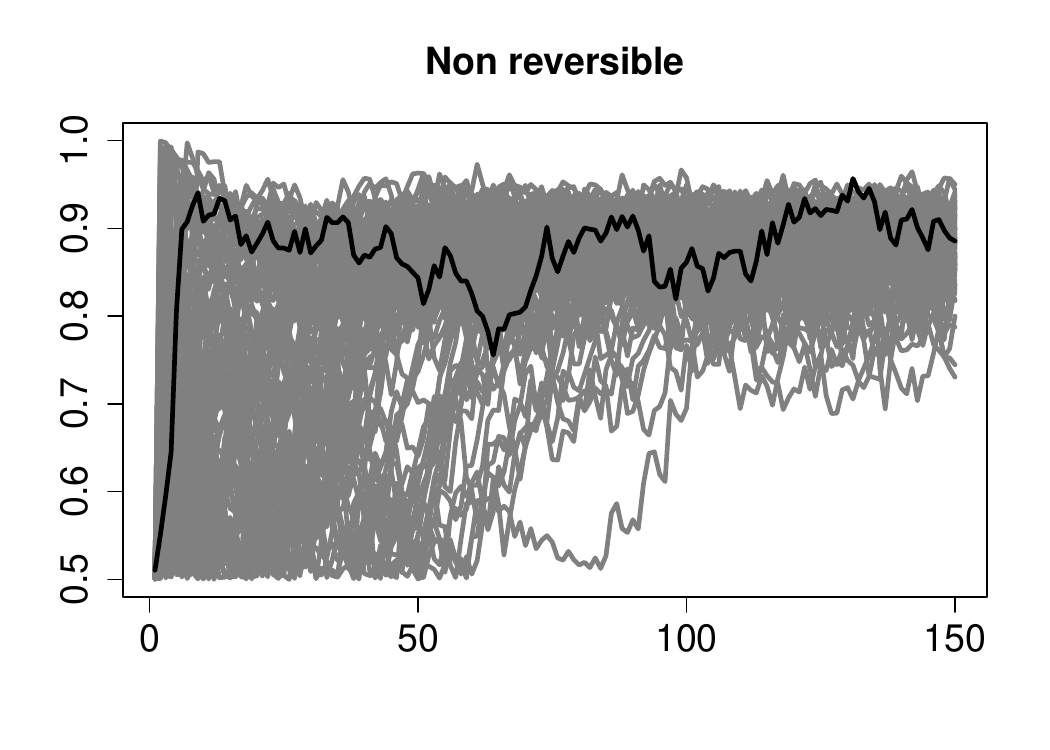} \,
\includegraphics[width=.32\textwidth]{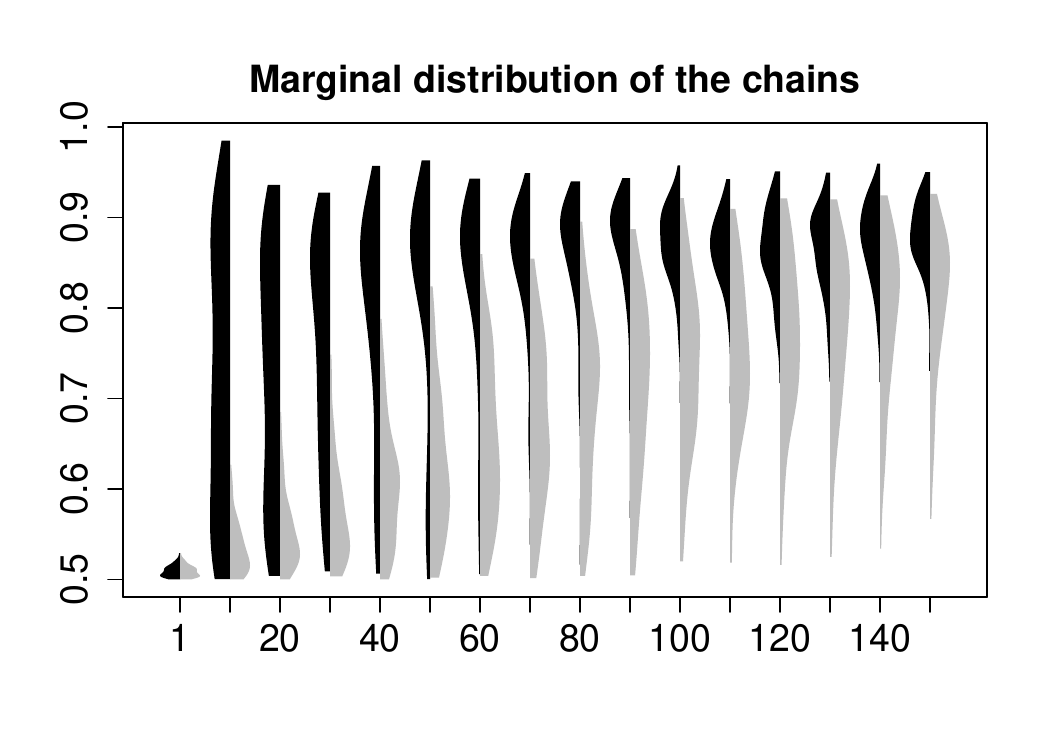} 
 \caption{\small{
Left and center: traceplots of $100$ independent runs of the size of the largest cluster for $150$ iterations (after a thinning of size $n$, i.e.\ after $150\times n$ total updates) of $\PMG$ (left) and $\PNR$ (center) in Algorithm \ref{alg:PNR}. Right: empirical estimates of the marginal distribution at every $10\times n$ iterations for $\PMG$ (gray) and $\PNR$ (black). The model is the one in \eqref{eq:original_model} with $K = 2$, $f_\theta(y) = N(y \mid \theta, 1)$, $p_0(\theta) = N(\theta \mid 0, 1)$ and $\bm{\alpha} = (0.5, 0.5)$, while the data are generated as in \eqref{eq:truth_example}. Initial configurations $c^{(0)}$ are sampled uniformly from $[K]^n$.
  }}
 \label{fig:example}
\end{figure}

We use $\PMG$ to sample from the resulting posterior $\pi(c)$, leading to a Markov chain $\{c^{(t)}\}_{t=0,1,2,\dots}$ on $[K]^n$. 
The left panel of Figure \ref{fig:example} displays $100$ independent traceplots of the size of the largest cluster in $\{c^{(t)}\}_t$, with all chains independently initialized by sampling $c^{(0)}$ uniformly from $[K]^n$.
Most runs are still far from $0.9$ (value around which we expect the posterior to concentrate) after $150\times n$ iterations. Indeed trajectories exhibit a typical random-walk behaviour, with slow convergence to stationarity. The center panel instead 
shows the traceplots generated by the
same number of runs and iterations of $\PNR$, the non-reversible scheme we propose in Section \ref{sec:PNR_def} (see Algorithm \ref{alg:PNR} therein for pseudocode and full details). The chain appears to reach the high-probability region and forget the starting configuration much faster. This is also clear from the right panel, which displays empirical estimates of the marginal distributions of the Markov chains induced by $\PMG$ and $\PNR$ over time.

\subsection{Lifted samplers for discrete spaces}\label{section:lifted}

Our proposed sampler is inspired by classical non-reversible MCMC constructions \citep{diaconis2000analysis, fearnhead2018piecewise}, which loosely speaking force the algorithm to persistently move in one direction as much as possible. 
To illustrate the idea, consider an arbitrary probability distribution $\pi$ on a countable space $\C$ and an augmented distribution
\[
\tilde{\pi}(c, v) = \frac{\pi(c)}{2}, \quad c \in \C, v \in \{-1, +1 \},
\]
on the space $\sX = \C \times \{-1, + 1\}$, so that $\pi$ is the marginal distribution of $\tilde{\pi}$ over $\C$. 
Given two Markov kernels $\{q_{+1}(c,\cdot)\}_{c\in \mathcal{C}}$ and $\{q_{-1}(c,\cdot)\}_{c\in \mathcal{C}}$ on $\mathcal{C}$, let $P_{\text{lift}}$ be the non-reversible $\tilde{\pi}$-invariant Markov kernel defined in Algorithm \ref{alg:lifted}. 
\begin{algorithm}[htbp]
\begin{algorithmic}
\State Generate $\tilde{c} \sim q_{v}(c,\cdot)$.
\State Set $(c', v') = (\tilde{c}, v)$ with probability
\[
\min\left\{1, \frac{\pi(\tilde{c})q_{-v}(\tilde{c},c)}{\pi(c)q_{v}(c,\tilde{c})} \right\}.
\]
Otherwise set $(c', v') = (c, - v)$.
\end{algorithmic}
\caption{Generating a sample $(c', v') \sim P_{\text{lift}}((c, v), \cdot)$
\label{alg:lifted}}
\end{algorithm}
The kernels are usually chosen so that $q_v(c,\cdot)$ and $q_{-v}(c,\cdot)$ have disjoint support and the variable $v\in\{-1,+1\}$ encodes a direction (or velocity) along which the chain is exploring the space: such direction is reversed only when a proposal is rejected (see Algorithm \ref{alg:lifted}). 
As a simple example, take $\C = \N$ and $q_v(c,\cdot) = \delta_{c + v}(\cdot)$, so that $v = +1$ implies that the chain is moving towards increasing values and viceversa with $v = -1$. 
Within this perspective $v$ can be seen as a ``memory bank'' which keeps track of the past history of the chain and  
introduces momentum. 
The kernel $P_{\text{lift}}$ is often referred to as a \emph{lifted} version of the (reversible) Metropolis-Hastings kernel with proposal distribution $q=0.5q_{+1}+0.5q_{-1}$ and target distribution $\pi$.
Lifted kernels can mix significantly faster than their reversible counterparts \citep{diaconis2000analysis} and are in general at least as efficient as the original method under mild assumptions \citep{bierkens2016non,andrieu2021peskun, gagnon2024theoretical}. 
However, whether or not lifting techniques achieve a notable speed-up 
depends on the features of $\pi$ and the choice of $q_{v}$. 
For example, if proposed moves are often rejected, then the direction $v$ will be reversed frequently and the chain will exhibit an almost reversible behaviour; while if the sampler manages to make long `excursions' (i.e.\ consecutive moves without flipping direction) one expects to observe significant gains obtained by lifting.

\paragraph{Non-reversible samplers for mixture models}
General techniques to construct non-reversible samplers for discrete spaces have been proposed in the literature, see e.g.\ \citet[Sec.3]{gagnon2024asymptotic} for constructions on partially-ordered discrete spaces and \citet{power2019accelerated} for discrete spaces with algebraic structures. 
We are, however, not aware of successful applications of these methodologies to mixture models. 
Part of the reason could be that, in order to build a lifted counter-part of $\PMG$ for mixture models, one would need to define some notion of direction or partial ordering on $[K]^n$, or more generally on the space of partitions and it is not obvious how to do so in a way that is computationally efficient and results in long excursions with persistence in direction (thus leading to substantial speed-ups). 
For example, one could directly rely on the cartesian product structure of $[K]^n$ and attach a velocity component to each coordinate, applying for example the discrete Hamiltonian Monte Carlo algorithm of \cite{nishimura2020discontinuous}: this however would not pair well with the geometry of posterior distributions $\pi(c)$ arising in mixture models and likely result in short excursions and little speed-up.

To tackle this issue, we take a different perspective on $[K]^n$, moving from the kernel $\PMG$, which is a mixture over data-points (i.e.\ over $i\in[n]$), to a kernel $\Prev$ which is a mixture over pairs of clusters (see Section \ref{sec:def_reversible} for definition). This allows us to derive an effective non-reversible sampler $\PNR$ targeting $\pi(c)$, built as a mixture of lifted samplers associated to pairs of clusters (see Section \ref{sec:PNR_def} for definition). 
Note that, while we designed our sampler to be effective for posterior distributions of mixture models, the proposed scheme can in principle be used with any distribution $\pi$ on $[K]^n$.

\subsection{Related literature}\label{sec:lit}

\paragraph{Bayesian mixture models}
Bayesian finite mixture models are a classical topic which has received a lot of attention in the last decades, see \cite{marin2005bayesian, fruhwirth2006finite} for some reviews. 
The challenges related to sampling from the resulting posterior distribution have been also discussed extensively, see e.g.\ early examples in \citep{diebolt1994estimation, celeux2000computational,stephens2000dealing, lee2009bayesian, hobert2011improving}, and 
the marginal and conditional samplers we compare with are arguably the most popular schemes that are routinely used to accomplish such tasks \citep[Section 1.4]{marin2005bayesian}.

\paragraph{Lifted MCMC for statistical models with discrete parameters}
Non-reversible MCMC samplers have been previously designed for and applied to Bayesian statistical models with discrete parameters, such as variable selection, permutation-based and graphical models; see e.g.\ \citet{power2019accelerated,gagnon2024asymptotic,
schauer2024causal} and references therein. 
However, 
posterior distributions arising from such models are usually strongly concentrated and highly non-smooth, limiting the length of excursions and speed-ups obtained with lifted chains. 
As a result, one often ends up observing large gains (e.g.\ hundred-fold) when targeting uniform or prior distributions (which are usually quite flat) and more modest gains (e.g.\ two-fold) when targeting actual posterior distributions used in practice; see e.g.\ \citet[Figures 1 and 3]{
schauer2024causal}, \citet[Table 1]{power2019accelerated} or \citet[Figure 1]{gagnon2024asymptotic}\footnote{This is in contrast with applications of lifting techniques to discrete models arising in Statistical Physics (see e.g.\ \citealp{vucelja2016lifting}), which often feature a higher degree of symmetry and smoothness, thus making non-reversible MCMC methods more effective; see e.g.\ \citet[Table 1]{power2019accelerated} for numerical examples  
and \citet{faulkner2024sampling} for a recent review.}.  
Instead, a key feature of our proposed sampler is that, in many cases of interest, the speed-up relative to its reversible counter-part remains large even in the presence of observed data (i.e.\ for the actual posterior). We argue that this is due to statistical features of mixture models that make them well-suited to appropriately designed non-reversible samplers (such as $\PNR$); see Section \ref{sec:specificities} for more details.

\subsection{Structure of the paper}
In Section \ref{sec:non_reversible} we introduce our proposed non-reversible Markov kernel $\PNR$, which targets $\pi(c)$ in \eqref{eq:marg_c}.
In Section \ref{sec:as_var} we first show that, after accounting for computational cost, $\PNR$ cannot perform worse than $\PMG$, in terms of asymptotic variance, by more than a factor of four.
This is done by combining some variations of classical results on asymptotic variances of lifted samplers  
with a Peskun comparison between $\PMG$ and an auxiliary reversible kernel $\Prev$. 
We then provide analogous results for the conditional sampler by showing that it is dominated by the marginal one (Section \ref{sec:conditional}). 
In Section \ref{sec:optimal_scaling} we continue the comparison between $\PMG$ and $\PNR$, showing that in the prior case the latter improves on the former by an order of magnitude, i.e.\ reducing the convergence time from $\mathcal{O}(n^2)$ to $\mathcal{O}(n)$. This is done through a scaling limit analysis, which proves that, after rescaling time by a factor of $n^2$, the evolution of the frequencies $n_k(c)$ evolving according to $\PMG$ converges to a Wright-Fisher process \citep{ethier1976class}, which is a diffusion on the probability simplex. 
In contrast, when the chain evolves according to $\PNR$, we obtain convergence to a non-singular piecewise deterministic Markov process \citep{davis1984piecewise} after rescaling time by only a factor of $n$. 
Section \ref{sec:variant} discusses a variant of $\PNR$ and, finally, Section \ref{sec:simulations} provides various numerical simulations, where $\PNR$ is shown to significantly outperform $\PMG$ in sampling from mixture models posterior distributions, 
both in low and high-dimensional cases. 
The Supplementary Material contains additional simulation studies, together with the proofs of all the theoretical results. R code to replicate all the numerical experiments can be found at \url{https://github.com/gzanella/NonReversible_FiniteMixtures}.

\section{A non-reversible marginal sampler}\label{sec:non_reversible}

\subsection{A reversible sampler that operates over pairs of clusters}\label{sec:def_reversible}
Let $\pi(c)$ be an arbitrary probability distribution on $[K]^n$, such as \eqref{eq:marg_c} in the context of finite mixtures, and denote the set of ordered pairs in $[K]$, with cardinality $K(K-1)/2$, as 
\begin{align}\label{eq:K_def}
\mathcal{K} = \left\{(k, k') \in [K]^2 \, : \, k < k' \right\}\,.
\end{align} 
As an intermediate step towards defining $\PNR$, we first consider a $\pi$-reversible Markov kernel on $[K]^n$ defined as
\begin{align}\label{eq:P_rev}
\Prev(c,c')&= \sum_{(k, k')\in \mathcal{K}} p_c(k, k')P_{k, k'}(c,c')&c,c'\in[K]^n\,,
\end{align}
where 
\begin{align}\label{eq:prob_components}
p_c(k, k')&= \frac{n_k(c) + n_{k'}(c)}{(K-1)n},
&
(k, k') \in \mathcal{K}
\end{align}
is a probability distribution on $\mathcal{K}$ for each $c\in[K]^n$, i.e.\ $\sum_{(k, k') \in \mathcal{K}}p_c(k, k')=1$, and $P_{k, k'}$ is the $\pi$-reversible Metropolis-Hastings (MH) kernel that proposes to move a uniformly drawn point from cluster $k$ to cluster $k'$ or viceversa with probability $1/2$.
The resulting kernel $\Prev$ is defined in Algorithm \ref{alg:Prev} where, for ease of notation, for every $c \in [K]^n$, $i\in[n]$ and $k\in[K]$ we write $(c_{-i}, k) \in [K]^n$ for the vector $c$ with the $i$-th entry $c_i$ replaced by $k$.

\begin{algorithm}[htbp]
\begin{algorithmic}
\State Sample $(k, k') \sim p_c$ as in Algorithm \ref{alg:sampling_clusters}.
    \State Set 
    $(k_-,k_+)=(k,k')$ with probability $1/2$ and $(k_-,k_+)=(k',k)$ otherwise
    \State 
    If $n_{k_-}(c) = 0$ set $c'=c$. \State If $n_{k_-}(c)> 0$ sample $i \sim \text{Unif}\left(\left\{i' \in [n]  \, : \, c_{i'} = k_-\right\}\right)$ and set $c' = (c_{-i}, k_{+})$ with probability $\min\{1, r\left(c,i,k_-, k_{+} \right)\}$ and $c' = c$ otherwise, where
\begin{equation}\label{eq:acc_ratio}
r(c,i,k_-, k_{+}) = \left(\frac{n_{k_-}(c)}{n_{k_+}(c) + 1}\right)\frac{\pi(c_i = k_+ \mid c_{-i})}{\pi(c_i = k_- \mid c_{-i})},
\end{equation}
\end{algorithmic}
\caption{
Generating a sample $c'\sim \Prev(c,\cdot)$  
\label{alg:Prev}}
\end{algorithm}

Despite the fact that $\Prev$ is a mixture with weights $p_c$ depending on the current state $c$, invariance with respect to $\pi$ is preserved, as proven in the next lemma. The key point is that $p_c(k, k')$ only depends on $n_{k}(c) + n_{k'}(c)$ and $P_{k, k'}$ leaves the latter quantity unchanged.
\begin{lemma}\label{lemma:erg_Prev}
The Markov kernel $\Prev$ defined in Algorithm \ref{alg:Prev} is $\pi$-reversible. Moreover, if $\pi(c) > 0$ for every $c \in [K]^n$ it is also irreducible, aperiodic and uniformly ergodic.
\end{lemma}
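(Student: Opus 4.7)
The plan is to verify the four properties in turn, with the main observation being that the mixing weights $p_c(k,k')$ of $\Prev$ depend on $c$ only through $n_k(c)+n_{k'}(c)$, a quantity that is left invariant by the Metropolis--Hastings kernel $P_{k,k'}$.

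First, for $\pi$-reversibility, I would start from the identity
\[
\pi(c)\Prev(c,c') = \sum_{(k,k')\in\mathcal{K}} p_c(k,k')\,\pi(c)\,P_{k,k'}(c,c'),
\]
and use that each $P_{k,k'}$ is a $\pi$-reversible MH kernel to rewrite $\pi(c)P_{k,k'}(c,c') = \pi(c')P_{k,k'}(c',c)$. It then remains to argue that $p_c(k,k') = p_{c'}(k,k')$ whenever $P_{k,k'}(c,c')>0$. When $c\neq c'$, the latter condition forces $c$ and $c'$ to agree outside a single coordinate $i$ with $\{c_i,c_i'\}\subseteq\{k,k'\}$, so $n_k(c')+n_{k'}(c') = n_k(c)+n_{k'}(c)$ and hence $p_c(k,k')=p_{c'}(k,k')$; when $c=c'$ the identity is trivial. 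Factoring out the common weight then yields $\pi(c)\Prev(c,c') = \pi(c')\Prev(c',c)$.

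Second, for irreducibility under the assumption $\pi>0$, I would show that from any state $c$ the kernel $\Prev$ can change a single coordinate $c_i$ to an arbitrary value $k_+\in[K]\setminus\{c_i\}$ with positive probability. The ordered pair $(k,k')$ containing $\{c_i,k_+\}$ is selected with probability $p_c(k,k')=(n_{c_i}(c)+n_{k_+}(c))/((K-1)n)>0$ since $n_{c_i}(c)\geq 1$; the direction step picks $k_-=c_i$ with probability $1/2$; the index $i$ is sampled with probability $1/n_{c_i}(c)>0$; and the acceptance ratio in \eqref{eq:acc_ratio} is strictly positive since $\pi>0$ everywhere. Chaining a finite sequence of such single-coordinate changes connects any two configurations in $[K]^n$. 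For aperiodicity, it suffices to exhibit one state with a self-loop: pick $c\in[K]^n$ and any pair $(k,k')\in\mathcal{K}$ such that $n_{k'}(c)=0$ while $n_k(c)>0$ (which exists unless $K=1$, in which case the claim is vacuous); with positive probability $\Prev$ selects this pair and the direction making $k_-=k'$, in which case the algorithm sets $c'=c$, giving $\Prev(c,c)>0$.

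Finally, since $[K]^n$ is finite, an irreducible and aperiodic Markov kernel on $[K]^n$ is automatically uniformly ergodic, as recalled in \citet[Theorem 4.9]{levin2017markov}. The only technical point of the argument is the reversibility step, where one must be careful that the state-dependent weights $p_c$ do preserve detailed balance despite not factoring out of the sum; this is exactly what the invariance of $n_k+n_{k'}$ under $P_{k,k'}$ ensures, and it is the sole reason for the particular functional form chosen for $p_c$ in \eqref{eq:prob_components}.
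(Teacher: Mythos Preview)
Your proposal is correct and follows essentially the same approach as the paper: the key observation for reversibility---that $p_c(k,k')$ depends on $c$ only through $n_k(c)+n_{k'}(c)$, which $P_{k,k'}$ leaves invariant---is identical, and the irreducibility, aperiodicity, and uniform ergodicity arguments match as well (the paper simply packages the reversibility step into a general lemma on state-dependent mixtures, while you carry it out directly). Your aperiodicity justification via an explicit self-loop is slightly more detailed than the paper's, which merely asserts it is ``easy to see''.
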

Sampling from $p_c$ can be performed efficiently using Algorithm \ref{alg:sampling_clusters}, where one cluster is selected with probability proportional to its size and the other uniformly at random from the remaining ones. Validity is proved in the next lemma.
\begin{lemma}\label{lm:sampling_clusters}
For each $c \in [K]^n$, Algorithm \ref{alg:sampling_clusters} produces a sample $(k, k')$ from the probability distribution $p_c$ defined in \eqref{eq:prob_components}.
\end{lemma}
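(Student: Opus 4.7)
The plan is to directly compute the probability that Algorithm \ref{alg:sampling_clusters} outputs a given pair $(k,k')\in\mathcal{K}$ and verify it equals $p_c(k,k')$. Based on the verbal description preceding the statement, the algorithm draws a first label $\ell_1\in[K]$ with probability proportional to its size, i.e.\ $\P(\ell_1=\ell)=n_\ell(c)/n$, then draws $\ell_2$ uniformly from $[K]\setminus\{\ell_1\}$, and finally outputs the ordered version $(k,k')=(\min\{\ell_1,\ell_2\},\max\{\ell_1,\ell_2\})$.

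Given this, I would fix $(k,k')\in\mathcal{K}$ and decompose the event $\{(\min,\max)=(k,k')\}$ according to which of the two labels is drawn first, obtaining
\[
\P\bigl((\min\{\ell_1,\ell_2\},\max\{\ell_1,\ell_2\})=(k,k')\bigr)
=\frac{n_k(c)}{n}\cdot\frac{1}{K-1}+\frac{n_{k'}(c)}{n}\cdot\frac{1}{K-1}
=\frac{n_k(c)+n_{k'}(c)}{(K-1)n},
\]
which matches \eqref{eq:prob_components}. The only subtlety is that the first step is legitimate even when some $n_\ell(c)=0$: such labels simply receive probability zero at stage one, but can still be drawn at stage two, and that is precisely the reason why $p_c(k,k')$ only requires $n_k(c)+n_{k'}(c)>0$, not both sizes being positive.

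As a sanity check (and to confirm $p_c$ is a genuine probability distribution on $\mathcal{K}$, which is implicit in the statement), I would verify normalization by noting that each cluster index $\ell\in[K]$ appears in exactly $K-1$ pairs of $\mathcal{K}$, so
\[
\sum_{(k,k')\in\mathcal{K}} \bigl(n_k(c)+n_{k'}(c)\bigr)=(K-1)\sum_{\ell=1}^K n_\ell(c)=(K-1)n,
\]
giving $\sum_{(k,k')\in\mathcal{K}}p_c(k,k')=1$. There is no real obstacle here; the content of the lemma is the observation that sampling one coordinate size-biased and the other uniformly from the complement produces the symmetric size-biased pair distribution \eqref{eq:prob_components}, at cost $\mathcal{O}(K)$ rather than the $\mathcal{O}(K^2)$ required to enumerate $\mathcal{K}$.
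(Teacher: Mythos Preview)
Your proof is correct and follows essentially the same approach as the paper: fix a pair $(k,k')\in\mathcal{K}$, split into the two cases according to which label is drawn first, compute each probability as $n_k(c)/[(K-1)n]$ and $n_{k'}(c)/[(K-1)n]$, and sum. The normalization check you add is not in the paper's proof but is a harmless extra.
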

\begin{algorithm}[htbp]
\begin{algorithmic}
\State Sample $k_1$ from $\{1, \dots, K\}$ with probabilities $(n_{1}(c)/n,\dots,n_K(c)/n)$
\State Sample $k_2$ uniformly at random from $\{1,\dots,K\}\backslash \{k_1\}$
\State Set $k = \min\{k_1, k_2\}$ and $k' = \max\{k_1, k_2\}$
\end{algorithmic}
\caption{Sampling $(k, k') \sim p_c$ defined in \eqref{eq:prob_components}
\label{alg:sampling_clusters}}
\end{algorithm}

\paragraph{Comparison between $\PMG$ and $\Prev$}
Both $\PMG$ and $\Prev$ can be interpreted as reversible Metropolis-Hastings schemes that propose single-point moves.
Specifically, $\PMG$ and $\Prev$ 
propose moving datapoint $i$ to cluster $k$ with, respectively, probabilities
\begin{align*}
a_{\text{MG}}(i, k) 
&=
\frac{\pi(c_i = k \mid c_{-i})}{n}
\quad\hbox{and}\quad
a_{\text{R}}(i, k)= \frac{n_{c_i}(c) + n_k(c)}{n_{c_i}(c)}
    \frac{
    \mathbbm{1}(c_i \neq k)
    }{2(K-1)n}\,,
\end{align*}
for $(i,k)\in[n]\times [K]$.
For $\PMG$ the Metropolis-Hastings acceptance probability is always one, while for $\Prev$ it is not.
It is interesting to note that
\[
a_{\text{R}}(i, k) \geq \frac{1}{2(K-1)n} \geq \frac{1}{2(K-1)}a_{\text{MG}}(i, k)\,,
\]
meaning that the proposal probabilities of $\Prev$ can be at most $2(K-1)$ times smaller than the ones of $\PMG$. 
This connection will help providing formal comparison results between $\Prev$ and $\PMG$ in Section \ref{sec:as_var} 
(see Theorem \ref{thm:asymp_variances} and Remark \ref{rmk:rev_comp} for more details). 
We postpone details on these comparisons to Section \ref{sec:as_var} and now focus on how to leverage the mixture representation of $\Prev$ in \eqref{eq:P_rev} to build effective non-reversible algorithms targeting $\pi(c)$. 

\paragraph{Cost per iteration of $\PMG$ and $\Prev$} For both $\PMG$ and $\Prev$ the cost per iteration is usually dominated by the computation of the conditional distribution $\pi(c_i = k \mid c_{-i})$ in \eqref{eq:predictive}, which will depend on the specific combination of kernel $f_\theta$ and prior $p_0$. Indeed, Algorithm \ref{alg:PMG} requires in addition only to sample from a uniform distribution on a discrete set (which has a fixed cost). Similar considerations hold for Algorithm \ref{alg:Prev}, since sampling from $p_c$ with Algorithm \ref{alg:sampling_clusters} entails again only sampling from two uniform distributions.
Thus, we measure cost per iteration of these samplers in terms of the number of conditional distribution evaluations, which is $K$ for $\PMG$ and $2$ for $\Prev$: therefore the ratio of costs of $\PMG$ versus $\Prev$ is $K/2$. The same will hold for $\PNR$ in Algorithm \ref{alg:PNR} below, which requires essentially the same computations of Algorithm \ref{alg:Prev}.

\subsection{The proposed non-reversible sampler}\label{sec:PNR_def}
Consider the extended target distribution
\begin{align}\label{eq:aug_invariant}
\tilde{\pi}(c,v )&:=  \pi(c)\left(\frac{1}{2}\right)^{K(K-1)/2}  &c\in[K]^n,\;
v = (v_{k, k'})_{(k, k')\in \mathcal{K}}\in\{-1, +1\}^{K(K-1)/2}
\end{align}
and the $\tilde{\pi}$-invariant Markov kernel $\PNR$ 
defined as
\begin{align}\label{eq:mixture_representation}
\PNR((c,v),(c',v'))&= \sum_{(k, k')\in \mathcal{K}} p_c(k, k')\tilde{P}_{k, k'}((c,v),(c',v'))\,,
\end{align}
with $p_c$ defined as in \eqref{eq:prob_components}
and 
$\tilde{P}_{k, k'}$ being the $\tilde{\pi}$-invariant kernel defined in Algorithm \ref{alg:Pkk}.
The kernel $\tilde{P}_{k, k'}$ operates on the $k$-th and $k'$-th clusters, and it is a lifted counter-part of $P_{k, k'}$, with associated velocity component $v_{k, k'}$.
In this construction, the velocity vector $\bv$ is $K(K-1)/2$ dimensional and only one of its component is actively used to move $c$ at each iteration. 
\begin{algorithm}[htbp]
\begin{algorithmic}
\State With probability $\xi/n$ flip $v_{k, k'}$ to $-v_{k, k'}$ 
\State Set $(k_-,k_+)=(k,k')$ if $v_{k, k'} = +1$, and $(k_-,k_+)=(k',k)$ if $v_{k, k'} = -1$
\State If $n_{k_{-}}(c) = 0$, set $(c',\bv') = (c,\bv^{(\textit{flip})})$, with $\bv^{(\textit{flip})}=(v_{-(k, k')},-v_{k, k'})$
\State If $n_{k_{-}}(c) > 0$, sample $i \sim \text{Unif}\left(\{i' \in [n]  \, \mid \, c_{i'} = k_{-}\}\right)$ and set $(c',\bv') = ((c_{-i}, k_+),\bv)$ with probability $\min\{1, r(c,i,k_-, k_{+})\}$ and otherwise  $(c',\bv') = (c,\bv^{(\textit{flip})})$, with $r(c,i,k_-, k_{+})\}$ defined in \eqref{eq:acc_ratio}
\State With probability $\xi/n$ flip $v'_{k, k'}$ to $-v'_{k, k'}$ 
\end{algorithmic}
\caption{Generating a sample $\left(c', \bv'\right)\sim \tilde{P}_{k, k'}((c, \bv),\cdot)$
\label{alg:Pkk}}
\end{algorithm}
The pseudo-code associated to $\PNR$ is given in Algorithm \ref{alg:PNR}.
\begin{algorithm}[htbp]
\begin{algorithmic}
\State Sample $(k, k') \sim p_c$ as in Algorithm \ref{alg:sampling_clusters}.
\State Sample $(c',v')\sim \tilde{P}_{k,k'}((c,v),\cdot)$ as in Algorithm \ref{alg:Pkk}.
\end{algorithmic}
\caption{One step of the non-reversible kernel $(c',v')\sim\PNR((c,v),\cdot)$
\label{alg:PNR}}
\end{algorithm}

The algorithm depends on a parameter $\xi \geq 0$, which can be interpreted as the refresh rate at which directions are flipped. While useful to take $\xi>0$ for technical reasons (i.e.\ to ensure aperiodicity), we do not expect the value of $\xi$ to have significant impacts in practice provided it is set to a small value, and in the simulations we always set $\xi = 1/2$. 

The next lemma shows that $\PNR$ is a valid $\tilde{\pi}$-invariant kernel.
\begin{lemma}\label{lemma:ergodicity}
For any probability distribution $\pi$ on $[K]^n$, the Markov kernel $\PNR$ defined in Algorithm \ref{alg:PNR} is $\tilde{\pi}$-invariant, with $\tilde{\pi}$ as in \eqref{eq:aug_invariant}.
Moreover, if $\xi > 0$ and $\pi(c) > 0$ for every $c \in [K]^n$, then $\PNR$ is irreducible, aperiodic and uniformly ergodic.
\end{lemma}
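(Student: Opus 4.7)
The plan is to split the proof into the $\tilde\pi$-invariance part and the ergodicity part, isolating the per-pair component kernels $\tilde P_{k,k'}$ and then reassembling them via the mixture representation \eqref{eq:mixture_representation}.

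For invariance, the key structural observation is that $p_c(k,k')$ depends on $c$ only through $n_k(c)+n_{k'}(c)$, which is left unchanged by $\tilde P_{k,k'}$ since the latter only moves a single point between the clusters labelled $k$ and $k'$. Consequently, whenever $\tilde P_{k,k'}((c,v),(c',v'))>0$ one has $p_c(k,k')=p_{c'}(k,k')$, so the weight can be pulled out of the $(c,v)$-sum in the invariance equation exactly as in the proof of Lemma \ref{lemma:erg_Prev}. This reduces the problem to $\tilde\pi$-invariance of each individual $\tilde P_{k,k'}$, after which $\tilde\pi$-invariance of $\PNR$ follows by summing $p_{c'}(k,k')\,\tilde\pi(c',v')$ over $(k,k')\in\mathcal{K}$ and using that these weights sum to one.

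To show $\tilde\pi$-invariance of $\tilde P_{k,k'}$, I would decompose it as $R_{k,k'}\,M_{k,k'}\,R_{k,k'}$, where $R_{k,k'}$ is the refresh that flips $v_{k,k'}$ with probability $\xi/n$, and $M_{k,k'}$ is the lifted Metropolis--Hastings update in the middle of Algorithm \ref{alg:Pkk}. The refreshes trivially preserve $\tilde\pi$ since $\tilde\pi$ is uniform in $v$, so the task reduces to verifying the skew-detailed-balance relation
\[
\tilde\pi(c,v)\,M_{k,k'}((c,v),(c',v')) \;=\; \tilde\pi(c',\sigma(v'))\,M_{k,k'}((c',\sigma(v')),(c,\sigma(v))),
\]
with $\sigma$ denoting the involution that flips the $v_{k,k'}$ component; summing over $(c,v)$ then yields invariance, using $\tilde\pi(c',\sigma(v'))=\tilde\pi(c',v')$. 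The acceptance ratio $r(c,i,k_-,k_+)$ in \eqref{eq:acc_ratio} is precisely designed to match the forward and reverse uniform-on-cluster proposal probabilities $1/n_{k_-}(c)$ and $1/(n_{k_+}(c)+1)$, so the relation holds on accepted moves by a direct computation. I expect the main technical care is needed in combining this with the standard rejection case (which flips the direction) and the degenerate case $n_{k_-}(c)=0$ (which also flips the direction but without an actual move), so that the total transition mass is accounted for correctly on both sides.

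For ergodicity, because the state space $[K]^n\times\{-1,+1\}^{K(K-1)/2}$ is finite, it suffices to verify irreducibility and aperiodicity, after which uniform ergodicity follows from classical results. Irreducibility holds because any $c'\in[K]^n$ can be reached from any $c$ through a finite sequence of single-point cluster-to-cluster swaps: each such swap has positive probability, since $p_c(k,k')>0$ whenever $n_k(c)+n_{k'}(c)>0$ (so a pair involving any non-empty source cluster is always selectable) and $\pi(c)>0$ makes all Metropolis--Hastings acceptance probabilities strictly positive, while the direction can always be aligned with the desired move thanks to the refresh of $v_{k,k'}$ with positive probability $\xi/n$; an analogous argument drives the velocity variables to any target configuration. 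For aperiodicity, taking any pair $(k,k')$ and state $(c,v)$ with $n_{k_-}(c)=0$ (such a state always exists as soon as $K\geq 2$, e.g.\ a monochromatic $c$), the path ``no initial refresh, deterministic degenerate output $(c,v^{(\text{flip})})$, final refresh restores $v_{k,k'}$'' gives $\PNR((c,v),(c,v))\geq p_c(k,k')(1-\xi/n)(\xi/n)>0$, yielding a self-loop at $(c,v)$ and hence aperiodicity (which propagates to all states by irreducibility).
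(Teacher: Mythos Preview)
Your proposal is correct and follows essentially the same route as the paper. The paper packages the invariance argument into a general lemma on state-dependent mixtures of lifted kernels (Lemma~\ref{lemma:lifting}, part~(b)) and then simply invokes it, whereas you carry out the same two ingredients directly: the observation that $p_c(k,k')$ is unchanged by $\tilde P_{k,k'}$ (so the weight can be moved across the invariance sum), and $\tilde\pi$-invariance of each $\tilde P_{k,k'}$ via the refresh--lifted-MH--refresh decomposition; your skew-detailed-balance check for $M_{k,k'}$ is equivalent to the paper's direct summation in the proof of Lemma~\ref{lemma:lifting}. Your irreducibility argument matches the paper's, and your aperiodicity argument via an explicit self-loop at a state with $n_{k_-}(c)=0$ is a valid (and more detailed) version of what the paper records as ``immediate from $\xi>0$''.
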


\subsubsection{Specificities of mixture models that make $\PNR$ work well}\label{sec:specificities}

We now discuss at an informal level some of the specificities of the posterior distribution $\pi(c)$ arising from mixture models that make $\PNR$ work well.

\paragraph{Lack of identifiability and concentration}
An important statistical feature of mixture models is that cluster labels are in general not identifiable as $n\to\infty$, meaning that even when $n$ is large there is non-vanishing uncertainty on the value of $c_i$. 
In other words, while the posterior distribution of $\bm{w}$ and $\bm{\theta}$ 
concentrates as $n\to\infty$, the one of $c$ does not (not even at the level of marginals, meaning that, for every fixed $i$, $\pi(c_i)$ does not converge to a point mass as $n\to\infty$);
see e.g.\ \cite{nguyen2013convergence, guha2021posterior} and references therein for asymptotic results for mixture model posteriors.
Intuitively, lack of concentration occurs because the information about each individual $c_i$ does not grow with $n$ (since each $c_i$ is associated to a single datapoint). This also tends to make posteriors flatter, i.e.\ moving one observation from one cluster to another usually leads to a small change in the target distribution. By contrast, many models with discrete parameters lead to posteriors that become increasingly more concentrated and rough as $n\to\infty$, which has a major impacts on the convergence properties of MCMC algorithms targeting them, including making standard MCMC converge faster (see e.g.\ \citealp{yang2016computational, zhou2022dimension, zhou2023complexity}) and lifting techniques less effective (as already discussed in Section \ref{sec:lit}).

\paragraph{Cancellations in the acceptance ratio}
For $\pi(c)$ as in \eqref{eq:marg_c}, the MH ratio $r(c,i,k_-, k_{+})$ reads
\begin{equation}\label{eq:acc_ratio_mixtures}
r(c,i,k_-, k_{+}) = \left(\frac{\alpha_{k_+} + n_{k_+}(c)}{n_{k_+}(c) + 1}\right)\left(\frac{n_{k_-}(c)}{\alpha_{k_-} + n_{k_-}(c) - 1}\right)\frac{p(Y_i \mid Y_{-i}, c_{-i}, c_i = k_+)}{p(Y_i \mid Y_{-i}, c_{-i}, c_i = k_-)}.
\end{equation}
Interestingly, the proposal probability almost matches the term $\alpha_k + n_k(c)$ arising from the prior. 
In particular, by writing
\[
\left(\frac{\alpha_{k_+} + n_{k_+}(c)}{n_{k_+}(c) + 1}\right)\left(\frac{n_{k_-}(c)}{\alpha_{k_-} + n_{k_-}(c) - 1}\right) = \left(1+\frac{\alpha_{k_+}-1}{n_{k_+}(c)+1} \right)\left(1+\frac{\alpha_{k_-}-1}{n_{k_-}(c)} \right)^{-1},
\]
we see that the first part of \eqref{eq:acc_ratio_mixtures} goes to $1$ as $n_{k_+}(c)$ and $n_{k_-}(c)$ increase, for every fixed value of $\boldsymbol{\alpha}$. Notice that with $\alpha_k = 1$ for every $k$ this ratio is always equal to $1$. 
This cancellation contributes to make \eqref{eq:acc_ratio_mixtures} closer to $1$ and thus to make excursions of $\PNR$ longer.

\paragraph{Flatness in the tails and behavior out of stationarity}
Interestingly, also the ratio of predictive distributions in \eqref{eq:acc_ratio_mixtures} tends to get close to $1$ for partitions that do not correspond to well-identified and separate clusters, meaning that mixture model posteriors $\pi(c)$ become increasingly flatter in the tails.
To illustrate this, consider the common situation when labels are initialized uniformly at random, i.e.\ $c_i^{(0)} \simiid \text{Unif}([K])$. 
In this situations, by construction, clusters are similar to each other under $c^{(0)}$, resulting in ratios of predictive distributions  
that are close to $1$ (and converge to $1$ as $n\to\infty$). As a consequence, the non-reversible chain will proceed almost deterministically without flipping directions until clusters start to differentiate significantly.
This is indeed the behavior observed in the middle panel of Figure \ref{fig:example}, as well as in Section \ref{sec:simulation_normal} and \ref{sec:poisson_app} of the Supplementary Material with different values of $K$ and likelihood kernels.
More generally, in mixture model contexts, non-reversibility is particularly helpful during the transient phase, where the algorithm has not yet reached the high-probability region under $\pi$ and has to explore large flat regions of the state space\footnote{This is, again, in contrast with typical Bayesian discrete models that lead to posteriors with large ``discrete gradients'' in the tails providing strong enough signal for reversible schemes to converge fast in the first part of the transient phase \citep{yang2016computational,zhou2022dimension, zhou2023complexity}.}.

\paragraph{Overlapping components and the overfitted case}
Another situation that makes ratios of predictive distributions close to $1$ is when the actual clusters in the data have a considerable overlap. 
An extreme case of this situation is 
when the true number of components $K^*$ (assuming data were actually generated by a well-specified mixture model) is strictly smaller than $K$, which amounts to assuming that a plausible upper bound on $K^*$ is known and $K$ is set to such value (instead of the less plausible scenario where $K^*$ itself is known). This is often called the overfitted case, see e.g.\ \citet{rousseau2011asymptotic} for a theoretical exploration of its asymptotic properties, and it is a common situation since in many context (e.g.\ density estimation) it is preferable to overshoot rather than undershoot the value of $K$ and thus practitioners often set $K$ to some conservative, moderately large value. See Section \ref{sec:overfitting} for more discussion on the overfitted case and empirical evidence that in this setting the improvement of $\PNR$ over the latter is particularly apparent.

\section{Asymptotic variance comparison results}\label{sec:as_var}
In this section we compare the various kernels discussed above in terms of asymptotic variances. Among other results we show that, after accounting for computational cost, $\PNR$ cannot be worse than $\PMG$ by more than a factor of $4$. 
Given a Markov chain $\{X_t\}_t$ with a $\pi$-invariant Markov kernel $P$ started in stationarity, the asymptotic variance of the associated MCMC estimator is given by
\[
\Var(g, P) = \lim_{T \to \infty} T\Var\left(\frac{1}{T}\sum_{t = 1}^Tg(X_t) \right) = \Var_\pi(g)+2\sum_{t = 1}^\infty \Cov \left(g(X_0), g(X_t) \right),
\]
for every function $g$ such that $\Var_\pi(g)$ is well-defined.

\subsection{Ordering of reversible and non-reversible schemes}

The next theorem provides ordering results for the asymptotic variances of $\PMG$, $\Prev$ and $\PNR$.
Technically speaking these kernels are not directly comparable, since $\PMG$ and $\Prev$ are defined 
on $[K]^n$ while $\PNR$ is defined on $[K]^n \times \{-1, +1\}^{K(K-1)/2}$. 
Nonetheless, we are only interested in estimating expectations of test functions that depend on $c$ alone, so that we can restrict attention to those, as usually done in non-reversible MCMC literature \citep{andrieu2021peskun, gagnon2024theoretical}. 
Given a test function $g \,: \, [K]^n \, \to \R$, with a slight abuse of notation, we also use $g$ in $\Var(g, \PNR)$ to denote the function defined as $g(c,v)=g(c)$ for all $(c,v)\in[K]^n \times \{-1, +1\}^{K(K-1)/2}$.
\begin{theorem}\label{thm:asymp_variances}
Let $\pi$ be a probability distribution on $[K]^n$ and $g \, : [K]^n \, \to \R$. Then
\begin{equation}\label{eq:asymp_variances}
\Var(g, \PNR) \leq \Var(g, \Prev) \leq c(K)\Var(g, \PMG) + \left[c(K)-1\right]\Var_\pi(g),
\end{equation}
where $c(K) = 2(K-1)$ 
and $\Var_\pi(g)$ denotes $\Var(g(X_0))$ for $X_0 \sim \pi$.
\end{theorem}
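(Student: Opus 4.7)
The plan is to split the theorem into its two inequalities. The left inequality $\Var(g,\PNR)\le \Var(g,\Prev)$ is a lifting comparison between a non-reversible kernel and its reversible counterpart, while the right inequality $\Var(g,\Prev)\le c(K)\Var(g,\PMG)+(c(K)-1)\Var_\pi(g)$ is a Peskun-type comparison between two reversible kernels that reduces to an off-diagonal pointwise bound plus a lazy-chain spectral identity. I would treat the right inequality first, since it is more concrete and completely self-contained.

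For the right inequality, the key step is to establish the off-diagonal Peskun ordering
\begin{equation*}
\Prev(c,c')\;\ge\; \tfrac{1}{c(K)}\,\PMG(c,c'), \qquad c\neq c'\in[K]^n.
\end{equation*}
Since both kernels only connect $c$ to configurations $c'=(c_{-i},k)$ with $k\neq c_i$, it suffices to treat this case. By reversibility of $\Prev$ (Lemma \ref{lemma:erg_Prev}) and the form of the proposal $Q$ implicit in Algorithm \ref{alg:Prev}, detailed balance gives $\pi(c)\Prev(c,c')=\tfrac{n_{c_i}(c)+n_k(c)}{2(K-1)n}\min\{\pi(c)/n_{c_i}(c),\,\pi(c')/(n_k(c)+1)\}$, while $\pi(c)\PMG(c,c')=\pi(c)\pi(c_i=k\mid c_{-i})/n=\pi(c')\pi(c_i\mid c_{-i})/n$. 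Matching each case of the minimum with the corresponding form of $\pi(c)\PMG(c,c')$ and using $n_{c_i}(c)\ge 1$, so that both $(n_{c_i}(c)+n_k(c))/n_{c_i}(c)$ and $(n_{c_i}(c)+n_k(c))/(n_k(c)+1)$ are at least one, together with the trivial bounds $\pi(c_i\mid c_{-i}),\pi(c_i=k\mid c_{-i})\le 1$, yields the desired inequality with constant $1/(2(K-1))=1/c(K)$ in each case.

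Given this off-diagonal ordering, I introduce the $\pi$-reversible lazy chain $\tilde P=\tfrac{1}{c(K)}\PMG+(1-\tfrac{1}{c(K)})I$, which agrees with $\tfrac{1}{c(K)}\PMG$ off the diagonal and is therefore dominated by $\Prev$ there. Peskun's theorem for reversible kernels gives $\Var(g,\Prev)\le \Var(g,\tilde P)$. To finish, I use the spectral identity
\begin{equation*}
\Var(g,\alpha P+(1-\alpha)I)\;=\;\tfrac{1}{\alpha}\Var(g,P)+\tfrac{1-\alpha}{\alpha}\Var_\pi(g),
\end{equation*}
valid for any $\pi$-reversible $P$ and $\alpha\in(0,1]$, which follows from the algebraic decomposition $\tfrac{1+\tilde\lambda}{1-\tilde\lambda}=\tfrac{1}{\alpha}\tfrac{1+\lambda}{1-\lambda}+\tfrac{1-\alpha}{\alpha}$ for $\tilde\lambda=\alpha\lambda+1-\alpha$, integrated against the spectral measure of $g-\pi(g)$ on $L^2_0(\pi)$. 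Setting $\alpha=1/c(K)$ yields $\Var(g,\tilde P)=c(K)\Var(g,\PMG)+(c(K)-1)\Var_\pi(g)$, which closes the right inequality.

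For the left inequality, I view $\PNR$ as the lifted non-reversible counterpart of $\Prev$: the vector $v=(v_{k,k'})$ encodes a persistence direction for each pair of clusters, and each $\tilde P_{k,k'}$ in Algorithm \ref{alg:Pkk} is a lifted Metropolis--Hastings kernel in the sense of Algorithm \ref{alg:lifted}. Marginalising $v_{k,k'}$ over its stationary uniform distribution makes the direction $(k_-,k_+)$ uniform and reduces $\tilde P_{k,k'}$ to $P_{k,k'}$, so the $v$-marginal of $\PNR$ equals $\Prev$ by \eqref{eq:mixture_representation} and \eqref{eq:P_rev}. Combined with the skew-detailed-balance identity built into the lifted construction, this lets me invoke the general asymptotic variance comparison for lifted non-reversible MCMC (\citealp{andrieu2021peskun,gagnon2024theoretical}) to conclude. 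The main obstacle is precisely this last step: $\PNR$ carries a $K(K-1)/2$-dimensional velocity with only the coordinate $v_{k,k'}$ associated to the pair sampled from $p_c$ being updated at each iteration, which is a mixture-of-liftings architecture lying slightly outside the classical single-velocity formulation. The plan is therefore to either embed $\PNR$ within a single-velocity lifted chain on a suitably augmented state space, or verify the relevant skew-balance identity directly on the joint $(c,v)$ kernel so that the standard non-reversible Peskun--Tierney machinery applies verbatim.
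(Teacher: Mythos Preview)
Your treatment of the right inequality is correct and essentially matches the paper's: both establish the off-diagonal Peskun bound $\Prev(c,c')\ge \tfrac{1}{2(K-1)}\PMG(c,c')$ and then convert it into an asymptotic-variance inequality. The paper cites \citet[Theorem~2]{zanella2020informed} for the conversion, which is precisely the lazy-chain spectral identity you spell out, so on this half you are simply unpacking that reference.

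For the left inequality, your diagnosis of the difficulty is accurate, but the argument is not closed. The claim ``the $v$-marginal of $\PNR$ equals $\Prev$'' is only a one-step statement: averaging the initial $v$ over its uniform stationary law does reproduce $\Prev$ as a one-step transition on $c$, but the $c$-marginal of the $\PNR$ chain is not Markov (the velocity carries memory), so this alone does not yield an asymptotic-variance comparison. The paper avoids both of your proposed workarounds (single-velocity embedding, or fitting into an off-the-shelf skew-balance framework) and instead works directly with the additive symmetrisation $\bar K=\tfrac12(\PNR+\PNR^{*})$, where $\PNR^{*}$ is the $\tilde\pi$-adjoint. A short computation (Lemma~\ref{lemma:lifting}(c)) shows that for each pair $d=(k,k')$ the adjoint $K_{\text{lift},d}^{*}$ has the same form as $K_{\text{lift},d}$ but with the proposal driven by $-v_d$ instead of $v_d$; averaging the two therefore gives $K_d(c,c')\,Q_{d,c,c'}(v,v')$, whose action on functions of $c$ alone is exactly $K_d$. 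Summing against $p_c(d)$ and iterating yields $\bar K^{t}g=\Prev^{t}\tilde g$ for every $t$, hence $\Var(g,\bar K)=\Var(\tilde g,\Prev)$. The inequality $\Var(g,\PNR)\le\Var(g,\bar K)$ then follows from the general fact that adding a $\tilde\pi$-antisymmetric part never increases asymptotic variance \citep{chen2013accelerating}. This handles the $K(K-1)/2$-dimensional velocity in one stroke, and is the concrete step your proposal is missing.
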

Since in most realistic applications $\Var(g, \PMG)$ is much larger than $\Var_\pi(g)$, the inequality in \eqref{eq:asymp_variances} implies that $\PNR$ can be worse than $\PMG$, in terms of variance of the associated estimators, only by a factor of $2(K-1)$. Moreover, since the cost per iteration of $\PMG$ is $K/2$ times the one of $\PNR$ (see Section \ref{sec:def_reversible}) the overall worsening is at most by a factor of $4$.

\begin{remark}\label{rmk:rev_comp}
The proof of the second inequality in \eqref{eq:asymp_variances}
relies on showing that $\Prev(c, c') \geq c^{-1}(K)\PMG(c, c')$ for every $c \neq c'$, which means that the probability of changing state according to $\Prev$ is not too low compared to the one of $\PMG$. Interestingly, the converse is not true, in the sense that there is no $d > 0$ independent from $n$ such that $\PMG(c, c') \geq d\Prev(c, c')$. Indeed, let $\pi$ be as in \eqref{eq:marg_c} with $K = 3$, $\bm{\alpha} = (1,1, 1)$ and $f_\theta = f$. Then if $c = (1, \dots, 1, 2, 3)$ and $c' = (1, \dots, 1, 2, 2)$ it is easy to see that
  \[
  \PMG(c, c') = \frac{2}{n(3 + n-1)} \quad \text{and} \quad \Prev(c, c') = \frac{1}{6n}.
  \]
The first inequality in \eqref{eq:asymp_variances} instead relies on extending classical asymptotic variance comparison results for lifted kernels to the case of state-dependent mixtures such as in $\PNR$, as shown in 
Section \ref{sec:as_var_general} of the supplement.

\end{remark}
We stress that the results of Theorem \ref{thm:asymp_variances} hold uniformly, in the sense that no assumptions on $\pi$ are needed. Thus using $\PNR$ is guaranteed to provide performances which never get significantly worse than the ones of $\PMG$ in terms of asymptotic variances. In the next sections, we will see that on the contrary $\PNR$ can lead to significant improvements (e.g.\ by a factor of $n$) relative to $\PMG$.

\subsection{Comparison with conditional sampler}\label{sec:conditional}

We now define the \emph{conditional} sampler targeting $\pi(c, \bm{\theta}, \bm{w})$ mentioned in Section \ref{sec:marginal}. 
The pseudocode is given in Algorithm \ref{alg:PCD} and we denote with $\PCD$ the associated Markov kernel on $[K]^n\times\Theta^K \times \Delta_{K-1}$. Also here we consider the random-scan case, which allows for an easier comparison with $\PMG$ and $\PNR$. We expect the main take-away messages to remain valid for the arguably more popular deterministic-scan scheme, even if theoretical results there are less neat (see e.g.\ \citet{roberts2015surprising, he2016scan, gaitonde2024comparison,ascolani2024entropy} and references therein).

\begin{algorithm}[htbp]
\begin{algorithmic}
\State Initialize $(c^{(0)},\bm{\theta}^{(0)}, \bm{w}^{(0)})\in [K]^n\times\Theta^K \times \Delta_{K-1}$
\For{$t \geq 1$}
   \State Sample $i \sim \text{Unif}\left(\{1, \dots, n+1\}\right)$.
   \If{$i \le n$}
    \State Sample $c^{(t)}_i \sim \pi(c_i \mid \bm{\theta}^{(t-1)}, \bm{w}^{(t-1)})$ with 
    \[
    \pi(c_i = k \mid \bm{\theta}, \bm{w}) = \frac{w_kf_{\theta_k}(Y_I)}{\sum_{k' = 1}^Kw_{k'}f_{\theta_{k'}}(Y_i)}, \quad k = 1, \dots, K.
    \]
    \EndIf
    \If{$i = n+1$}
    \State Sample $\bm{w}^{(t)} \sim  \text{Dir}\left(\alpha_1 + n_1(c^{(t-1)}), \dots, \alpha_K + n_K(c^{(t-1)})\right)$.
    \State Sample $\theta_k^{(t)} \sim \pi(\theta_k \mid c^{(t-1)}) \, \propto \, \prod_{j \, : \, c_j^{(t-1)} = k}f_{\theta_{k}}(Y_j)p_0(\theta_k)$ for $k = 1, \dots, K$.
    \EndIf
   \EndFor
\end{algorithmic}
\caption{(Conditional sampler $\PCD$)
\label{alg:PCD}}
\end{algorithm}
    The next proposition, whose proof is inspired by the one of \cite[Thm.1]{liu1994collapsed}, shows that $\PMG$ always yields a smaller asymptotic variance than $\PCD$. Again with an abuse of notation we use $g$ to denote both $g \, : [K]^n \, \to \R$ and $g \, : [K]^n \times \Theta^K \times \Delta_{K-1} \, \to \R$ function of the first argument alone.
\begin{proposition}\label{prop:asymp_variances_cond}
Let $\pi$ be as in \eqref{eq:augmented_posterior} and $g \, : [K]^n \, \to \R$. Then for every $f_\theta$, $n$, $Y$ we have that $\Var(g, \PMG) \leq \Var(g, \PCD)$.
\end{proposition}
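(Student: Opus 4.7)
The target inequality is an instance of the collapsed-Gibbs variance-ordering phenomenon of \cite{liu1994collapsed}: $\PMG$ arises from $\PCD$ by integrating $(\bm\theta,\bm w)$ out of every $c_i$-conditional. Indeed, the conditional-independence structure $c_i\perp c_{-i}\mid(\bm\theta,\bm w,Y)$ in the augmented model \eqref{eq:augmented_model} gives the key identity
\[
\int \pi(c_i\mid \bm\theta,\bm w)\,\pi(\bm\theta,\bm w\mid c_{-i})\,\d\bm\theta\,\d\bm w \;=\; \pi(c_i\mid c_{-i}),
\]
which represents the $\PMG$-update of $c_i$ as the $\pi(\cdot\mid c_{-i})$-average of the $\PCD$-update of $c_i$. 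By a conditional Jensen inequality this averaging should reduce the asymptotic variance for functionals of $c$.

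To turn this intuition into a proof I would proceed in three steps. \emph{Step 1 (Lift).} Define a $\pi(c,\bm\theta,\bm w)$-reversible kernel $\widehat\PMG$ on the joint space that performs one $\PMG$-move on $c$ and then refreshes $(\bm\theta,\bm w)\sim\pi(\cdot\mid c)$. Since the refresh is independent of the $c$-dynamics and $g$ does not depend on $(\bm\theta,\bm w)$, one has $\Var(g,\widehat\PMG)=\Var(g,\PMG)$, so it suffices to compare $\widehat\PMG$ and $\PCD$ on the common joint state space. \emph{Step 2 (Reduction to a bilinear-form inequality).} Use the $L^2(\pi)$ resolvent representation
\[
\Var(g,P) = 2\langle g,(I-P)^{-1}g\rangle_\pi - \Var_\pi(g),
\]
valid for $\pi$-reversible $P$ and centered $g$, to reduce the claim $\Var(g,\widehat\PMG)\leq\Var(g,\PCD)$ to an ordering of resolvents on the subspace $\mathcal{V}\subset L^2(\pi)$ of $c$-measurable functions. \emph{Step 3 (Conditional Jensen).} Adapting \cite[Thm.~1]{liu1994collapsed}, show that on each elementary $c_i$-update the Dirichlet form of $\widehat\PMG$ dominates that of $\PCD$ on $\mathcal{V}$, via Jensen applied to the conditional-expectation identity above. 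Transfer this bilinear-form domination to the resolvent ordering using the standard operator-theoretic observation that, on the subspace $\mathcal{V}$ where both kernels act invariantly (i.e.\ preserve $c$-measurability when interleaved with the shared $(\bm\theta,\bm w)$-refresh), $(I-\PCD)\succeq(I-\widehat\PMG)$ as self-adjoint operators implies the reverse inequality for their inverses.

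The main obstacle is Step~3: $\widehat\PMG$ and $\PCD$ are not pointwise Peskun-comparable on the joint space (they assign different probabilities to the $(\bm\theta,\bm w)$-refresh and use different $c_i$-conditionals), so the classical \textsc{Peskun}-style argument does not apply and one has to argue only on the subspace of $c$-measurable functions. The bookkeeping with the random-scan weights -- $\PCD$ allocates $1/(n+1)$ to the $(\bm\theta,\bm w)$-block versus $n/(n+1)$ to $c_i$-updates, while $\widehat\PMG$ refreshes $(\bm\theta,\bm w)$ at every step -- is delicate and is the main departure from the two-block deterministic-scan setting of \cite{liu1994collapsed}. I would address it by re-writing each iteration of $\widehat\PMG$ as an $(n+1)$-term mixture containing an explicit no-op move on $c$ whose weight matches the $(\bm\theta,\bm w)$-refresh weight of $\PCD$, so that a term-by-term comparison between matched blocks reduces the problem to the single conditional-expectation inequality highlighted above.
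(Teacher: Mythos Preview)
Your high-level plan matches the paper's: lift $\PMG$ to a kernel $\widehat\PMG$ on the joint space, establish a Dirichlet-form ordering with $\PCD$, and then deduce the asymptotic-variance ordering. The paper also does exactly your Step~1 and then invokes Tierney's Dirichlet-form criterion, so Steps~1--2 are fine.

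The gap is in Step~3. You propose to prove the Dirichlet-form inequality only on the subspace $\mathcal{V}$ of $c$-measurable functions and then use operator monotonicity of $A\mapsto A^{-1}$ restricted to $\mathcal{V}$. But $\PCD$ does \emph{not} leave $\mathcal{V}$ invariant: if $g$ depends only on $c$, then $(\PCD g)(c,\bm\theta,\bm w)$ depends on $(\bm\theta,\bm w)$ through the $c_i$-conditionals $\pi(c_i\mid\bm\theta,\bm w)$. Hence $(I-\PCD)^{-1}$ does not restrict to an operator on $\mathcal{V}$, and the ``standard operator-theoretic observation'' you cite does not apply; a Loewner ordering on a non-invariant subspace does not transfer to the resolvents. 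Your parenthetical ``when interleaved with the shared $(\bm\theta,\bm w)$-refresh'' hints at the problem but does not resolve it, and the $(n{+}1)$-term rewriting with a no-op on $c$ does not help either, since the issue is invariance of $\mathcal{V}$, not the weights.

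The paper sidesteps this entirely by proving the Dirichlet-form inequality for \emph{all} $g\in L^2(\pi)$, which is both stronger and simpler than what you attempt. Writing Dirichlet forms as expected conditional variances, one has
\[
\langle(I-\PCD)g,g\rangle_\pi=\tfrac{1}{n+1}\sum_{i=1}^n\E[\Var(g\mid c_{-i},\bm\theta,\bm w)]+\tfrac{1}{n+1}\E[\Var(g\mid c)],
\quad
\langle(I-\widehat\PMG)g,g\rangle_\pi=\tfrac{1}{n}\sum_{i=1}^n\E[\Var(g\mid c_{-i})].
\]
Two applications of the law of total variance give $\E[\Var(g\mid c_{-i},\bm\theta,\bm w)]\le\E[\Var(g\mid c_{-i})]$ and $\E[\Var(g\mid c)]\le\E[\Var(g\mid c_{-i})]$ for each $i$; summing and matching the $\tfrac{1}{n+1}$ versus $\tfrac{1}{n}$ factors yields $\langle(I-\PCD)g,g\rangle_\pi\le\langle(I-\widehat\PMG)g,g\rangle_\pi$ for every $g$. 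Now Tierney's result applies on the full space and gives the variance ordering, in particular for $c$-measurable $g$. No Jensen, no subspace argument, and no special bookkeeping beyond the arithmetic identity $\tfrac{1}{n+1}+\tfrac{1}{n(n+1)}=\tfrac{1}{n}$.
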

Combined with Theorem \ref{thm:asymp_variances}, the above result implies that $\Var(g, \PNR)\leq c(K)\Var(g, \PCD) + \left[c(K)-1\right]\Var_\pi(g)$, so that
if $\PNR$ outperforms $\PMG$ then it should also be preferred to $\PCD$. Thus in the following we restrict to the comparison between $\PMG$ and $\PNR$.

\section{Scaling limit analysis}\label{sec:optimal_scaling}

In this section we derive scaling limit results for $\PMG$ and $\PNR$ as $n\to\infty$.
In general, given a sequence of discrete-time Markov chains $\{X^{(n)}_t\}_{t \in \N}$, scaling limit results \citep{gelman1997weak, roberts2001optimal} consist in showing that a time-changed process $\{Z^{(n)}_{t}\}_{t \in \R}$ defined as $Z^{(n)}_{t} = X^{(n)}_{\lceil h(n)t\rceil}$, with $h(n) \to \infty$ and $\lceil \cdot\rceil$ denoting the ceiling function, converges in a suitable sense to a non-degenerate process $\{Z_{t}\}_{t \in \R_+}$ as $n\to \infty$. 
Provided the limiting process is non-singular and ergodic, this is usually interpreted as suggesting that $\sO(h(n))$ iterations of the discrete-time Markov chain are needed to mix.  
In other words, the time rescaling required to obtain a non-trivial limit is a measure of how the process speed scales as $n$ grows.

In order to derive such results we restrict to the prior case, where the likelihood is uninformative and the posterior distribution of $c$ coincides with the prior  \eqref{eq:augmented_model}. 
This can be formally described by setting $f_\theta(y) = f(y)$, with $f$ probability density on $\sY$, so that the data do not provide any information on the labels. The joint distribution and full conditionals become
\begin{equation}\label{eq:marg_priori}
\pi(c) = \frac{\prod_{k = 1}^K\Gamma(\alpha_k + n_k(c))}{\Gamma(|\bm{\alpha}| + n)}, \quad \pi(c_i = k \mid c_{-i}) = \frac{\alpha_k + n_k(c_{-i})}{|\bm{\alpha}|+n-1},
\end{equation}
with $|\bm{\alpha}| = \sum_{k = 1}^K\alpha_k$. This is clearly a simplified setting, which allows an explicit mathematical treatment and it can be considered as an extreme case of un-identifiability and overlapping components (which are indeed all the same). 
Extending the analysis to the more realistic case of informative likelihood is an interesting direction for future research, see Section \ref{sec:discussion} for more details.

\subsection{Marginal sampler} 
Consider a Markov chain $\{c^{(t)}\}_{t \in \N}$ with kernel $\PMG$ and invariant distribution \eqref{eq:marg_priori}, where we suppress the dependence on $n$ for simplicity. Let
\[
X_k(c) = \frac{n_k(c)}{n} = \frac{1}{n}\sum_{i = 1}^n\mathbbm{1}(c_i = k), \quad c\in [K]^n,
\]
be the multiplicity of component $k$ and
\begin{equation}\label{eq:transf_c}
\begin{aligned}
&\bm{X}_t = (X_{t, 1}, \dots, X_{t, K}) = \left(X_1\left(c^{(t)}\right), \dots, X_K\left(c^{(t)}\right) \right)\,.
\end{aligned}
\end{equation}
Crucially, since $\pi(c_i = k\mid c_i)$ defined in \eqref{eq:marg_priori} only depends on the multiplicities, i.e.\ $(c_1,\dots,c_n)$ are exchangeable a priori, it follows that $(\bm{X}_t)_{t=0,1,2,\dots}$ is itself a Markov chain. 
Moreover, $\{\bm{X}_t\}_{t \in \N}$ is de-initializing for $\{c^{(t)}\}_{t \in \N}$ in the sense of \cite{roberts2001markov}, so that the convergence properties of the former are equivalent to the one of the latter (by e.g.\ Corollary $2$ therein). With an abuse of notation, we denote the kernel of $\{\bm{X}_t\}_{t \in \mathbb{N}}$ also as $\PMG$.

In the proof of Theorem \ref{thm:diffusion_rev} we show that
\[
\begin{aligned}
\E\left[X_{t+1, k} -x_k \mid \bm{X}_t = \bm{x} \right] = \frac{2}{n^2}\left[\frac{\alpha_k}{2} -|\bm{\alpha}|\frac{x_k}{2} + o(1) \right]
\end{aligned}
\]
and
\[
\begin{aligned}
\E\left[\left(X_{t+1, k} -x_k\right)^2 \mid \bm{X}_t = \bm{x} \right] = \frac{2}{n^2}\left[x_k(1-x_k) + o(1) \right],
\end{aligned}
\]
as $n \to \infty$. The above suggests that a rescaling of order $\sO(n^2)$ is needed to have a non-trivial limit, 
as we will formally show below. 
In particular, let $\{\bm{Z}_t\}_{t \in \R_+}$ be the continuous-time process with generator
\begin{equation}\label{eq:generator_WF}
\A g(\bm{x}) = \frac{1}{2}\sum_{k = 1}^K\left(\alpha_k -|\bm{\alpha}|x_k \right)\frac{\partial}{\partial x_k}g(\bm{x})+\frac{1}{2}\sum_{k, k' = 1}^Kx_k(\delta_{kk'}-x_{k'})\frac{\partial^2}{\partial x_k \partial x_{k'}}g(\bm{x}),
\end{equation}
for every $g \, : \, \Delta_{K-1} \, \to \, \R$ twice differentiable and where $\bm{x} = (x_1, \dots, x_K)$. Such process exists \citep{ethier1976class} and is called Wright-Fisher with mutation rates given by $\bm{\alpha}$. In particular, $\{ \bm{Z}_t\}_{t \in \R_+}$ is a diffusion taking values in $\Delta_{K-1}$ whose stationary density is exactly $\pi(\bm{x}) = \text{Dir}(\bm{x} \mid \bm{\alpha})$. The next theorem shows that, choosing $h(n) = n^2/2$, the continuous-time rescaling of $\{\bm{X}_t\}_{t \in \N}$ converges to $\{ \bm{Z}_t\}_{t \in \R_+}$.
\begin{theorem}\label{thm:diffusion_rev}
Let $\{\bm{Z}^{(n)}_t\}_{t \in \R_+}$ such that $\bm{Z}_t^{(n)} = \bm{X}_{\lceil\frac{n^2}{2}t\rceil}$, where $\{\bm{X}_t\}_{t \in \N}$ is the Markov chain in \eqref{eq:transf_c} with kernel $\PMG$ and invariant distribution $\pi$ as in \eqref{eq:marg_priori}. Let $\{\bm{Z}_t\}_{t \in \R_+}$ be a diffusion with generator as in \eqref{eq:generator_WF}. 
Then if $\bm{Z}^{(n)}_{0} \to \bm{Z}_0$ weakly as $n \to \infty$, we have that $\{\bm{Z}^{(n)}_{t}\}_{t \in \R_+} \to \{\bm{Z}_t\}_{t \in \R_+}$ weakly as $n \to \infty$, according to the Skorokhod topology.
\end{theorem}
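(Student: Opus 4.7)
The plan is to apply a classical generator-convergence theorem for Markov processes (e.g.\ Theorem 4.2.11 and Corollary 4.8.7 in Ethier and Kurtz, \emph{Markov Processes: Characterization and Convergence}) to identify the Skorokhod limit of $\{\bm{Z}^{(n)}_t\}_{t\in\R_+}$ as the Wright-Fisher diffusion $\{\bm{Z}_t\}_{t\in\R_+}$. Since $\{\bm{X}_t\}_{t\in\N}$ is itself a Markov chain on the discrete simplex $\Delta_{K-1} \cap (n^{-1}\Z)^K$ with transition operator $\PMG$ (under the abuse of notation already introduced in the excerpt), and $\bm{Z}_t^{(n)} = \bm{X}_{\lceil n^2 t/2\rceil}$, it suffices to show that the rescaled discrete generator
\[
\mathcal{A}_n g(\bm{x}) \, := \, \frac{n^2}{2} \bigl(\PMG g(\bm{x}) - g(\bm{x})\bigr)
\]
converges uniformly in $\bm{x} \in \Delta_{K-1}$ to $\mathcal{A} g(\bm{x})$ from \eqref{eq:generator_WF}, for every $g$ in a core of $\mathcal{A}$ (e.g.\ $C^3(\Delta_{K-1})$).

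The first step is to compute the one-step moments of the increments $\Delta_k := X_{t+1,k} - X_{t,k}$ conditional on $\bm{X}_t = \bm{x}$. Under $\PMG$ a uniform index $i \in [n]$ is picked and $c_i$ is resampled from $\pi(\cdot \mid c_{-i})$, which by \eqref{eq:marg_priori} depends only on the multiplicities $(n_k(c))_k$ in the prior case. Writing $n \Delta_k = \mathbbm{1}(c_i^{(t+1)} = k) - \mathbbm{1}(c_i^{(t)} = k)$ and averaging first over $c_i^{(t+1)}$ and then over $i$ yields, after elementary algebra,
\[
\E[\Delta_k \mid \bm{X}_t = \bm{x}] \, = \, \frac{\alpha_k - |\bm{\alpha}|\,x_k}{n(|\bm{\alpha}|+n-1)}, \qquad \E[\Delta_k \Delta_{k'} \mid \bm{X}_t = \bm{x}] \, = \, \frac{2\,x_k(\delta_{k k'} - x_{k'})}{n^2} + O(n^{-3}),
\]
uniformly in $\bm{x} \in \Delta_{K-1}$. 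Moreover, $\|\Delta\|_\infty \le 1/n$ deterministically, so all absolute third moments of $\Delta$ are $O(n^{-3})$.

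A third-order Taylor expansion of $g \in C^3(\Delta_{K-1})$ around $\bm{x}$, combined with the moment estimates above and multiplication by $n^2/2$, then gives
\[
\mathcal{A}_n g(\bm{x}) \, = \, \frac{1}{2}\sum_{k=1}^K (\alpha_k - |\bm{\alpha}|x_k)\,\frac{\partial g(\bm{x})}{\partial x_k} + \frac{1}{2}\sum_{k, k'=1}^K x_k(\delta_{k k'} - x_{k'})\,\frac{\partial^2 g(\bm{x})}{\partial x_k \partial x_{k'}} + o(1),
\]
i.e.\ $\mathcal{A}_n g \to \mathcal{A} g$ uniformly on $\Delta_{K-1}$. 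Combined with (i) the classical well-posedness of the martingale problem for the Wright-Fisher generator $\mathcal{A}$ on the compact simplex $\Delta_{K-1}$, for which a convenient core is identified in \citet{ethier1976class}, and (ii) the assumed convergence $\bm{Z}_0^{(n)} \to \bm{Z}_0$, this yields the desired weak convergence in the Skorokhod topology on $D_{\Delta_{K-1}}[0,\infty)$.

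The main obstacle is essentially bookkeeping in the second-moment computation: carefully decomposing $\E[\Delta_k \Delta_{k'} \mid \bm{X}_t = \bm{x}]$ according to the events $\{c_i^{(t)} = k\}$, $\{c_i^{(t)} = k'\}$, and $\{c_i^{(t)} \notin \{k, k'\}\}$, and verifying that the $O(n^{-3})$ remainder is uniform in $\bm{x}$ all the way up to the boundary of the simplex (where the single-site resampling formula correctly reduces to the identity on empty clusters, so no genuine degeneracy arises). The generator convergence theorem, the core property, and the well-posedness of the martingale problem for $\mathcal{A}$ are standard for the Wright-Fisher family and can be cited directly.
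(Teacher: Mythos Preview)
Your proposal is correct and follows essentially the same route as the paper: compute the first and second (cross-)moments of the one-step increments under $\PMG$, control the third moments via $\|\Delta\|_\infty\le 1/n$, Taylor-expand to obtain uniform convergence of the rescaled discrete generator to the Wright--Fisher generator $\mathcal{A}$, and then invoke a standard Ethier--Kurtz generator-convergence result (the paper cites Corollary~8.9 of Chapter~4). The only differences are cosmetic---you package the second moments as $2x_k(\delta_{kk'}-x_{k'})/n^2+O(n^{-3})$ and explicitly mention the core/martingale-problem ingredients, while the paper writes out the diagonal and off-diagonal cases separately---but the argument is the same.
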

\begin{remark}
The proof relies on convergence of generators, which is a standard technique when dealing with sequences of stochastic processes: we refer to \cite[Chapter 4]{ethier2009markov} for details. 
While this approach is common in the MCMC literature (see e.g.\ \citet{gelman1997weak,roberts2001optimal} and related works), we are not aware of applications of it to mixture model contexts. 
On the contrary, the Wright-Fisher process often arises as the scaling limit of models for populations subjected to genetic drift and mutation \citep{ethier2009markov, etheridge2011some}. Connections between sampling schemes and diffusions in population genetics have been also explored in other context, especially for sequential Monte Carlo techniques \citep{koskela2020asymptotic, brown2021simple}.
\end{remark}
\begin{remark}
Theorem \ref{thm:diffusion_rev} suggests that $\sO(n^2)$ iterations are needed for $\PMG$ to converge. This is coherent with \citet[Prop.14.10.1]{khare2009rates} where, albeit motivated by a different problem, the authors show that, when targeting the prior distribution $\pi(c)$ in \eqref{eq:marg_priori}, the second largest eigenvalue of $\PMG$ is
\[
1 - \frac{|\bm{\alpha}|}{n(n + |\bm{\alpha}| -1)}.
\]
This implies that the so-called relaxation time of $\PMG$ scales as $\sO(n^2)$ as $n\to\infty$, which means that $\sO(n^2)$ iterations are required to mix; see e.g.\ \citet[Thm.12.5]{levin2017markov} for more details on relaxation times.
\end{remark}

In order to see why an $\mathcal{O}(n^2)$ convergence is slower than desired, consider for example the case $K = 2$. Then $\{X_{t, 1}\}_{t \in \N}$ is a Markov chain on $\{0, 1/n, \dots, 1\}$ and thus $\PMG$ requires $n^2$ iterations to sample from a distribution on a state space with cardinality $n$. Moreover, $\{X_{t, 1}\}_{t \in \N}$ can be seen as a random walk with transition probabilities
\[
\P\left(X_{t+1, 1} = x_1 + \frac{1}{n} \mid X_{t, 1} = x_1\right) = (1-x_1)\frac{\alpha_1 + nx_1}{\alpha_1 + \alpha_2 + n-1} \approx x_1(1-x_1)
\]
and
\[
\P\left(X_{t+1, 1} = x_1 - \frac{1}{n} \mid X_{t, 1} = x_1\right) = x_1\frac{\alpha_2 + n(1-x_1)}{\alpha_1 + \alpha_2 + n-1} \approx x_1(1-x_1),
\]
when $n$ is large. Thus the probability of going up and down is almost the same, leading to the observed random-walk behaviour. This is reminiscent of classical examples studied in the non-reversible MCMC literature \citep{diaconis2000analysis}, where a faster algorithm is devised by considering a lifted version of the standard random walk.

\subsection{Non-reversible sampler $\PNR$}
Consider now a Markov chain  $\{c^{(t)}, v^{(t)}\}_{t \in \N}$ with kernel $\PNR$ and invariant distribution \eqref{eq:marg_priori}. Define $\bm{X}_t$ as in \eqref{eq:transf_c} and $\bm{V}_t = \left\{V_{t, k, k'} \right\}_{(k, k')\in [K]^2}$ as
\[
V_{t, k, k'} = 
\begin{cases}
0 \quad \text{if } k = k'\\
v^{(t)}_{k, k'} \quad \text{if } k < k'\\
-v^{(t)}_{k', k} \quad \text{if } k > k'\,.
\end{cases}
\]
This means that $V_{t, k', k} = +1$ implies that we are proposing from cluster $k'$ to $k$, for every pair $(k, k')$. This allows for a simpler statement in the theorem to follow. 

By exchangeability arguments as above, it is simple to see that $\{(\bm{X}_t, \bm{V}_t)\}_{t \in \N}$ is de-initializing for $\{c^{(t)}, v^{(t)}\}_{t \in \N}$ and thus it has the same convergence properties. 
In the proof of Theorem \ref{thm:diffusion_nonrev} we show that
\[
\E\left[X_{t+1, k}-x_k \mid \bm{X}_t = \bm{x}, \bm{V}_t = v \right] = \frac{1}{n}\left[\sum_{k' \, : \, v_{k', k} = +1}\frac{x_k + x_{k'}}{K-1} - \sum_{k' \, : \, v_{k', k} = -1}\frac{x_k + x_{k'}}{K-1} + o(1)\right],
\]
which suggest that rescaling time by $n$ is sufficient for a non-trivial limit. 
A technical issue is that, when $X_{t, k} = 0$ for some $k$ then one of the velocities jumps deterministically to $V_{t,k', k} = +1$ with $k'\neq k$. 
To avoid complications related to such boundary effects, we study the scaling of the process in the set
\[
E_M \times V = \left\{ \bm{x} \in \Delta_{K-1} \, \mid \, x_k > \frac{1}{M} \text{ for every $k$}\right\}\times \{-1, 0, +1\}^{[K]^2},
\]
with $M>0$ arbitrarily large but fixed.

Let $\left\{\bm{Z}^{(M)}_t\right\}_{t \in \R_+} = \left\{\bm{Z}^{(M)}_{1,t}, \bm{Z}^{(M)}_{2,t}\right\}_{t \in \R_+}$ be a piecewise deterministic Markov process \citep{davis1984piecewise} on $E_M \times V$ defined as follows. Consider a inhomogeneous Poisson process $\Lambda_t$ with rate
\begin{equation}\label{eq:rate_PDMP}
\lambda\left(\bm{Z}^{(M)}_t\right) = \frac{1}{2(K-1)}\sum_{k\neq k'}\left(Z^{(M)}_{1,t, k} + Z^{(M)}_{1,t, k'}\right)\beta\left(Z^{(M)}_{1, t, k}, Z^{(M)}_{1, t, k'}, Z^{(M)}_{2,t, k, k'} \right) + 2\xi,
\end{equation}
where
\[
\beta(x_k, x_{k'}, v_{k', k}) = \max \left\{0, \frac{\alpha_{k_{-}}-1}{x_{k_-}}+\frac{1-\alpha_{k_{+}}}{x_{k_+}}\right\}
\]
with $k_{-} = k'$ and $k_{+} = k$ if $v_{k', k} = +1$ and viceversa if $v_{k', k} = -1$. In between events, $\{\bm{Z}^{(M)}_t\}_{t\in \R_+}$ evolves deterministically as
\begin{equation}\label{eq:deterministic_PDMP}
\begin{aligned}
\frac{\d Z^{(M)}_{1,t, k}}{\d t} &= \Phi_k\left(\bm{Z}^{(M)}_t\right) \\
& =\frac{1}{K-1}\left[\sum_{k' \, : \, Z^{(M)}_{2, t, k', k} = +1}\left(Z^{(M)}_{1,t, k} + Z^{(M)}_{1,t, k'}\right) - \sum_{k' \, : \, Z^{(M)}_{2, t, k', k} = -1}\left(Z^{(M)}_{1,t, k} + Z^{(M)}_{1,t, k'}\right)\right]
\end{aligned}
\end{equation}
and
\[
\frac{\d Z^{(M)}_{2,t, k', k}}{\d t} = 0,
\]
with $(k', k) \in [K]^2$. The system of differential equations in \eqref{eq:deterministic_PDMP} admits a unique solution by linearity in its arguments. Instead, at each event of $\Lambda_t$, say at $\tau>0$, a pair $(k, k') \in [K]^2$ is selected with probability
\[
q(k, k') \propto \frac{Z^{(M)}_{1,t, k} + Z^{(M)}_{1,t, k'}}{2(K-1)}\left[\beta\left(Z^{(M)}_{1, \tau_-, k}, Z^{(M)}_{1, \tau_-, k'}, Z^{(M)}_{2,\tau_-, k', k} \right)+ 2\xi\right]\,\mathbbm{1}(k \neq k')
\]
and then the process jumps as follows:
\begin{equation}\label{eq:change_PDMP}
Z^{(M)}_{2, \tau, k', k} = - Z^{(M)}_{2, \tau-, k', k} \quad \text{and} \quad Z^{(M)}_{2, \tau, k, k'} = - Z^{(M)}_{2, \tau-, k, k'},
\end{equation}
where $\tau_{-}$ denotes the the left-limit at $\tau$. 
It follows that $\left\{\bm{Z}^{(M)}_t\right\}_{t \in \R_+}$ is a continuous-time process with generator
\begin{equation}\label{eq:generator_PDMP}
\mathcal{B}^{(M)} g(\bm{z}) = \mathbbm{1}(\bm{z}_1 \in E_M) \, \left\{\sum_{k = 1}^K\Phi_k(\bm{z})\frac{\partial}{\partial z_{1,k}}g(\bm{z})+\lambda(\bm{z})\sum_{k\neq k'}q(k, k')\left[g(\bm{z_{(k, k')}})-g(\bm{z}) \right]\right\},
\end{equation}
for every $g \, : \, E_M\times V \, \to \, \R$ twice continuously differentiable in the first argument, where $\bm{z}^{(k, k')} \in E_M\times V$ is equal to $\bm{z}$ except for
\[
\bm{z}^{(k, k')}_{2, k, k'} = - \bm{z}^{(k, k')}_{2, k', k} = -\bm{z}_{2, k, k'}.
\]
Such a process exists for every $M > 0$ since the rates $\lambda(\bm{z})$ are bounded \citep{davis1984piecewise}. We can think of $\left\{\bm{Z}^{(M)}_t\right\}_{t \in \R_+}$ as a process with an absorbing boundary, which remains constant as soon as  $Z^{(M)}_{1,t, k} \leq 1/M$ for some $k$.

Analogously, define $\left\{\bm{X}^{(M)}_t, \bm{V}^{(M)}_t\right\}_{t \in \N}$ as a modification of $\{\bm{X}_t, \bm{V}_t\}_{t \in \N}$, which remains constant as soon as $X^{(M)}_{t, k} \leq 1/M$ for some $k$. 
The next theorem shows that, choosing $h(n) = n$, the continuous-time rescaling of $\left\{\bm{X}^{(M)}_t, \bm{V}^{(M)}_t\right\}_{t \in \N}$ converges to $\left\{ \bm{Z}^{(M)}_t\right\}_{t \in \R_+}$.
\begin{theorem}\label{thm:diffusion_nonrev}
Fix $M > 0$ and let $\left\{\bm{Z}^{(M, n)}_t\right\}_{t \in \R_+}$ such that $\bm{Z}_t^{(M, n)} = \left(\bm{X}^{(M)}_{\lceil nt\rceil}, \bm{V}^{(M)}_{\lceil nt\rceil}\right)$, where $\{\bm{X}_t, \bm{V}_t\}_{t \in \N}$ is a Markov chain with operator $\PNR$ and invariant distribution as in \eqref{eq:aug_invariant}, with $\pi$ in \eqref{eq:marg_priori}. Let $\left\{\bm{Z}^{(M)}_t\right\}_{t \in \R_+}$ be a piecewise deterministic Markov process with generator \eqref{eq:generator_PDMP}.  
Then if $\bm{Z}^{(M, n)}_{0} \to \bm{Z}^{(M)}_0$ weakly as $n \to \infty$, we have that $\left\{\bm{Z}^{(M, n)}_{t}\right\}_{t \in \R_+} \to \left\{\bm{Z}^{(M)}_t\right\}_{t \in \R_+}$ weakly as $n \to \infty$, according to the Skorokhod topology.
\end{theorem}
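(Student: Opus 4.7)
The plan is to follow the same convergence-of-generators strategy used for Theorem \ref{thm:diffusion_rev}, adapted to the lifted state space and to the piecewise deterministic nature of the limit. Let $\mathcal{B}_n$ denote the discrete generator of the rescaled chain, acting on functions $g : E_M\times V \to \R$ that are continuously differentiable in the first argument, by
\[
\mathcal{B}_n g(\bm{x}, \bm{v}) = n\left(\E\left[g(\bm{X}^{(M)}_{t+1}, \bm{V}^{(M)}_{t+1}) \mid (\bm{X}^{(M)}_t, \bm{V}^{(M)}_t) = (\bm{x}, \bm{v})\right] - g(\bm{x}, \bm{v})\right).
\]
The goal is to show that $\mathcal{B}_n g \to \mathcal{B}^{(M)} g$ uniformly on $E_M\times V$ for every such $g$, and then conclude using the standard machinery in \cite[Ch.4]{ethier2009markov}. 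Well-posedness of the martingale problem for $\mathcal{B}^{(M)}$ follows from \cite{davis1984piecewise}, since on $E_M$ the jump rate $\lambda$ in \eqref{eq:rate_PDMP} is bounded and the vector field $\Phi$ in \eqref{eq:deterministic_PDMP} is linear, hence globally Lipschitz, so the PDMP is non-explosive and uniquely determined; tightness in the Skorokhod topology is automatic because $E_M\times V$ is compact.

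The core computation is an expansion of $\mathcal{B}_n g$ obtained by conditioning on the selected pair $(k, k') \in \mathcal{K}$, which is drawn with probability $p_c(k, k') = (x_k + x_{k'})/(K-1)$, and then distinguishing three events within Algorithm \ref{alg:Pkk}: (i) the MH proposal is accepted, which updates $X_{k_-} \to X_{k_-} - 1/n$, $X_{k_+} \to X_{k_+} + 1/n$ with $\bm{v}$ unchanged; (ii) it is rejected, which flips $v_{k, k'}$ and leaves $\bm{x}$ unchanged; (iii) one of the two independent refreshes, each of probability $\xi/n$, flips $v_{k, k'}$. Event (i) has conditional probability $1 - O(1/n)$ on $E_M$; a first-order Taylor expansion of $g$ in its first argument, combined with the convention that $V_{k', k} = +1$ corresponds to $k_+ = k$, reproduces after summation exactly the drift term $\sum_k \Phi_k(\bm{z}) \partial_{z_{1, k}} g(\bm{z})$ of \eqref{eq:generator_PDMP}. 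For event (ii), a Taylor expansion of the MH ratio \eqref{eq:acc_ratio_mixtures} specialised to the prior case \eqref{eq:marg_priori} gives
\[
1 - \min\left\{1, r(c, i, k_-, k_+) \right\} = \frac{1}{n} \max\left\{0, \frac{\alpha_{k_-}-1}{x_{k_-}} + \frac{1-\alpha_{k_+}}{x_{k_+}}\right\} + O(n^{-2}) = \frac{\beta(x_k, x_{k'}, v_{k', k})}{n} + O(n^{-2}),
\]
which combined with the $2\xi/n$ contribution from (iii) produces a total flip rate $p_c(k, k')[\beta + 2\xi]/n + O(n^{-2})$ for $v_{k, k'}$. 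Multiplying by $n$ and summing over $(k, k') \in \mathcal{K}$ recovers precisely the jump rate $\lambda(\bm{z})$ of \eqref{eq:rate_PDMP} and the selection probabilities $q(k, k')$ implicit in \eqref{eq:generator_PDMP}. Higher-order Taylor terms and the probability of two refreshes occurring in the same iteration are $O(1/n)$ and contribute no limiting term.

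The main obstacle is to make all these expansions uniform in $(\bm{x}, \bm{v}) \in E_M \times V$: the rate $\beta$ and the remainder in the expansion of $r$ contain inverse powers of $x_{k_\pm}$, so uniformity fails near the boundary of the simplex. This is precisely the role of the absorbing modification at $x_k \leq 1/M$: on $E_M$ all coordinates are bounded below by $1/M$, which yields uniformly bounded error constants in the acceptance-ratio expansion and in the Taylor remainder of $g$, and also rules out the case $n_{k_-} = 0$ in Algorithm \ref{alg:Pkk} before absorption occurs. Once uniform convergence $\mathcal{B}_n g \to \mathcal{B}^{(M)} g$ is established on a core of smooth functions, Skorokhod weak convergence follows from the cited generator-convergence results, analogously to the reversible case of Theorem \ref{thm:diffusion_rev} but with time rescaling of order $n$ rather than $n^2$, reflecting that in the non-reversible scheme the net displacement per iteration is $O(1/n)$ rather than $O(1/n^2)$.
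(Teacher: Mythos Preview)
Your proposal is correct and follows essentially the same approach as the paper: both argue via uniform convergence of the discrete generator to $\mathcal{B}^{(M)}$ on $E_M\times V$, expanding the MH acceptance probability in the prior case to extract the $\beta/n$ rejection rate and the $2\xi/n$ refreshment rate, matching the drift to $\Phi_k$ via a first-order Taylor expansion, and then invoking the generator-convergence machinery of \cite[Ch.~4, Cor.~8.9]{ethier2009markov}. The only difference is organizational: the paper computes the first two conditional moments of $X^{(M)}_{t+1,k}-x_k$ and the conditional expectation of $g(\bm{x},\bm{V}^{(M)}_{t+1})-g(\bm{x},\bm{v})$ separately, while you decompose by the events (accept / reject / refresh); both routes produce the same leading-order terms and the same $o(1/n)$ remainders, uniformly on $E_M$.
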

\begin{remark}
Looking at the process only in the interior of the simplex is inspired by other works on diffusion approximations, see e.g.\ \citet{barton2004coalescence} where they use a similar technique to deal with explosive behaviour in the boundary. If $\alpha_k > 1$ for every $K$, we could proceed as in Theorem $4.2$ therein to show that the boundary is never reached and thus the limit can be extended to the whole space.
\end{remark}
Theorem \ref{thm:diffusion_nonrev} suggests that the overall computational cost of Algorithm \ref{alg:PNR} is $\sO(n)$ and, combined with Theorem \ref{thm:diffusion_rev}, this suggest an $\mathcal{O}(n)$ speed-up relative to $\PMG$ in the prior case. 
In Section \ref{sec:simulations} we will show empirically that large improvements are also present in more realistic and interesting settings where the likelihood is informative.

\section{A variant of $\PNR$}\label{sec:variant}
The kernels $\Prev$ and $\PNR$ sample a new pair $(k,k')$ at every iteration. 
While this is natural and allows for direct theoretical comparisons with $\PMG$ (see Theorem \ref{thm:asymp_variances}), an alternative in the non-reversible case is to keep the same value of $(k,k')$ for multiple iterations. 
We thus define the following, non-reversible and $\tilde{\pi}$-invariant kernel
\begin{equation}\label{eq:tilPNR}
\QNR= \sum_{(k, k')\in \mathcal{K}} \frac{2}{K(K-1)}\sum_{t = 1}^\infty q_{m_c(k, k')}(t)\tilde{P}^{t}_{k, k'},
\end{equation}
with $m_c(k, k')=(n_k(c)+n_{k'}(c))/s$ for some fixed $s \in (0, 1)$ and $q_m(t)$ being the probability mass function of a geometric random variable with parameter $1/m$. 
The algorithm picks a couple $(k,k')$ uniformly at random and then takes a random number of steps of the lifted kernel $\tilde{P}_{kk'}$, with average number of steps proportional to the total number of points in the two clusters, i.e.\ $n_k(c)+n_{k'}(c)$. 
The associated pseudo-code is presented in Algorithm \ref{alg:PNR_variant}.
\begin{algorithm}[htbp]
\begin{algorithmic}
\State Sample $(k, k') \sim \text{Unif}(\mathcal{K})$
\State Sample $t\sim \text{Geom} \left(s/(n_k(c)+n_{k'}(c))\right)$ for some fixed $s \in (0,1)$
\State Sample $(c',v')\sim \tilde{P}^t_{k,k'}((c,v),\cdot)$
\end{algorithmic}
\caption{Modified non-reversible sampler $(c',\bv')\sim\QNR((c,\bv),\cdot)$
\label{alg:PNR_variant}}
\end{algorithm}
Reasoning as in Lemma \ref{lemma:ergodicity} it is easy to see that $\QNR$ is $\tilde{\pi}$-invariant and uniformly ergodic, as stated in the next lemma. 
\begin{lemma}\label{lemma:ergodicity_variant}
For any probability distribution $\pi$ on $[K]^n$, the Markov kernel $\QNR$ defined in Algorithm \ref{alg:PNR_variant} is $\tilde{\pi}$-invariant, with $\tilde{\pi}$ as in \eqref{eq:aug_invariant}.
Moreover, if $\pi(c) > 0$ for every $c \in [K]^n$, then $\QNR$ is irreducible, aperiodic and uniformly ergodic.
\end{lemma}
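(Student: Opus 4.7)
The plan is to mirror the three-part strategy used in the proof of Lemma \ref{lemma:ergodicity}: first verify $\tilde{\pi}$-invariance directly from the mixture representation \eqref{eq:tilPNR}, then establish irreducibility and aperiodicity under the positivity assumption on $\pi$, and finally deduce uniform ergodicity on the finite state space from these (see e.g.\ \citet[Theorem 4.9]{levin2017markov}). The only substantive new ingredient compared to $\PNR$ is that the geometric weights $q_{m_c(k,k')}(t)$ in \eqref{eq:tilPNR} depend on the current state $c$, so the standard ``mixture of invariant kernels'' argument does not apply verbatim and requires an additional observation.

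The key observation for invariance is that each step of $\tilde{P}_{k,k'}$ preserves the joint cluster size $n_k(c)+n_{k'}(c)$, since the $c$-update only relocates a single point between clusters $k$ and $k'$ (and velocity flips leave $c$ untouched). Hence $m_c(k,k')=(n_k(c)+n_{k'}(c))/s$ is constant along every trajectory of $\tilde{P}_{k,k'}$, so that $\tilde{P}^{t}_{k,k'}((c,v),(c',v'))>0$ forces $m_c(k,k')=m_{c'}(k,k')$. This allows me to replace $q_{m_c(k,k')}(t)$ by $q_{m_{c'}(k,k')}(t)$ inside the sum over $(c,v)$ and pull it outside the inner sum, obtaining
\begin{align*}
(\tilde{\pi}\QNR)(c',v') = \sum_{(k,k')\in\mathcal{K}} \frac{2}{K(K-1)} \sum_{t=1}^{\infty} q_{m_{c'}(k,k')}(t) \sum_{(c,v)} \tilde{\pi}(c,v)\,\tilde{P}^{t}_{k,k'}((c,v),(c',v')).
\end{align*}
Since the $\tilde{\pi}$-invariance of each $\tilde{P}_{k,k'}$ is established in the proof of Lemma \ref{lemma:ergodicity}, so is that of $\tilde{P}^{t}_{k,k'}$, and the inner sum collapses to $\tilde{\pi}(c',v')$. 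Using $\sum_t q_m(t)=1$ and $\sum_{(k,k')\in\mathcal{K}} 2/[K(K-1)]=1$ then yields $(\tilde{\pi}\QNR)(c',v')=\tilde{\pi}(c',v')$.

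For irreducibility and aperiodicity I would start from the minorisation $\QNR((c,v),\cdot) \geq \frac{2}{K(K-1)}\,q_{m_c(k,k')}(1)\,\tilde{P}_{k,k'}((c,v),\cdot)$, valid for all pairs with $n_k(c)+n_{k'}(c)\geq 1$, which shows that $\QNR$ realises every one-step transition of $\tilde{P}_{k,k'}$ with positive probability. Under $\pi(c)>0$ the MH ratios in \eqref{eq:acc_ratio} are strictly positive, so (as in the argument used for $\PNR$ in the proof of Lemma \ref{lemma:ergodicity}) any allocation can be reached from any other by chaining single-point moves across different pairs in $\mathcal{K}$, and velocity components can be brought to any desired configuration by combining acceptances, rejections and the refresh flips inside $\tilde{P}_{k,k'}$. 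Aperiodicity can be secured by exhibiting a positive-probability self-loop, e.g.\ via two consecutive compensating velocity flips in a single call to $\tilde{P}^{t}_{k,k'}$ with $t=2$ (which has strictly positive geometric weight). Uniform ergodicity then follows from irreducibility and aperiodicity on the finite state space $[K]^n\times\{-1,+1\}^{K(K-1)/2}$. The main delicate point throughout is the state-dependent weight in the invariance step, handled by the conservation law above; beyond that, the argument is a direct adaptation of the one already used for $\PNR$.
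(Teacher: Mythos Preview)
Your approach matches the paper's, which does not give a separate proof but simply states that one reasons as in Lemma~\ref{lemma:ergodicity}. You correctly identify and resolve the one new ingredient---the state-dependent geometric weight $q_{m_c(k,k')}(t)$---via the conservation of $n_k(c)+n_{k'}(c)$ under $\tilde{P}_{k,k'}$, which is exactly the analogue of condition~\eqref{eq:weights_condition} applied to the enlarged mixture over the index $(k,k',t)$.

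There is one small gap in the aperiodicity step: your proposed mechanism (two compensating velocity flips with $t=2$) does not always produce a self-loop. For instance, take $K=2$, $n=1$, $\pi$ uniform and $\xi=0$: every MH ratio in \eqref{eq:acc_ratio} equals~$1$, so $\tilde{P}_{1,2}$ is the deterministic four-cycle
\[
((1),+1)\to((2),+1)\to((2),-1)\to((1),-1)\to((1),+1),
\]
and no $t=2$ path returns to its starting point. A fully general fix, which also explains why the statement (unlike Lemma~\ref{lemma:ergodicity}) does not require $\xi>0$: since $\tilde{P}_{k,k'}$ is $\tilde{\pi}$-invariant on a finite space with $\tilde{\pi}>0$ everywhere, each state $(c,v)$ is recurrent for $\tilde{P}_{k,k'}$, hence $\tilde{P}^{\,T}_{k,k'}((c,v),(c,v))>0$ for some finite $T\geq 1$. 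Choosing $t=T$ in the geometric mixture (which puts positive mass on every $t\geq 1$) then gives $\QNR((c,v),(c,v))>0$, a self-loop at every state.
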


The distinction with the main algorithm is that $\PNR$ resamples the pair $(k, k')$ at each iteration with probability proportional to $n_k(c)+n_{k'}(c)$, while 
$\QNR$ keeps the same $(k,k')$ for $O(n_k(c)+n_{k'}(c))$ iterations and then resamples the pair $(k,k')$ uniformly from $\mathcal{K}$. Indeed we expect $\PNR$ and $\QNR$ to perform similarly for fixed values of $K$, but we empirically observe that $\QNR$ tends to yield slower mixing as $K$ increases: see Section \ref{sec:comparison_app} in the Supplementary Material for a simulative comparison in the prior case. This motivated us to focus on $\PNR$ as the main scheme of interest in this paper.
\begin{remark}
In the prior case of Section \ref{sec:optimal_scaling}, where the invariant distribution is given by \eqref{eq:marg_priori}, it is possible to find a corresponding scaling limit for $\QNR$. The proof is analogous to the case of $\PNR$ and we omit it for brevity, just limiting ourselves to identifying the candidate limit and discussing its implications. 
Consider a Markov chain $\{(\bm{X}_t, \bm{V}_t)\}_{t \in \N}$ with kernel $\QNR$. 
With similar calculations as in Theorem \ref{thm:diffusion_nonrev}, the process $\left\{\bm{Z}^{(M, n)}_t\right\}_{t \in \R_+}$ defined as $\bm{Z}_t^{(M, n)} = \left(\bm{X}^{(M)}_{\lceil nt\rceil}, \bm{V}^{(M)}_{\lceil nt\rceil}\right)$ can be shown to converge to $\{\bm{Z}_t\}_{t \in \R_+}$ with generator
\[
\begin{aligned}
\mathcal{C}^{(M)} g(\bm{z}) = \mathbbm{1}(\bm{z}_1 \in E_M) \, \biggl\{&\frac{\partial}{\partial z_{1,k_+}}g(\bm{z})-\frac{\partial}{\partial z_{1,k_-}}g(\bm{z})\\
&+\max\left\{0, \frac{\alpha_{k_{-}}-1}{z_{1,k_-}}+\frac{1-\alpha_{k_{+}}}{z_{1,k_+}} \right\}\left[ g(\bm{z}_1, -\bm{z}_2) -g(\bm{z})\right]\\
&+\frac{s}{z_{1, k_-}+ z_{2, k_+}}\sum_{k\neq k'}\frac{z_{1, k}+z_{1, k'}}{2(K-1)}\left[g\left(\bm{z}_1, \bm{z}_2^{(k, k')}\right)-g(\bm{z}) \right]\biggr\},
\end{aligned}
\]
with $k_{-} = k'$ and $k_{+} = k$ if $z_{2, k, k'} = +1$ and viceversa if $z_{2, k, k'} = -1$. Moreover $\bm{z}_2^{(k, k')}$ is the vector with $z_{2, k, k'} = -z_{2, k', k} = +1$ and zero otherwise. Interestingly, $\mathcal{C}^{(M)}$ coincides with the generator of the so-called Coordinate Sampler, introduced in \citet{wu2020coordinate}, with target distribution 
$\text{Dir}(\bm{\alpha})$.
\end{remark}

\subsection{The random projection sampler being approximated}
The main feature of $\QNR$ is that, after sampling a pair $(k, k') \in \mathcal{K}$, the operator $\tilde{P}_{k, k'}$ is applied for a random number of iterations. If $s \to 0$ the latter diverges almost surely, meaning that after selecting the pair the sampler will behave as $\tilde{P}^{t}_{k, k'}$ with $t\to\infty$.
By definition of $\tilde{P}^{t}_{k, k'}$ and ergodicity, this converges to the kernel $\Pi_{k, k'}$ that updates the sub-partition of points in clusters $k$ and $k'$ conditional on the rest, i.e.
\begin{align}\label{eq:proj_kernel} 
\lim_{t\to\infty}P^{t}_{k, k'}(c,c')&=
\tilde{\Pi}_{k, k'}(c,c')
\propto
\left(\prod_{i\,:\,c_i\notin\{k,k'\}}
\mathbbm{1}\left(c_i = c'_i\right)
\right)
\pi(c')
&c,c'\in[K]^n
\,.
\end{align}
Note that $\Pi_{k, k'}$ is a projection kernel, i.e.\ $\Pi^2_{k, k'}=\Pi_{k, k'}$. 
Analogously, again as $s \to 0$, $\QNR$ converges to the random projection kernel defined as
\begin{align}\label{eq:proj_kernel} 
\PRP(c,c')
&=
\frac{2}{K(K-1)}
\sum_{(k, k')\in \mathcal{K}} 
\Pi_{k, k'}(c,c')
&c,c'\in[K]^n
\,,
\end{align}
whose structure resembles the one of a random-scan Gibbs Sampler that updates the sub-partition of two randomly chosen pairs of clusters given the configuration of the other clusters.
In this perspective, 
$\QNR$ can be interpreted as a Metropolis-within-Gibbs sampler approximating $\PRP$. 

\begin{remark}
In the prior case, as $n \to \infty$, we expect $\PRP$ in turn to approximate a Gibbs sampler on the $(K-1)$-dimensional simplex, which at every iteration updates two coordinates chosen at random. In the special case of $\bm{\alpha} = (1, \dots, 1)$, the latter has been studied in \cite{smith2014gibbs} and shown to require $\sO(K \log (K))$ iterations for mixing. 
\end{remark}

\section{Simulations}\label{sec:simulations}

\subsection{Prior case}\label{sec:sim_prior_case}

First of all we consider the prior case, where $f_\theta = f$ and the target distribution is given by \eqref{eq:marg_priori}. We let $K = 3$, $n = 1000$ and we run Algorithms \ref{alg:PMG} and \ref{alg:PNR} for $300$ independent runs, first with $\bm{\alpha}=(1,1,1)$ and then with $\bm{\alpha}=(0.1,0.1,0.1)$. 
Initial configurations are independently generated, so that $c_i^{(0)} \simiid \text{Unif}\left([K]\right)$. For each run we store the value of the chains after $T=100\times n$ iterations and plot the corresponding proportion of labels of the first two components, i.e.\ $(n_1(c^{(T)})/n,n_2(c^{(T)})/n)$ in Figure \ref{fig:prior}. 
 If the chains had reached convergence by then, these should be $300$ independent samples approximately following 
a Dirichlet-Multinomial distribution with parameters $\bm{\alpha}$ (since $n$ is large, this is visually close to drawing samples directly from a $\text{Dir}(\bm{\alpha})$ distribution).

From the results in Figure \ref{fig:prior}, it is clear that the non-reversible scheme (second column) leads to faster convergence: this is particularly manifest in the second row (corresponding to $\bm{\alpha}=(0.1,0.1,0.1)$), where the mass should be concentrated around the borders of the simplex. Indeed, both chains associated to $\PMG$ remain stuck close to the initial configuration, where the proportion within each group is close to $1/3$. This is also clear from the last column of Figure \ref{fig:prior}, which shows that the marginal distribution of $\PNR$ (in black) converges to the stationary one after fewer iterations.

\begin{figure}[h]
\centering
\includegraphics[width=.32\textwidth]{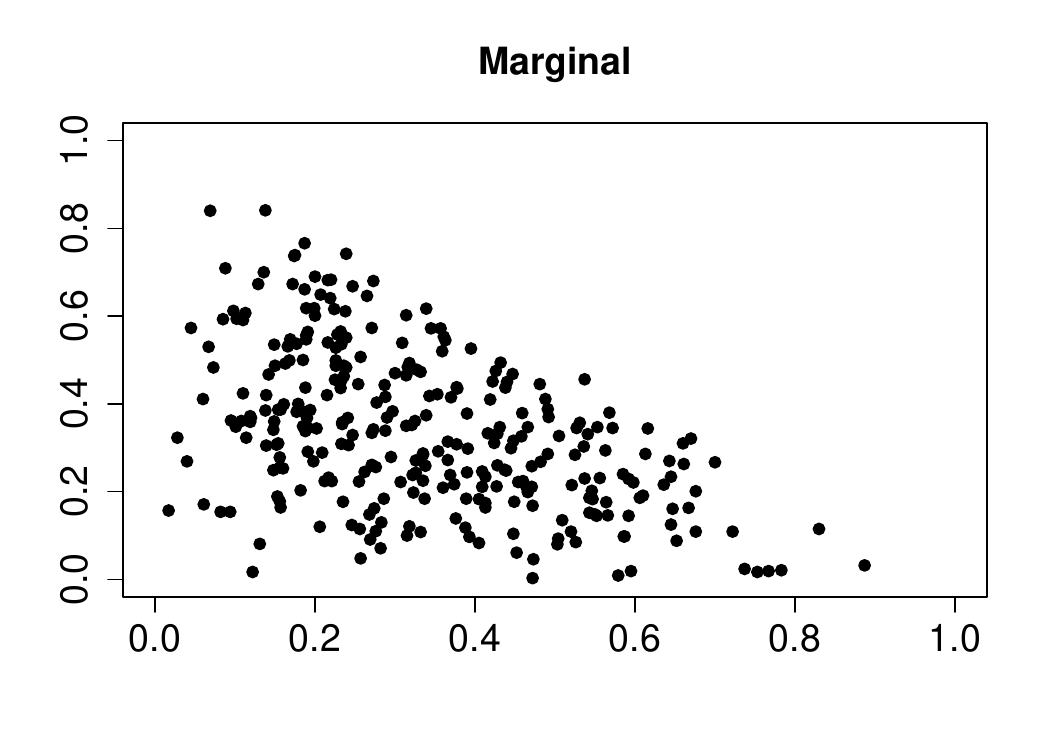} \,
\includegraphics[width=.32\textwidth]{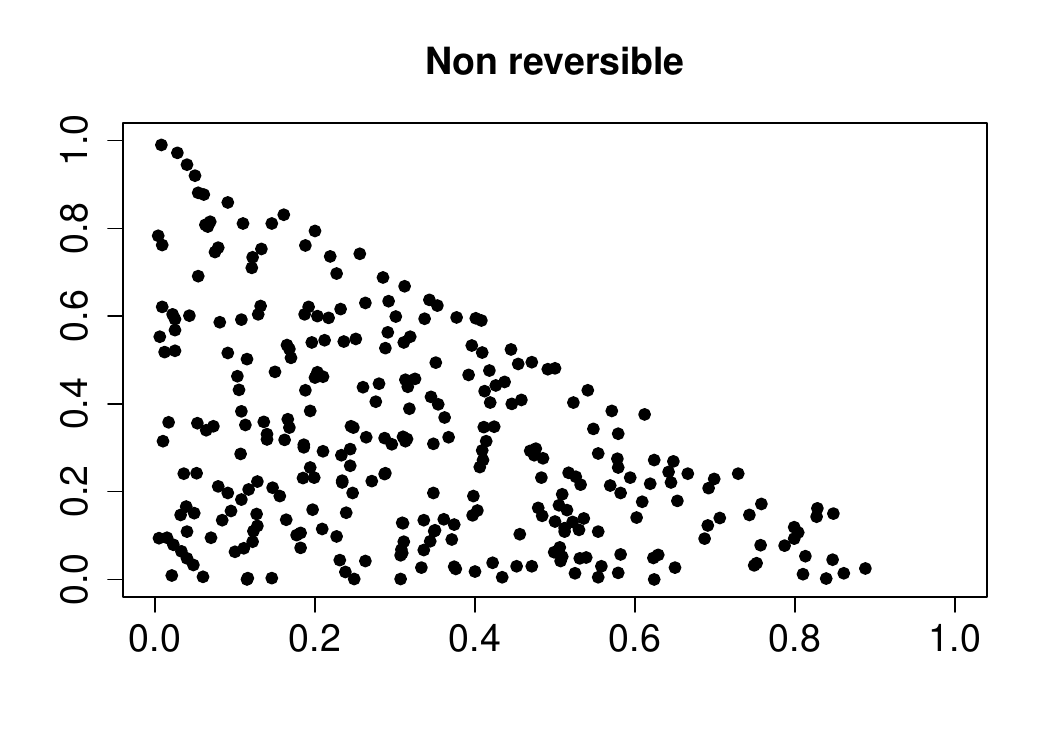} \,
\includegraphics[width=.32\textwidth]{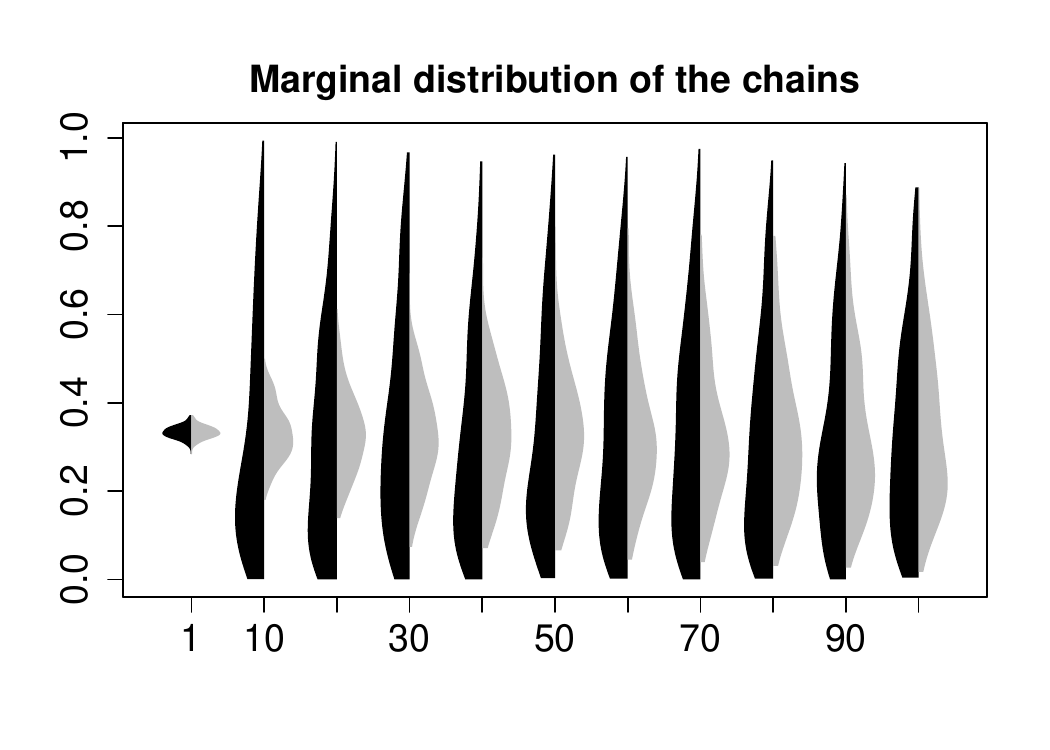}\\
\includegraphics[width=.32\textwidth]{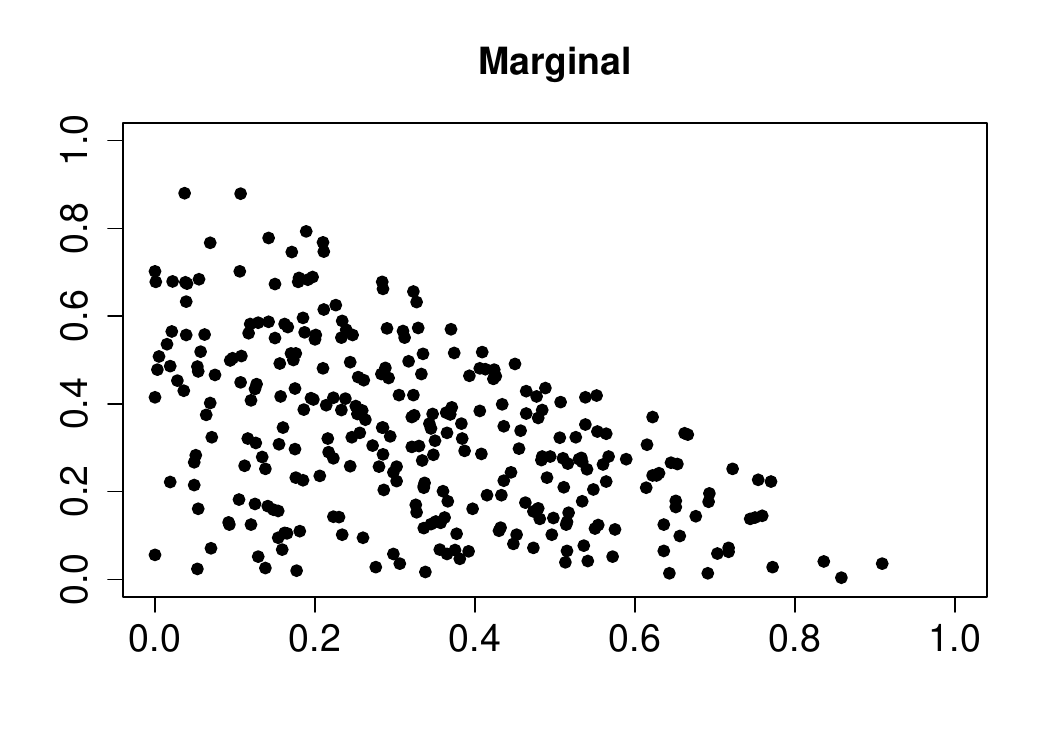} \,
\includegraphics[width=.32\textwidth]{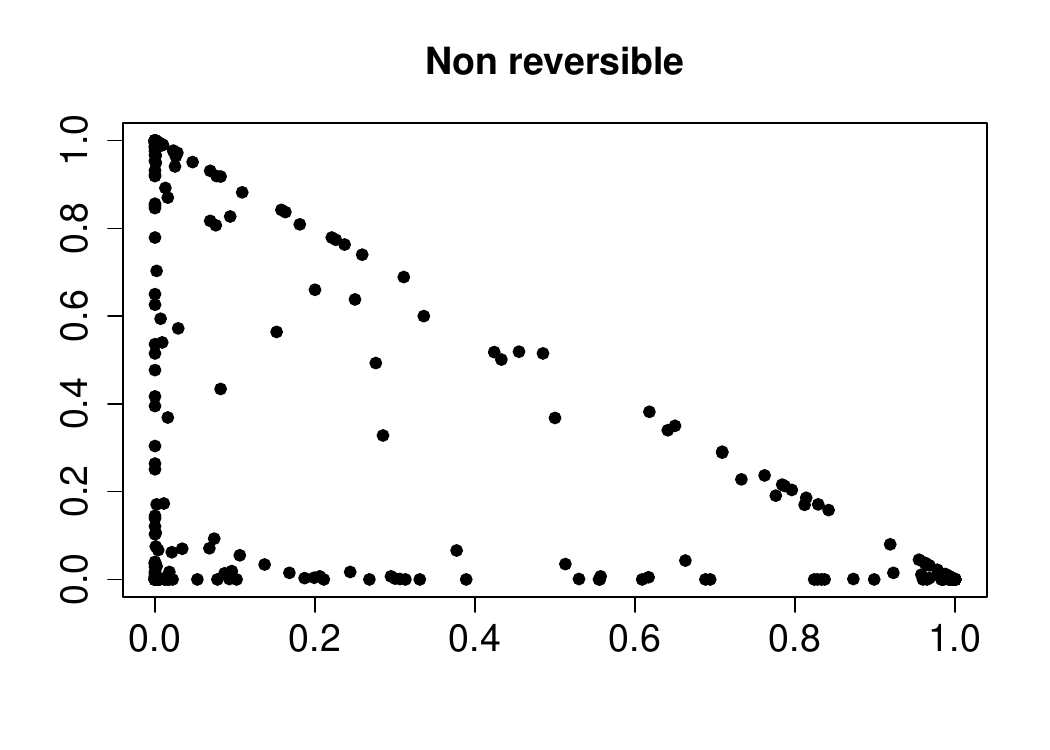} \,
\includegraphics[width=.32\textwidth]{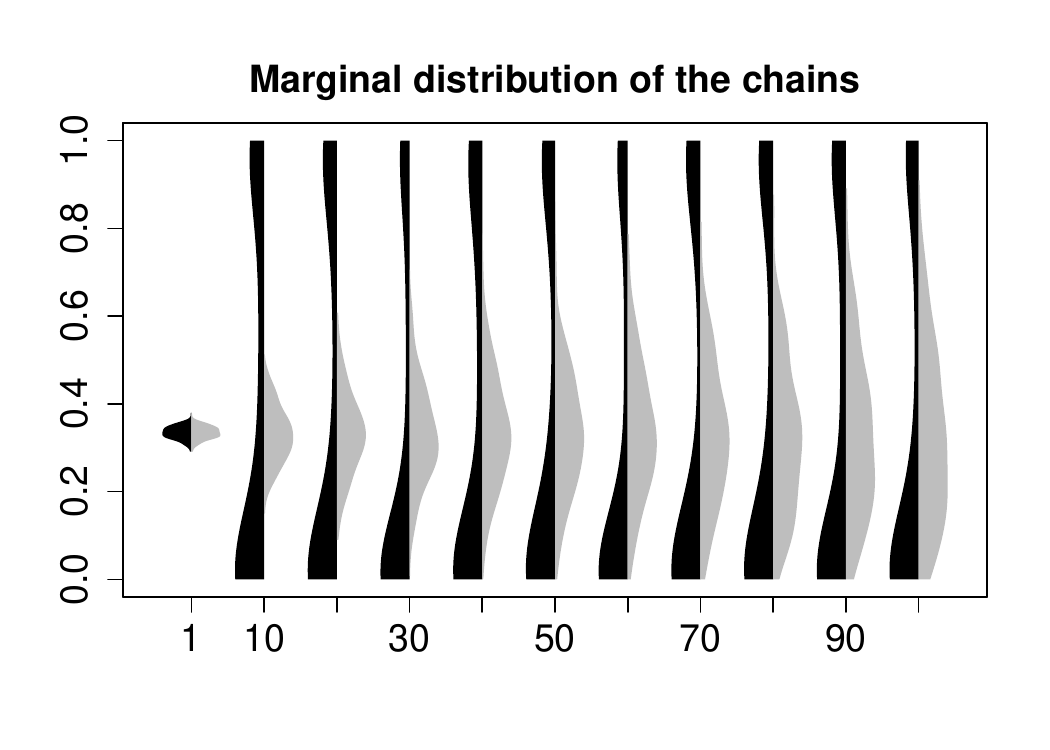}
 \caption{\small{
Left and center column: plot of the proportions of the first two components in the last of $100$ iterations (after a thinning of size $n$) over $300$ independent runs for $\PMG$ (left) and $\PNR$ (center). Right column: plot of the marginal distribution of the proportion of the first component at every $10$ iterations (after thinning) for $\PMG$ (gray) and $\PNR$ (black). The first and second rows refer to $\bm{\alpha}=(1,1,1)$ and $\bm{\alpha}=(0.1,0.1,0.1)$, respectively. The target distribution is given in \eqref{eq:marg_priori}.
  }}
 \label{fig:prior}
\end{figure}

\subsection{Posterior case}\label{sec:simulation_normal}
We now consider model \eqref{eq:original_model} with $\sY=\Theta=\R$, $K = 3$,
\begin{equation}\label{eq:normal_kernel}
f_\theta(y) = N(y \mid \theta, \sigma^2), \quad p_0(\theta) = N(\theta \mid \mu_0, \sigma^2_0).
\end{equation}
and hyperparameters set to $\mu_0 = 0$ and $\sigma^2 = \sigma^2_0 = 1$. We then generate $300$ independent data sets of size $n = 1000$, each generated from the model as follows:
\begin{enumerate}
    \item Sample $\bm{w} \sim \text{Dirichlet}(\bm{\alpha})$ and $\theta_k \simiid p_0$ for $k = 1, \dots, K$.
    \item Sample $Y_i \simiid \sum_{k = 1}^Kw_kf_{\theta_k}(y)$ for $i = 1, \dots, n$.
\end{enumerate}
For each dataset we target the associated posterior using $\PMG$ and $\PNR$.
As before we initialize each chain uniformly, i.e.\ $c_i^{(0)} \simiid \text{Unif}\left([K]\right)$, and store its value after $T=100\times n$ iterations. Since the data are generated from the (Bayesian) model, the resulting distribution of the proportions within each component should be close to the prior one, i.e.\ again a Dirichlet-multinomial with parameter $\bm{\alpha}$. This test for convergence, discussed for example in \cite{geweke2004getting}, relies on the fact that sampling from the \emph{prior} distribution is equivalent to sampling from the \emph{posterior}, given data generated according to the marginal distribution induced by the model.

The resulting samples are displayed in Figure \ref{fig:posterior}, with the same structure as in Figure \ref{fig:prior}.
Again the non-reversible scheme is much closer to the correct distribution, while $\PMG$ remains close to the initial configuration.
Indeed, the results are remarkably close to the ones presented in Section \ref{sec:sim_prior_case}: this suggests that the behaviour observed in the prior case is informative also of the actual behaviour observed in the posterior case, at least in this setting.  
In Section \ref{sec:poisson_app} of the Supplementary Material similar results are shown for the Poisson kernel.

\begin{figure}[h]
\centering
\includegraphics[width=.32\textwidth]{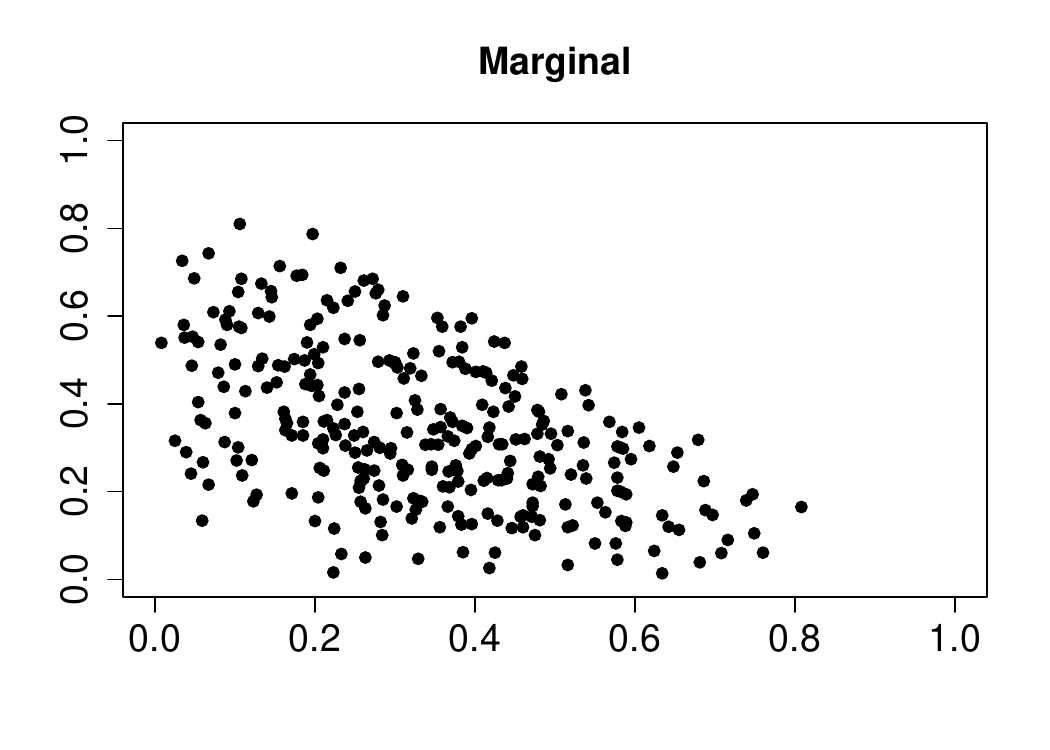} \,
\includegraphics[width=.32\textwidth]{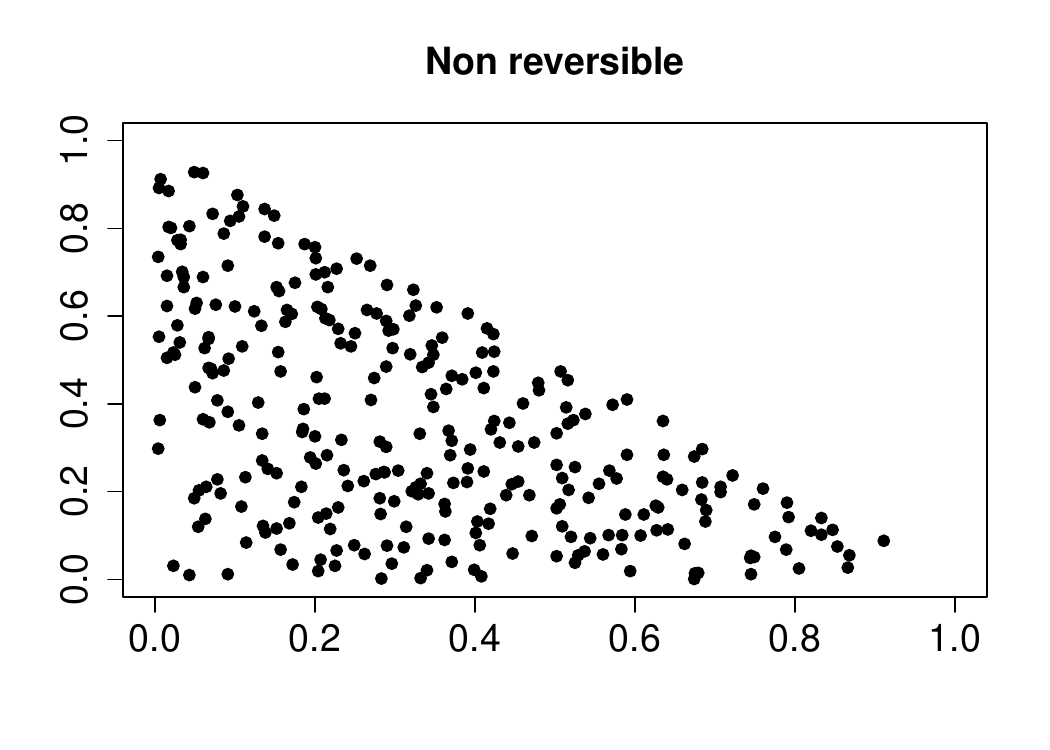} \,
\includegraphics[width=.32\textwidth]{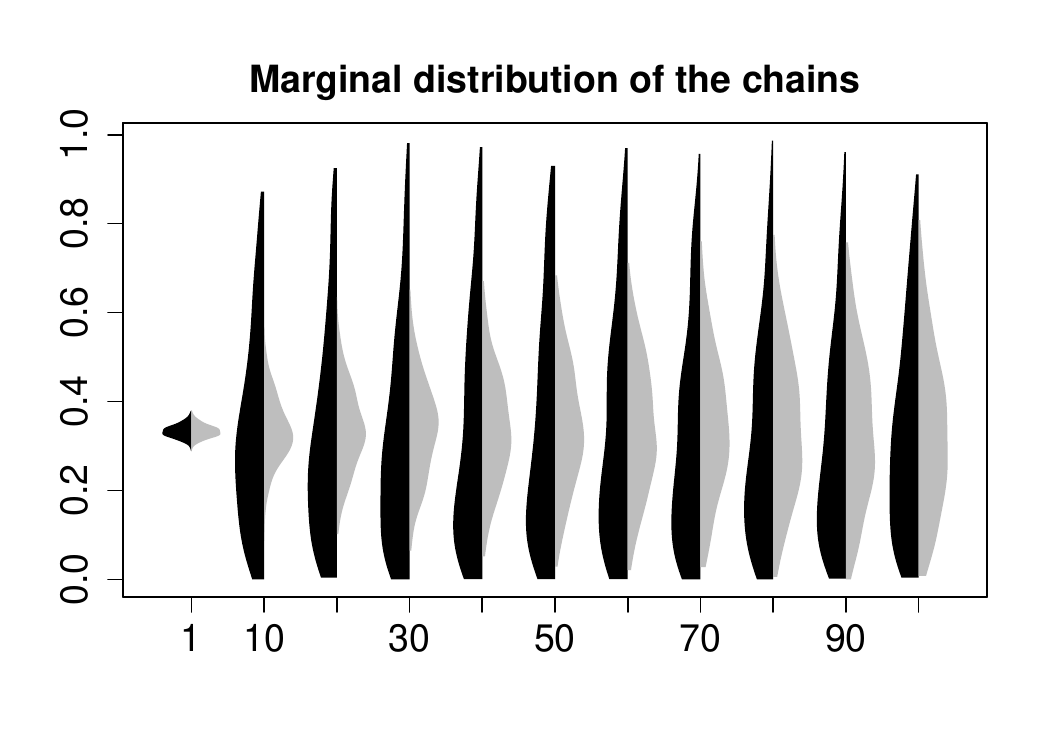}\\
\includegraphics[width=.32\textwidth]{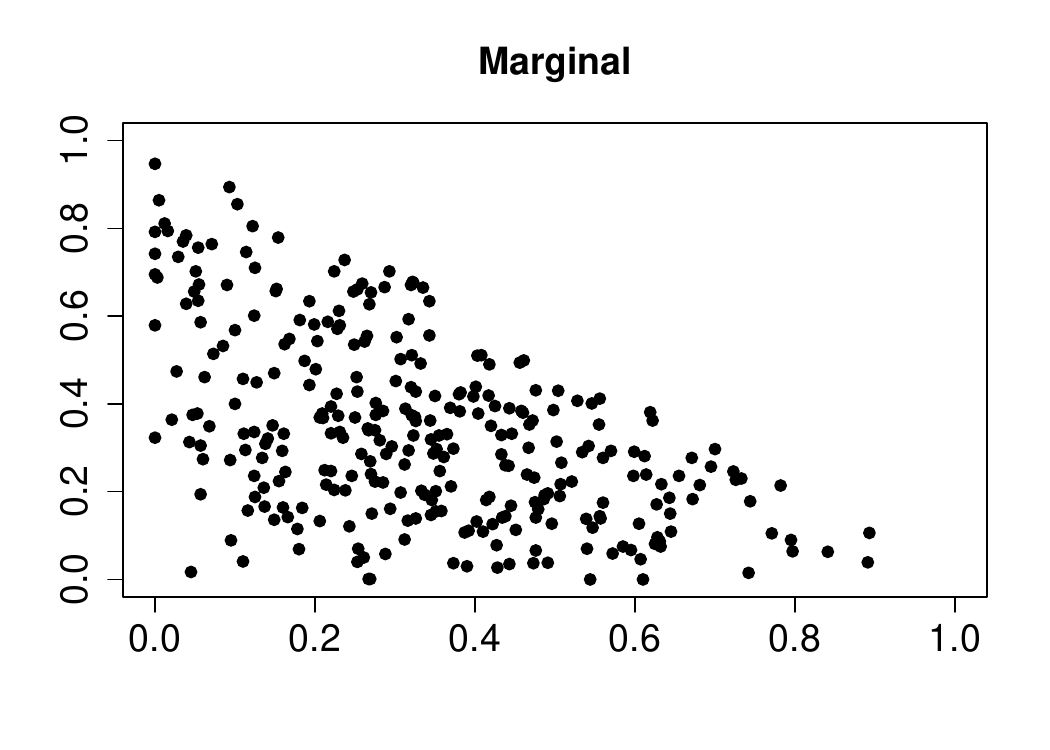} \,
\includegraphics[width=.32\textwidth]{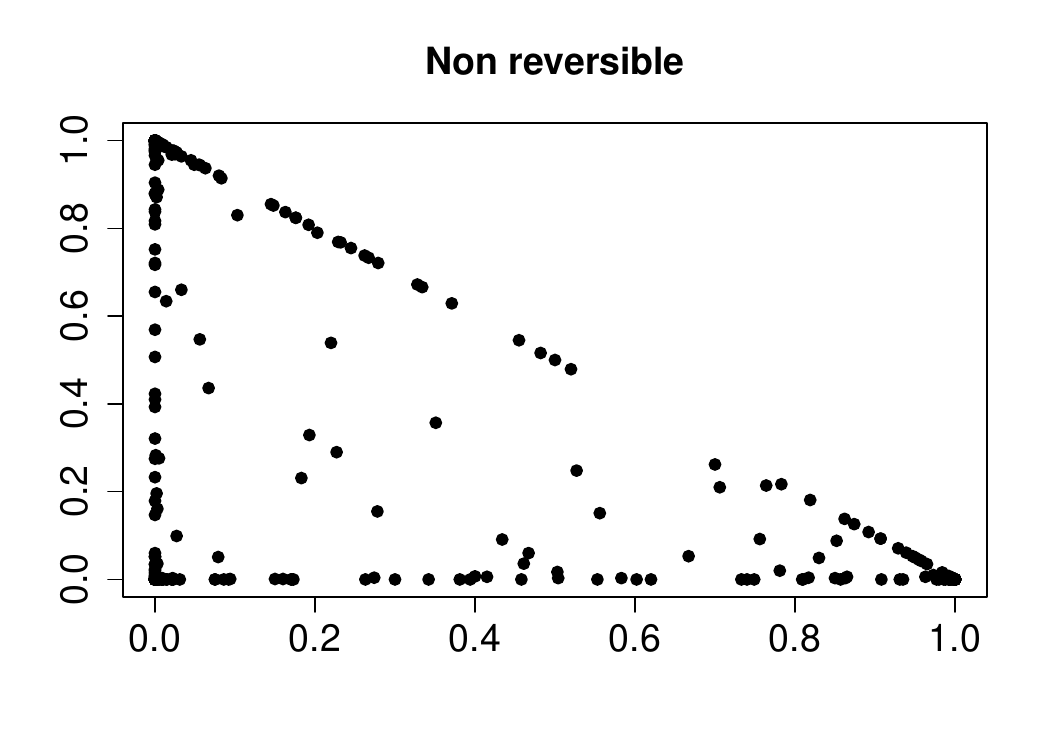} \,
\includegraphics[width=.32\textwidth]{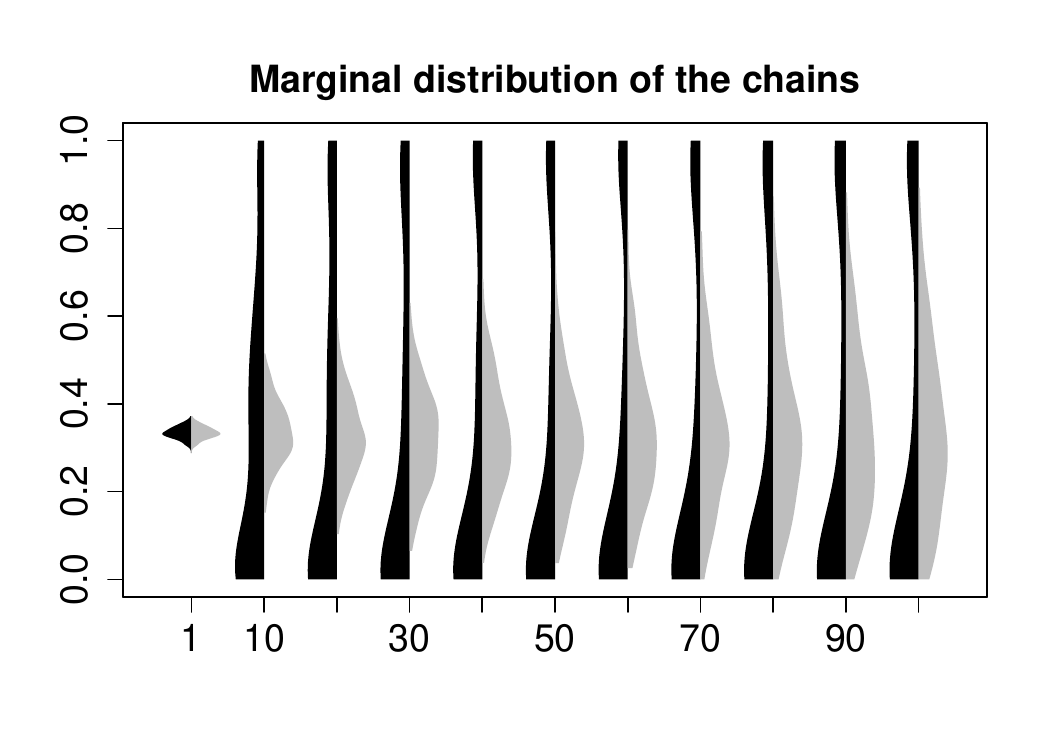}
 \caption{\small{
Left and center column: plot of the proportions of the first two components in the last of $100$ iterations (after a thinning of size $n$) over $300$ independent runs for $\PMG$ (left) and $\PNR$ (center). Right column: plot of the marginal distribution of the proportion of the first component at every $10$ iterations (after thinning) for $\PMG$ (gray) and $\PNR$ (black). The rows refer to $\alpha = 1$ and $\alpha = 0.1$ and the target distribution is given by the posterior of model \eqref{eq:original_model}, with $f_\theta(y)$ as in \eqref{eq:normal_kernel}, $\mu_0 = 0$ and $\sigma^2 = \sigma^2_0 = 1$.
  }}
 \label{fig:posterior}
\end{figure}

\subsection{A high dimensional example}
We now consider a higher dimensional version of the previous setting, where
\begin{equation}\label{eq:normal_kernel_high_dim}
f_\theta(y) = N(y \mid \theta, \sigma^2_pI_p), \quad p_0(\theta) = N(\theta \mid \mu_0, \sigma^2_{0}I_p),
\end{equation}
where now $y \in \R^p$ and $\theta \in \R^p$ with $p\geq 1$. We rescale the likelihood variance as $\sigma_p^2 = cp$ which guarantees that
\[
\frac{1}{\sigma^2_p}\sum_{j = 1}^p\left(\theta_{1j} - \theta_{2j} \right)^2 = \sO(1).
\]
In other words, we ask that the distance across components, rescaled by the variance, does not diverge as $p$ grows: this implies that some overlap between components is retained and that the problem is statistically non-trivial (see e.g.\ \cite{chandra2023escaping} for more discussion of Bayesian mixture models with high-dimensional data).

We generate $500$ independent samples of size $n = 1000$ from model \eqref{eq:normal_kernel_high_dim} with $p = 18$, 
$K = 5$, $\mu_0 = 0$, $\sigma_0^2 = 0.5$, $c = 2$ and $\bm{\alpha} = (4, 1, \dots, 1)$. 
The data are generated as explained in the previous section and we run both $\PMG$ and $\PNR$, retaining only the last iteration for every chain: the initialization is again uniform at random. 

In Figure \ref{fig:high_dim} we plot the histograms of the last iteration for the proportion associated to the first component of $\PMG$ and $\PNR$ for $500$ independent runs. Comparing the latter with the prior density, given by a Dirichlet-Multinomial with parameters $(4,4)$ (approximately Beta$(4, 4)$), it is evident that the non-reversible scheme is able to forget the initialization while the reversible is not. Indeed, as also clear from the right plot of Figure \ref{fig:high_dim}, the marginal distribution of $\PMG$ significantly underestimates the size of the first cluster after $T=100\times n$ iterations.

\begin{figure}[h]
\centering
\includegraphics[width=.32\textwidth]{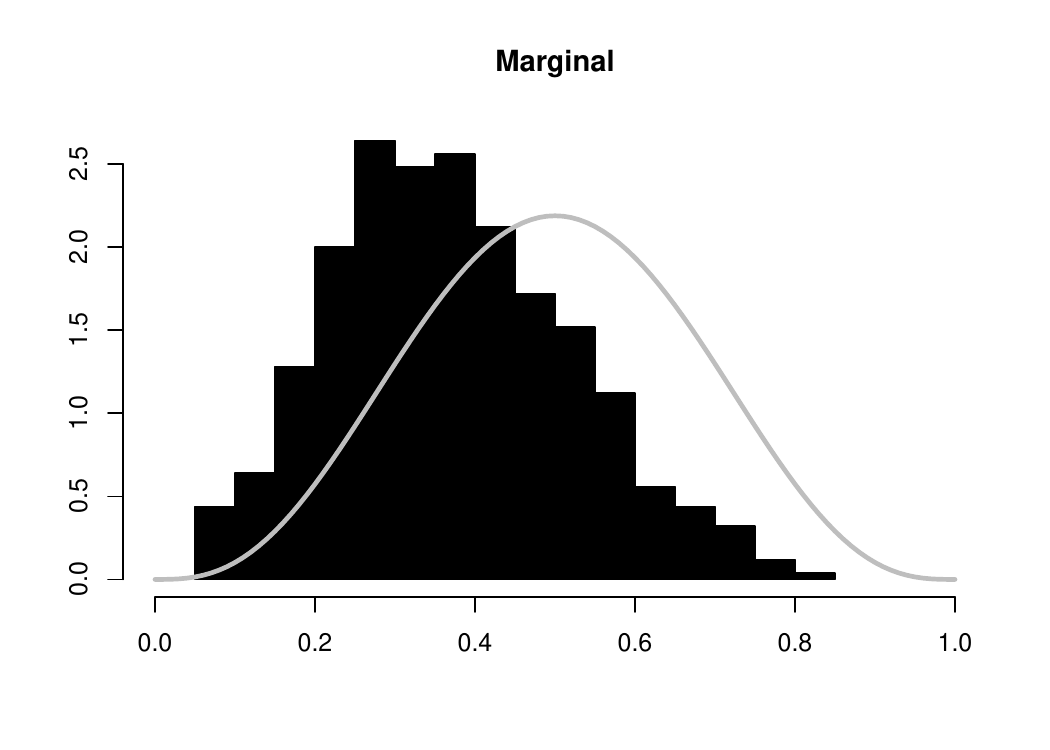} \,
\includegraphics[width=.32\textwidth]{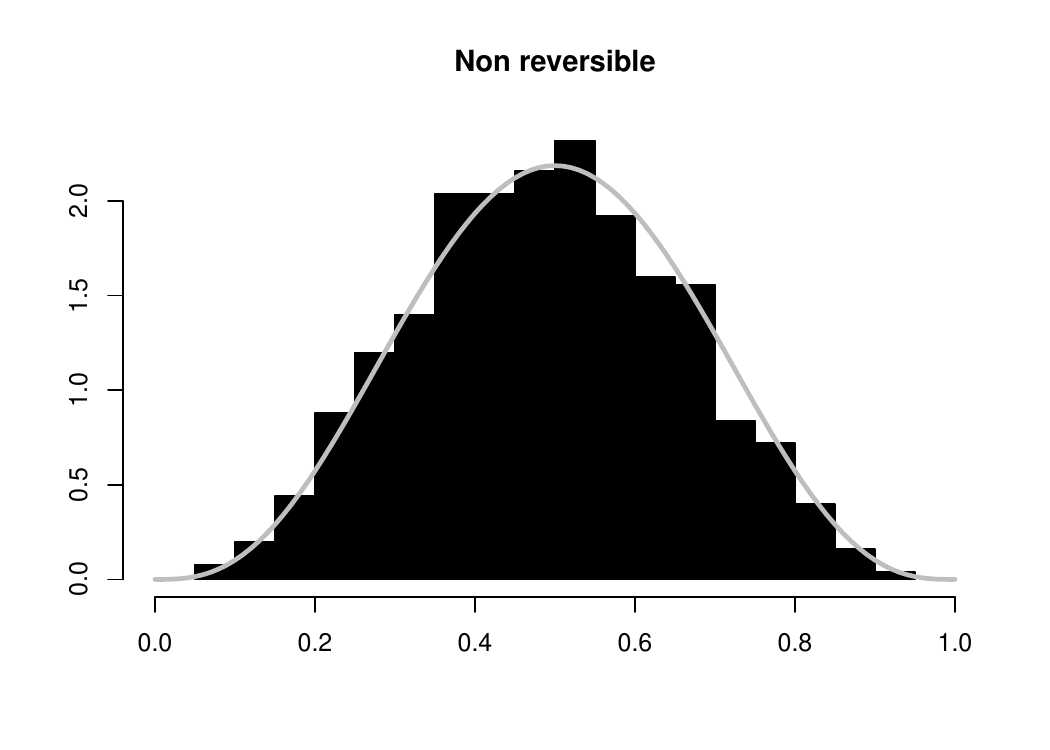} \,
\includegraphics[width=.32\textwidth]{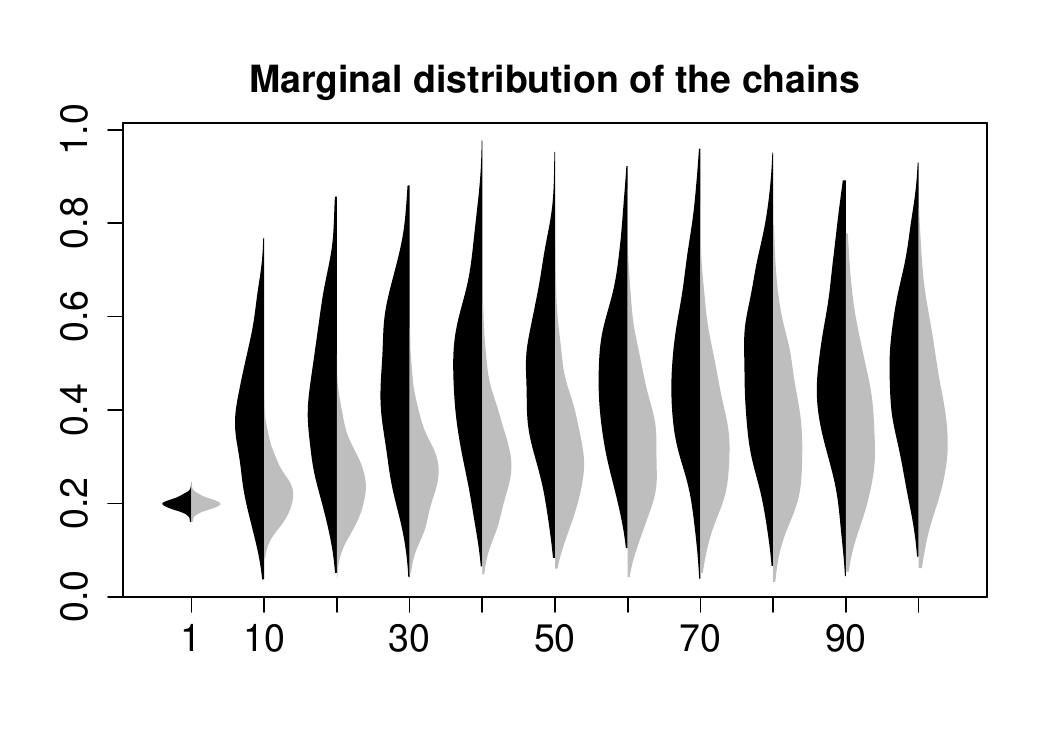}
  \caption{\small{
Left and center column: histogram of the proportion of the first component in the last of $100$ iterations (after a thinning of size $n$) over $500$ independent runs for $\PMG$ (left) and $\PNR$ (center). The gray line corresponds to the density of a \ Beta$(4, 4)$. Right column: plot of the marginal distribution of the chains at every $10$ iterations (after thinning) for $\PMG$ (gray) and $\PNR$ (black). The target distribution is given by the posterior of model \eqref{eq:original_model}, with $f_\theta(y)$ as in \eqref{eq:normal_kernel_high_dim}, $p = 18$, $K = 5$, $\mu_0 = 0$, $\sigma_0^2 = 0.5$, $c = 2$ and $\bm{\alpha} = (4, 1, \dots, 1)$.
  }}
 \label{fig:high_dim}
\end{figure}

\subsection{Overfitted setting}\label{sec:overfitting}

Finally, we consider an overfitted case, previously discussed in Section \ref{sec:specificities}.
We take a one-dimensional Gaussian kernel as in \eqref{eq:normal_kernel} and take $\alpha_k = \alpha$ for all $k\in\{1,\dots,K\}$. 
In this setting, using the notation of Section \ref{sec:specificities}, \citet[Thm.1]{rousseau2011asymptotic} implies that
\begin{enumerate}
\item if $\alpha > 1/2$, then more than $K^*$ atoms have non-negligible mass, i.e.\, multiple atoms are associated to the same ``true'' component,
\item if $\alpha \leq 1/2$, then the posterior concentrates on configurations with exactly $K^*$ components, up to $n^{-1/2}$ posterior mass.
\end{enumerate}
We take $K = 2$ and $K^* = 1$, with $Y_i \simiid N(y \mid 2, 1)$ and $n = 1000$. The first two columns of Figure \ref{fig:overfitted} plot the histogram of the proportion of the first component after $T=100\times n$ iterations (and thinning of size $n$) for $\alpha = 1$ (top row) and $\alpha = 0.1$ (bottom row). The two algorithms are initialized according to the ``incorrect'' scenario, i.e.\ all the observations in the first component in the first row and uniformly at random in the bottom row. The figure illustrates that only $\PNR$ is able to reach the high probability region: this means that, despite its locality, the persistence of $\PNR$ allows for significantly faster traveling across the space. On the contrary, $\PMG$ remain stuck in the initial configuration (which yields a similar likelihood) for both the scenarios. This is also confirmed by the right column, which depicts the marginal distribution of the chains: after few iterations, the distribution associated to $\PNR$ stabilizes and yields the correct behaviour.

\begin{figure}[h]
\centering
\includegraphics[width=.31\textwidth]{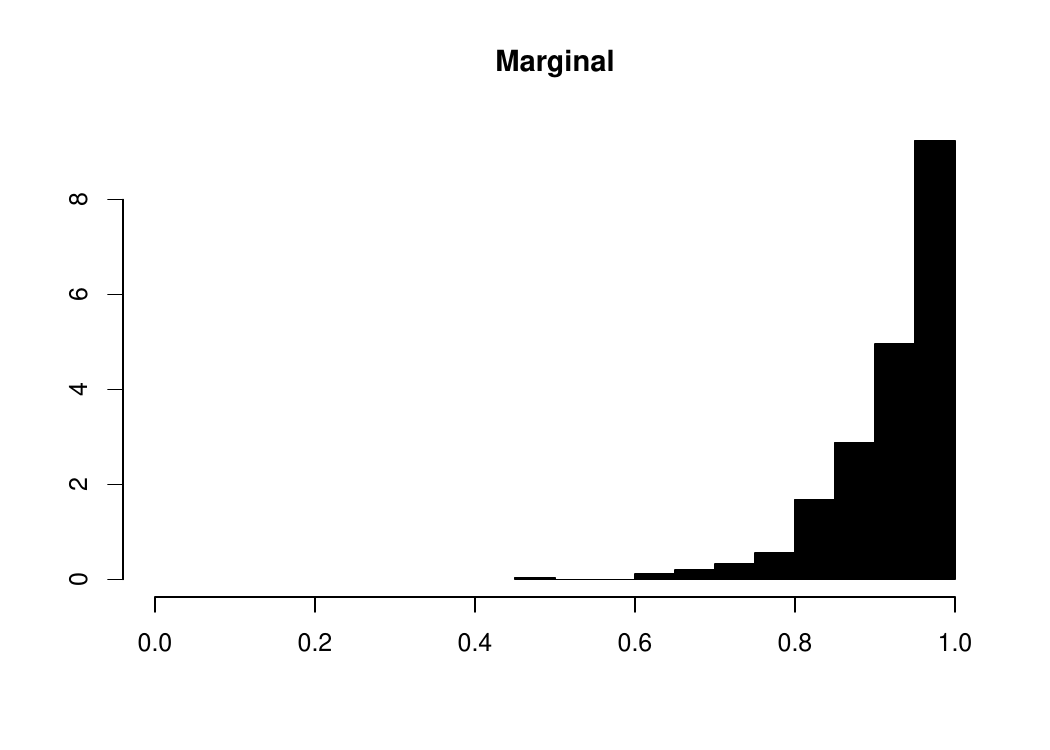} \quad
\includegraphics[width=.31\textwidth]{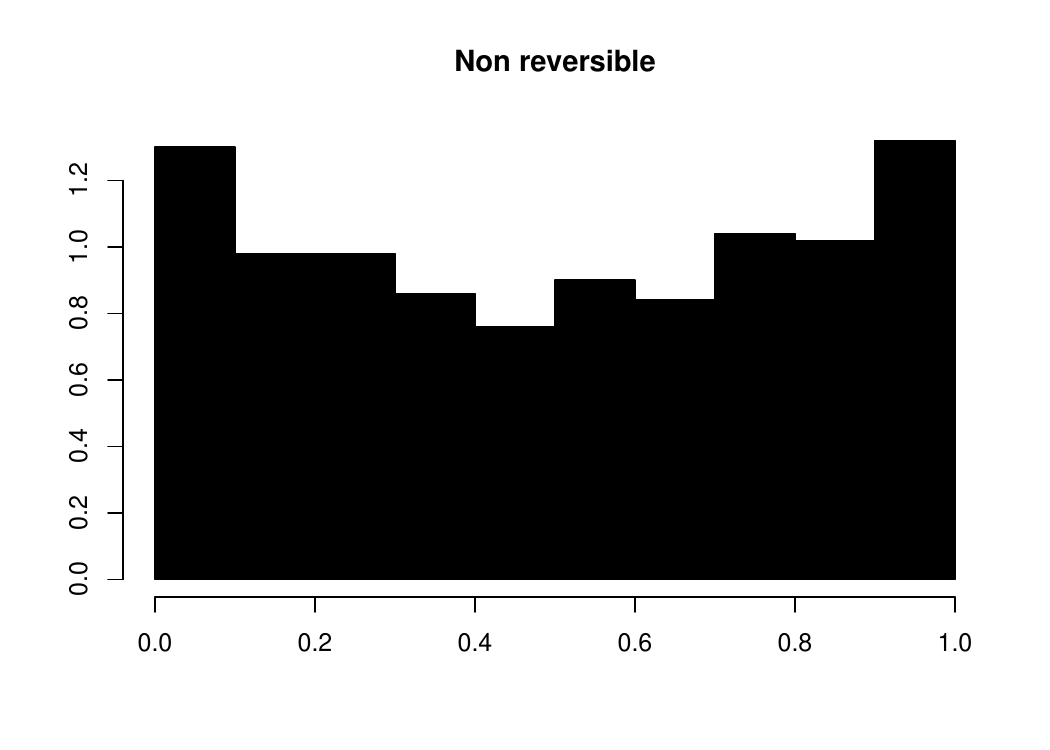} \quad
\includegraphics[width=.31\textwidth]{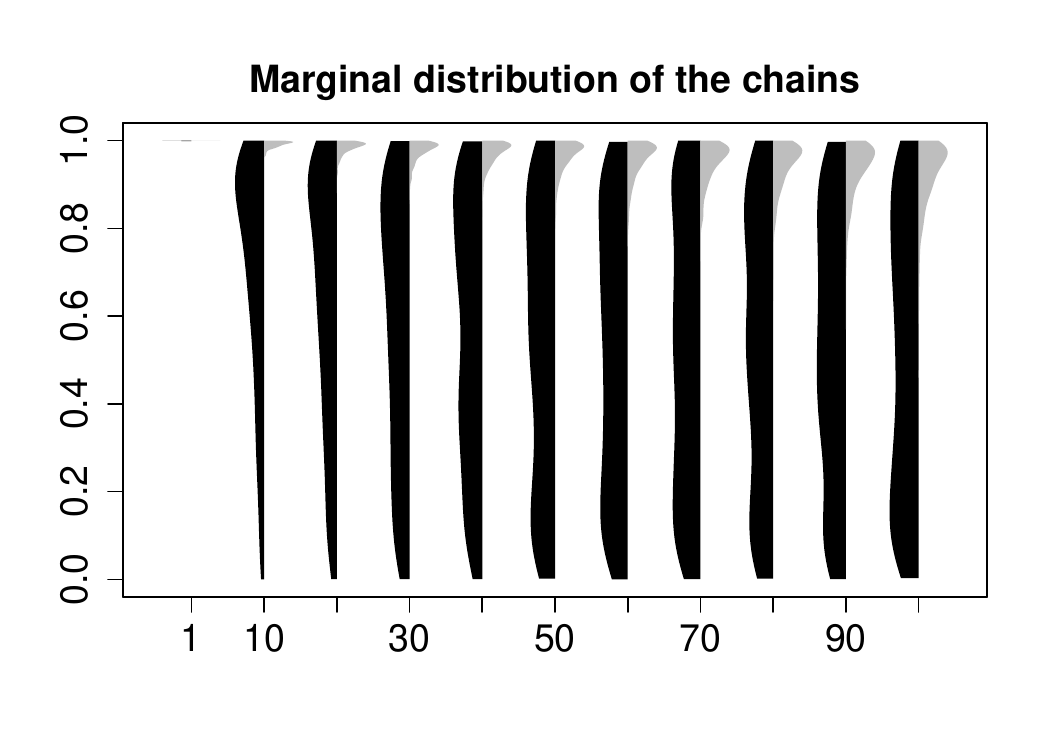}\\
\includegraphics[width=.31\textwidth]{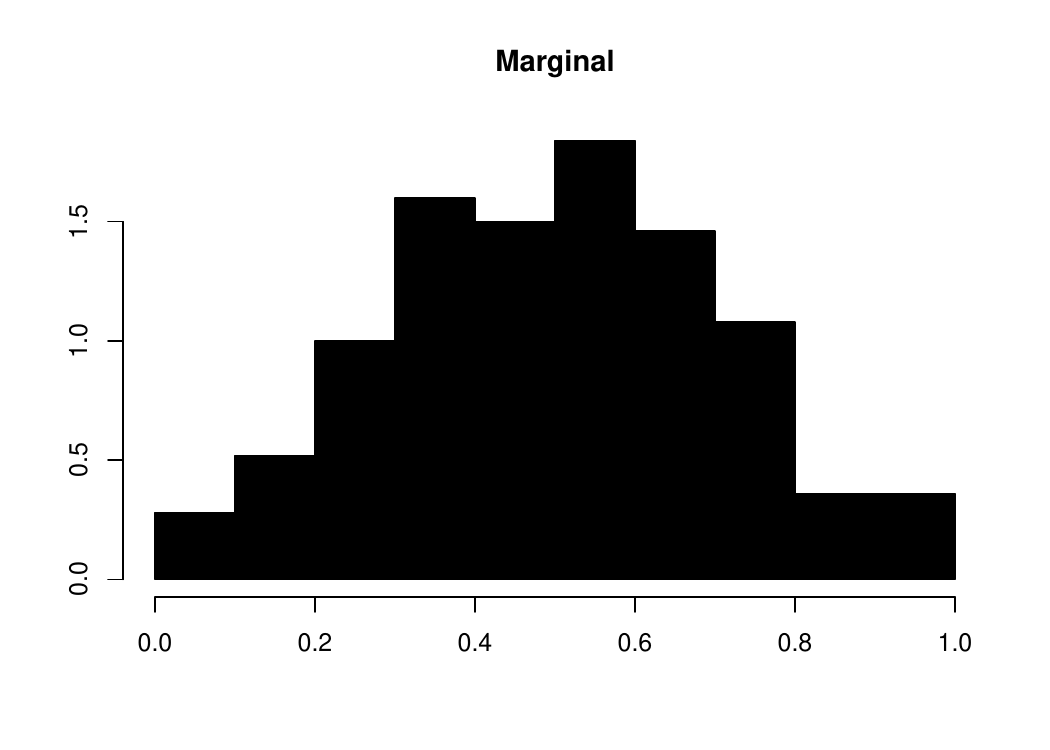} \quad
\includegraphics[width=.31\textwidth]{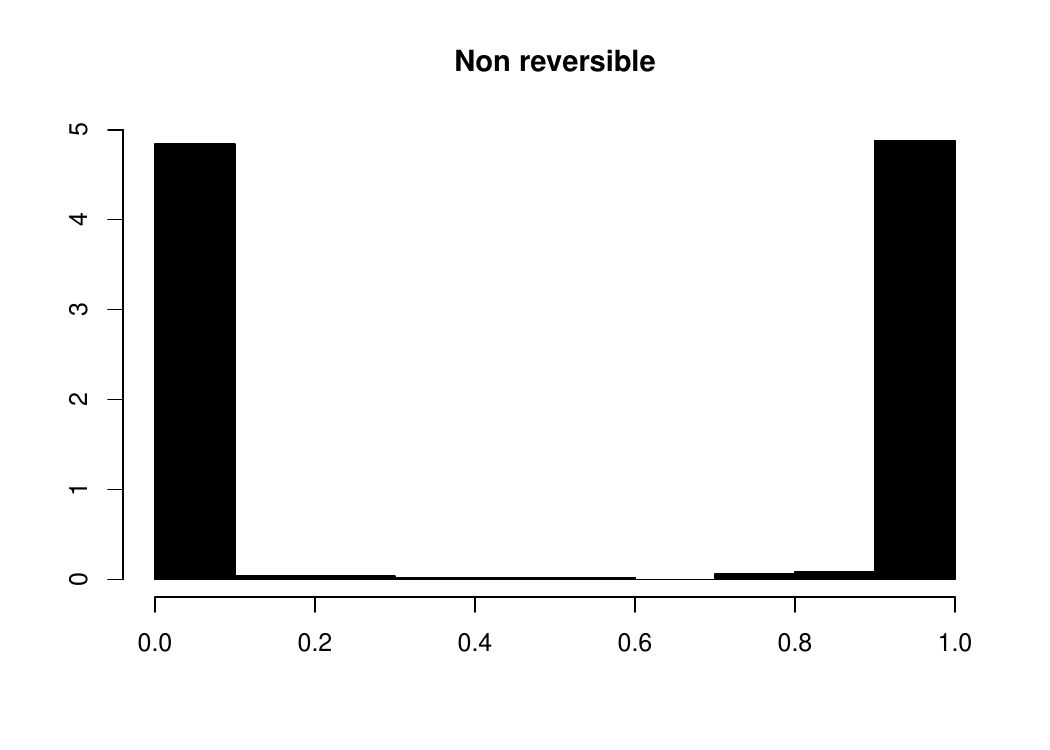} \quad
\includegraphics[width=.31\textwidth]{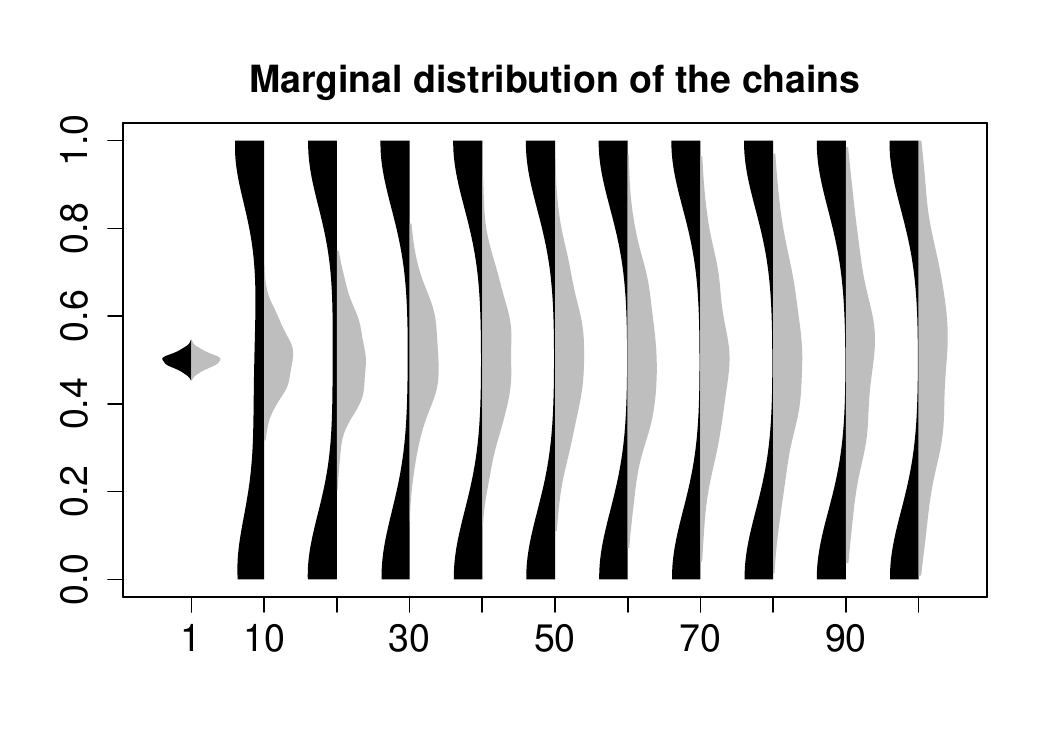}
 \caption{\small{
Left and center column: histogram of the proportion of the first component in the last of $100$ iterations (after a thinning of size $n$) over $300$ independent runs for $\PMG$ (left) and $\PNR$ (center). Right column: plot of the marginal distribution of the proportion of the first component at every $10$ iterations (after thinning) for $\PMG$ (gray) and $\PNR$ (black). First row: $\alpha = 3/2$ and initialization uniformly at random. Second row: $\alpha = 0.1$ and $c_i^{(0)} = 1$ for every $i$. The target distribution is given by the posterior of model \eqref{eq:original_model}, with $Y_i \simiid N(y \mid 2, 1)$ and $f_\theta(y)$ as in \eqref{eq:normal_kernel}, $\mu_0 = 0$ and $\sigma^2 = \sigma^2_0 = 1$.
  }}
 \label{fig:overfitted}
\end{figure}

\section{Discussion}\label{sec:discussion}
In this work we introduced a novel, simple and effective non-reversible MCMC sampler for mixture models, which enjoys three favourable features: (i) it is a simple modification of the original marginal scheme of Algorithm \ref{alg:PMG}, (ii) its performance cannot be worse than the reversible chain by more than a factor of four (Theorem \ref{thm:asymp_variances}), (iii) it is shown 
to drastically speed-up convergence in various scenarios of interest.

Both the theory and methodology presented in this work could be extended in many interesting directions, and we now discuss some of those, starting from algorithmic and methodological ones.
First, in the current formulation of Algorithm \ref{alg:PNR}, the pair of clusters to update and the observation to move are selected with probabilities that do not depend on the actual observations within the clusters (except for their sizes). A natural extension would be to consider informed proposal distributions, as in e.g.\ \cite{zanella2020informed, power2019accelerated, gagnon2024theoretical}: we expect this to lead to a potentially large decrease of the number of iterations needed for mixing, but with an additional cost per iteration. We leave the discussion and exploration of this tradeoff to future work. Second, one could also consider schemes that adaptively modify the probabilities $p_c(k,k')$ in \eqref{eq:prob_components} in order to propose more often clusters with higher overlap (or higher acceptance rates of proposed swaps), thus reducing computational waste associated to frequently proposing swaps across clusters with little overlap.

From the theoretical point of view, it would be highly valuable to extend the scaling limit analysis to the posterior case. While interesting, we expect this to require working with measure-valued processes and, more crucially, to require significant work in combining the MCMC analysis part with currently available results about posterior asymptotic behaviour of mixture models \citep{nguyen2013convergence, guha2021posterior}.

In this paper we stick to the case of a fixed number of components. A natural generalization regards the case of $K$ random or infinite (e.g.\ Dirichlet process mixtures, see \cite{Ferguson1973, Lo1984}). This presents additional technical difficulties that we leave to future work: for example, since no upper bound is available on the number of components, it would be more natural to define a Markov chain over the full space of partitions of $[n]$. Finally, mixture models are an instance of the broader framework of latent class models \citep{goodman1974exploratory} and it would be interesting to explore the effectiveness of the methodology developed here in such broader settings.

\bibliographystyle{chicago}
\bibliography{bibliography}

\begin{appendix}

\section{Comparison between $\PNR$ and $\QNR$}\label{sec:comparison_app}

In this section we consider the same setting of Section \ref{sec:sim_prior_case}, where the target distribution is given in \eqref{eq:marg_priori}. We run both $\PNR$ and $\QNR$ (with $s = 1$) for $300$ independent trials with initialization uniformly at random. We consider $n = 1000$, $K = 3, 10, 20 , 50$ and $\bm{\alpha} = (1, 1/(K-1), \dots, 1/(K-1)$, so that the marginal distribution on the proportion of the first component is a Dirichlet-Multinomial with parameters $(1,1)$ and thus close to a uniform distribution on $(0,1)$.

Figure \ref{fig:nonreversible comparison} plots the corresponding empirical marginal distribution obtained by the chains (black corresponds to $\PNR$ and gray to $\QNR$). Even if both schemes correctly reach stationarity, it seems that $\QNR$ yields slower mixing as $K$ increases: this is particularly evident in the case $K = 50$, where $\QNR$ remains close to the initial configuration.

\begin{figure}[h]
\centering
\includegraphics[width=.48\textwidth]{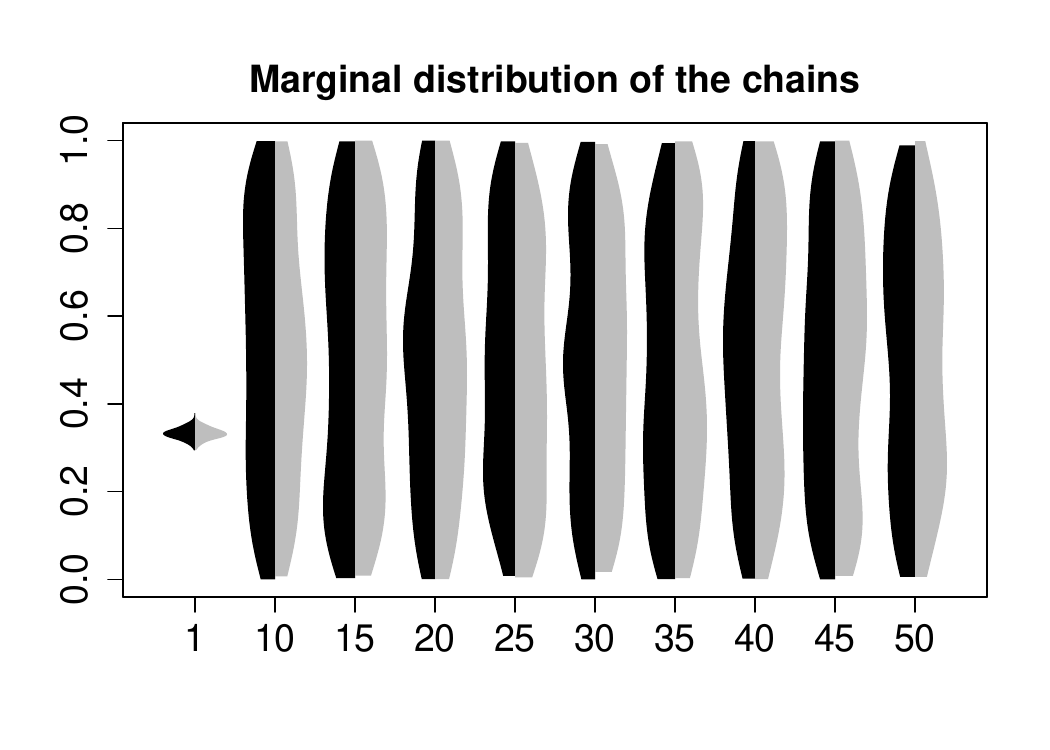} \,
\includegraphics[width=.48\textwidth]{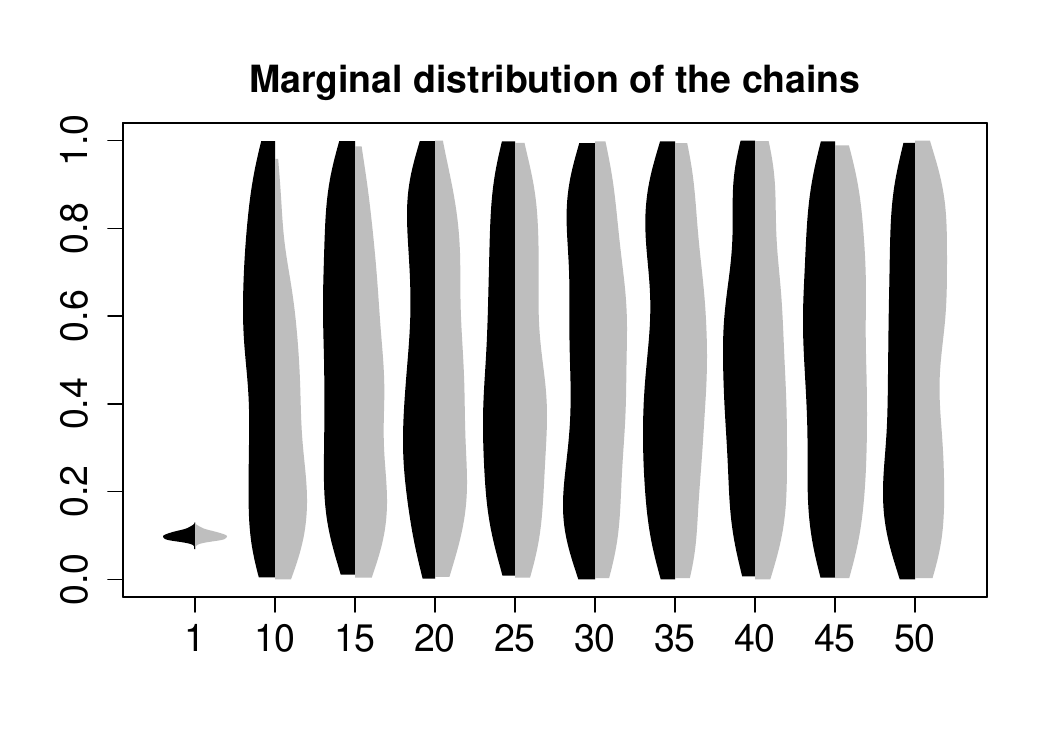}\\
\includegraphics[width=.48\textwidth]{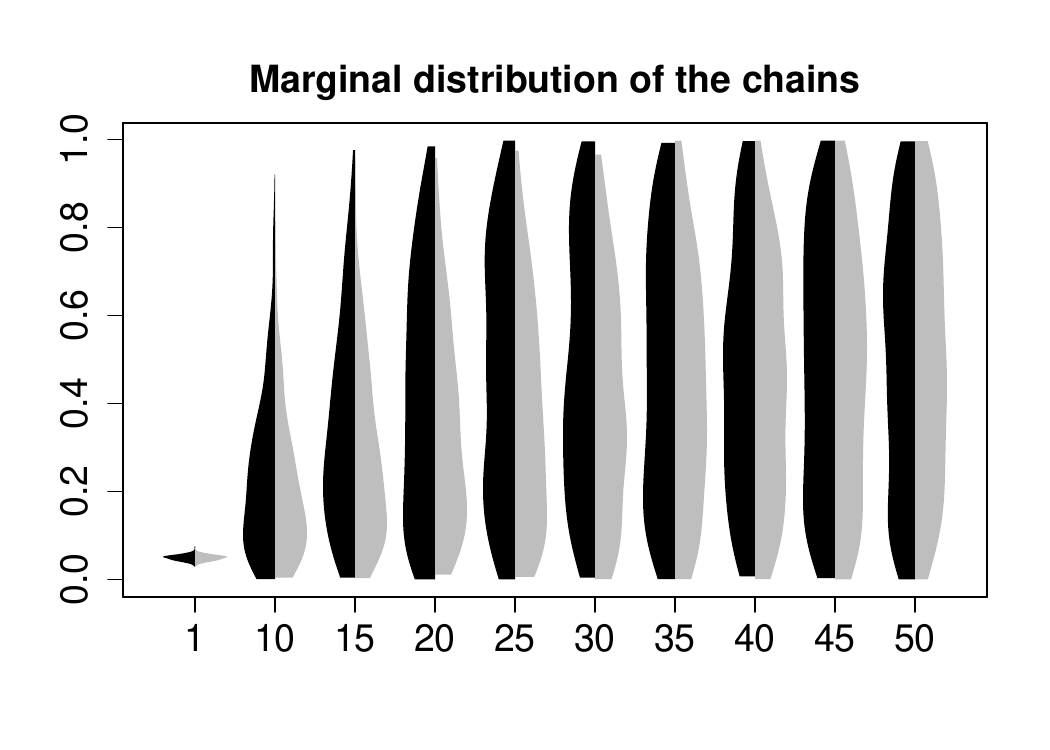} \,
\includegraphics[width=.48\textwidth]{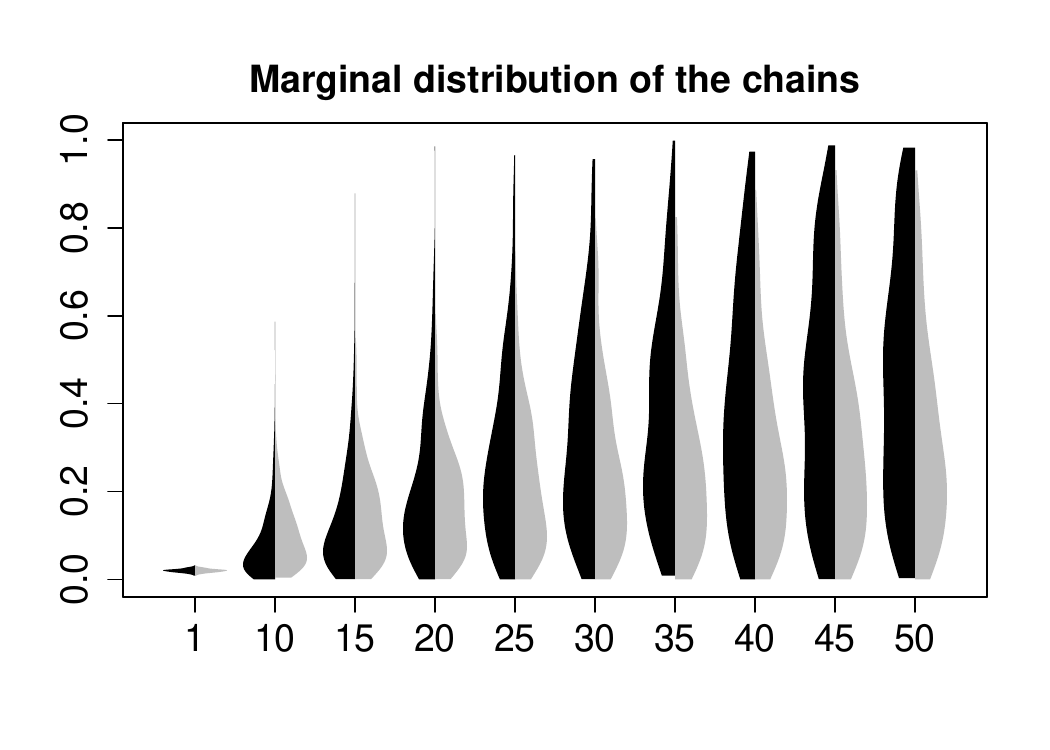}
 \caption{\small{
Plots of the marginal distribution of the proportion of the first component for the chains associated to $\PNR$ (black) and $\QNR$ (gray) at every. From top left to bottom right, the plots refer to $K = 3, 10, 20 , 50$, where the target distribution is as in \eqref{eq:marg_priori} with $\bm{\alpha} = (1, 1/(K-1), \dots, 1/(K-1)$ and $n = 1000$.
  }}
 \label{fig:nonreversible comparison}
\end{figure}

\section{Simulations for the Poisson kernel}\label{sec:poisson_app}

Here we consider model \eqref{eq:original_model} with $K = 3$ and
\begin{equation}\label{eq:poisson_kernel}
f_\theta(y) = \text{Po}(y \mid \theta), \quad p_0(\theta) = \text{Gamma}(\theta \mid \beta_1, \beta_2).
\end{equation}
It is easy to show that the predictive distribution reads
\[
p(Y_{n+1} = y \mid Y) = \frac{\Gamma(\beta_1 + \sum_{i = 1}^{n}Y_i + y)}{\Gamma(\beta_1 + \sum_{i = 1}^{n}Y_i)\Gamma(y + 1)}\frac{(n+\beta_2)^{\beta_1 + \sum_{i = 1}^{n}Y_i}}{(n+\beta_2 + 1)^{\beta_1 + \sum_{i = 1}^{n}Y_i + y}}.
\]
We consider $\beta_1 = \beta_2 = 1$ and we draw $300$ independent samples from the model above with $n = 1000$, following the same procedure illustrated in Section \ref{sec:simulation_normal}. For each dataset we run Algorithms \ref{alg:PMG} and \ref{alg:PNR}, initialized uniformly at random, and we retain only the last iteration. 

The results of the simulations are similar to the ones of Section \ref{sec:simulation_normal}, as shown in Figure \ref{fig:posterior_poisson}: again the non-reversible scheme is much closer to the prior distribution, while $\PMG$ remains close to the initial configuration.

\begin{figure}[h]
\centering
\includegraphics[width=.32\textwidth]{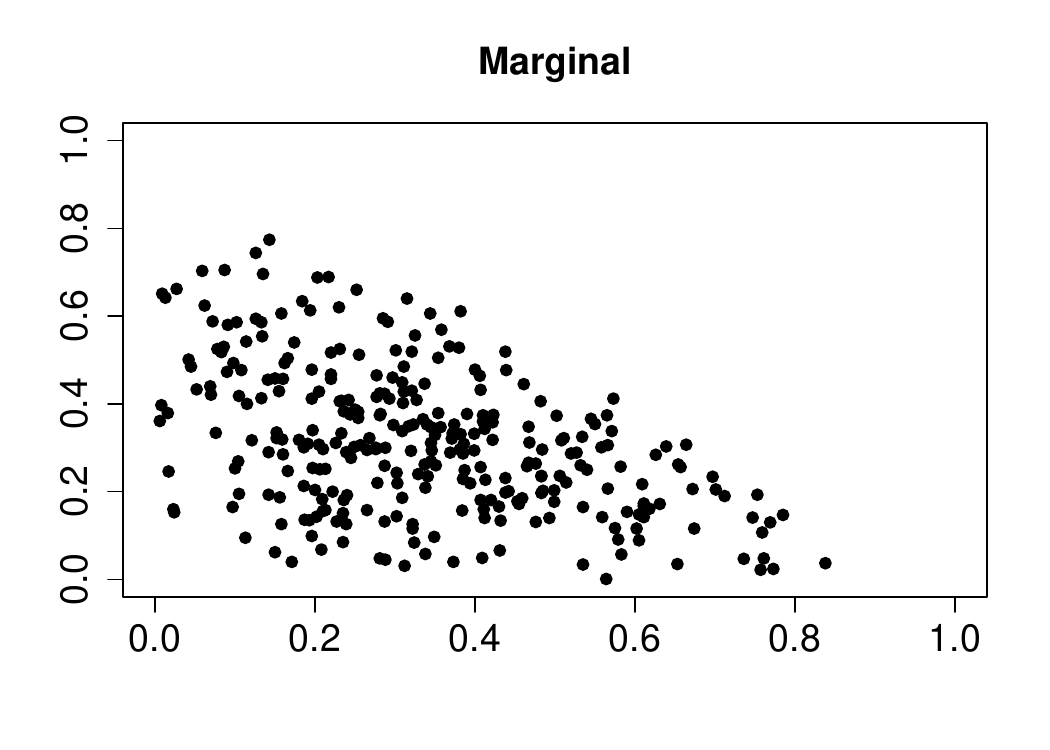} \,
\includegraphics[width=.32\textwidth]{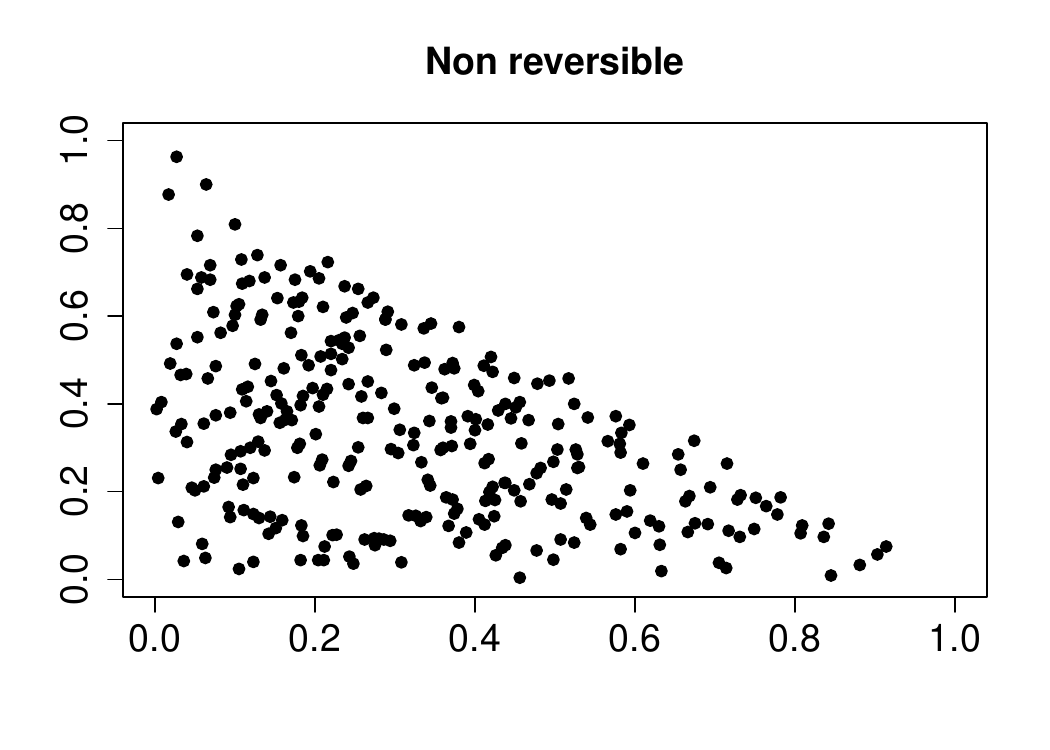} \,
\includegraphics[width=.32\textwidth]{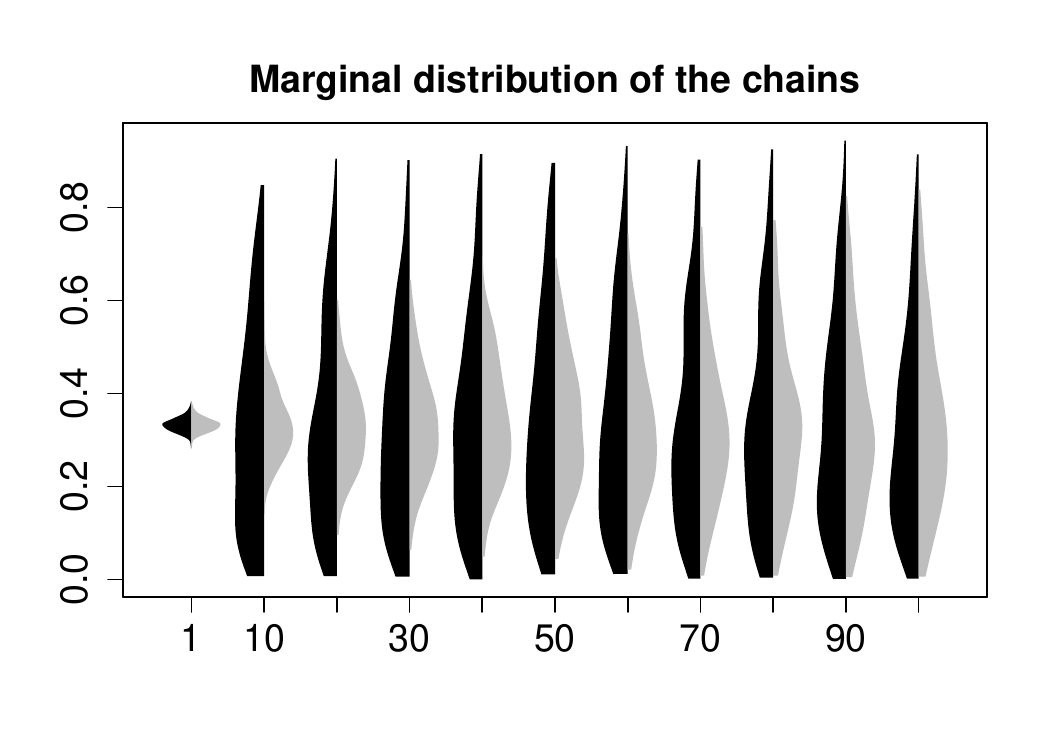}\\
\includegraphics[width=.32\textwidth]{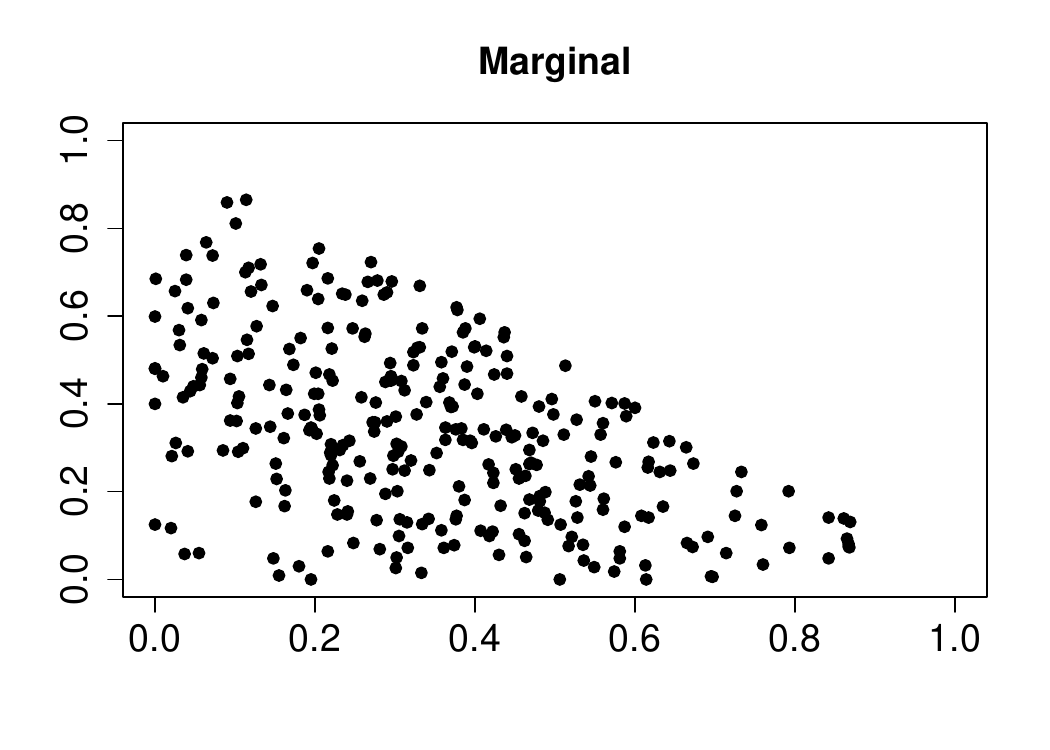} \,
\includegraphics[width=.32\textwidth]{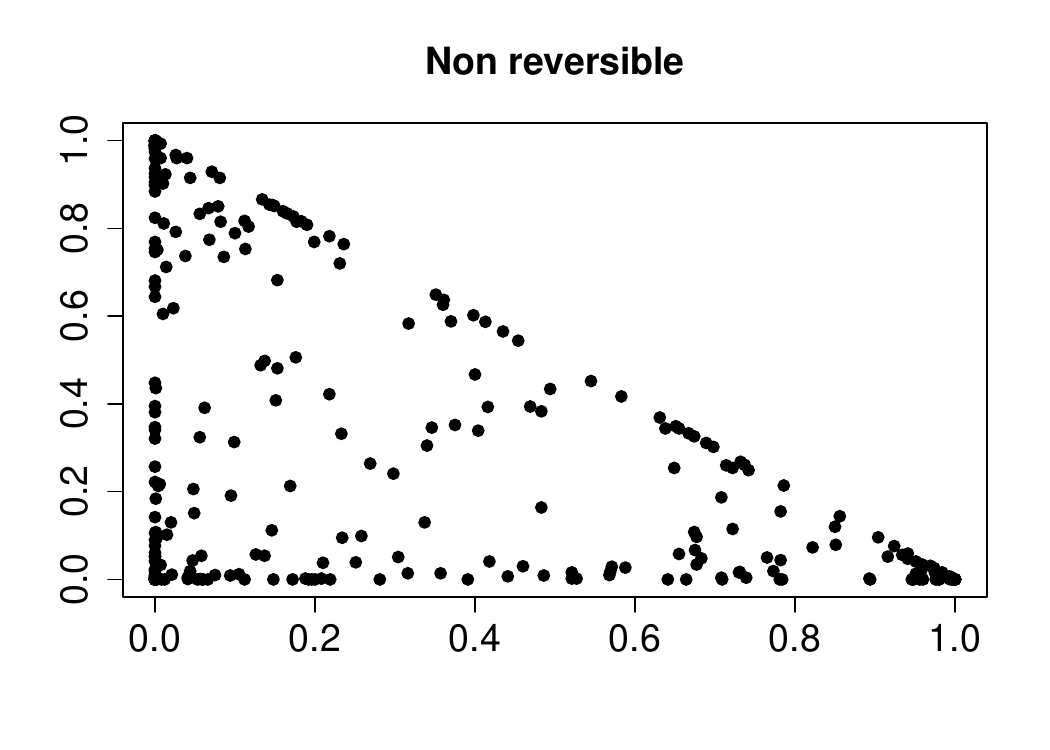} \,
\includegraphics[width=.32\textwidth]{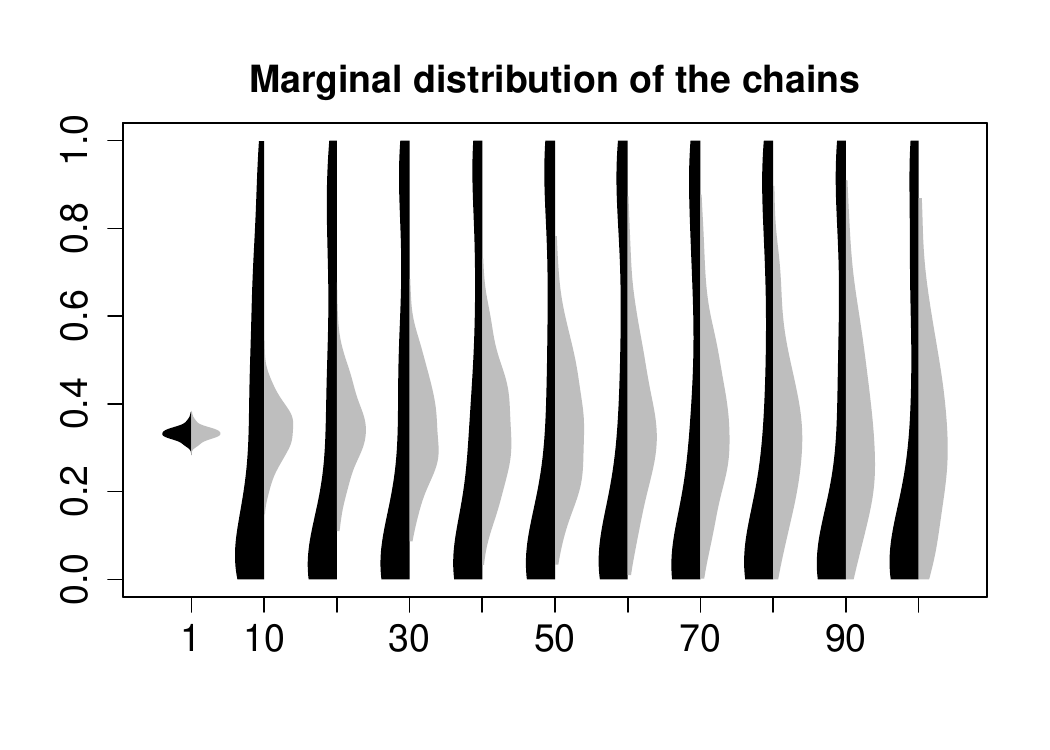}
 \caption{\small{
Left and center column: plot of the proportions of the first two components in the last of $100$ iterations (after a thinning of size $n$) over $300$ independent runs for $\PMG$ (left) and $\PNR$ (center). Right column: plot of the marginal distribution of the proportion of the first component at every $10$ iterations (after thinning) for $\PMG$ (gray) and $\PNR$ (black). The rows refer to $\alpha = 1$ and $\alpha = 0.1$ and the target distribution is given by the posterior of model \eqref{eq:original_model}, with $f_\theta(y)$ as in \eqref{eq:poisson_kernel} and $\beta_1 = \beta_2 = 1$.
  }}
 \label{fig:posterior_poisson}
\end{figure}

\section{Proofs}

\subsection{General results about lifting and mixtures}
In order to prove results below, especially Theorem \ref{thm:asymp_variances},  we first need to generalize some classical results about lifting of Markov chains (see e.g.\ \citealp{chen2013accelerating, bierkens2016non, andrieu2021peskun}) to our mixture case, which can be seen as a way to construct `multi-dimensional' lifted chains.
We will make use of the following classical lemma, which for example follows by results in \cite{chen2013accelerating} as detailed below.
\begin{lemma}\label{lemma:symmetrization}
Let $\mu$ be a probability distribution on a finite space $\sX$, $P$ a $\mu$-invariant and irreducible Markov transition matrix, $P^*$ the $\mu$-adjoint of $P$ and $K=(P+P^*)/2$. Then 
$\Var(g, P)\leq \Var(g, K)$ for all $g:\sX\to\R$.
\end{lemma}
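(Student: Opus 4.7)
The plan is to pass through the Poisson equation, isolate the symmetric part $K$, and conclude by a single Cauchy--Schwarz step with respect to the positive form induced by $I-K$.

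First I would let $\bar g = g - \mu(g)$ and, using irreducibility of $P$ on the finite space $\sX$, define $h = (I-P)^{-1}\bar g$ and $h_K = (I-K)^{-1}\bar g$ on $L^2_0(\mu) := \{f \in L^2(\mu) : \mu(f)=0\}$. A standard summation of the covariance series at stationarity (using $\sum_{t\ge 0} P^t = (I-P)^{-1}$ on $L^2_0(\mu)$, which converges in the finite-state setting once aperiodicity is in hand) gives
\begin{equation*}
\Var(g, P) + \Var_\mu(g) = 2\langle \bar g, h\rangle_\mu, \qquad \Var(g, K) + \Var_\mu(g) = 2\langle \bar g, h_K\rangle_\mu .
\end{equation*}

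Next I would decompose $P = K + A$ with $A = (P - P^*)/2$ anti-self-adjoint in $L^2(\mu)$, so that $\langle h, Ah\rangle_\mu = 0$. This produces the key identity
\begin{equation*}
\langle \bar g, h\rangle_\mu = \langle (I-P)h, h\rangle_\mu = \langle (I-K)h, h\rangle_\mu ,
\end{equation*}
which in particular gives $\langle \bar g, h\rangle_\mu \ge 0$, since $K$ is self-adjoint with spectrum in $[-1,1]$ and hence $I-K$ is positive semi-definite. The analogous identity $\langle \bar g, h_K\rangle_\mu = \langle (I-K)h_K, h_K\rangle_\mu$ is immediate.

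Finally I would apply Cauchy--Schwarz to the positive form $\langle \cdot, (I-K)\cdot\rangle_\mu$:
\begin{align*}
\langle \bar g, h\rangle_\mu
&= \langle (I-K)^{1/2}h, (I-K)^{-1/2}\bar g\rangle_\mu \\
&\leq \sqrt{\langle h, (I-K)h\rangle_\mu}\,\sqrt{\langle \bar g, (I-K)^{-1}\bar g\rangle_\mu} \\
&= \sqrt{\langle \bar g, h\rangle_\mu}\,\sqrt{\langle \bar g, h_K\rangle_\mu} .
\end{align*}
Squaring and dividing by $\langle \bar g, h\rangle_\mu$ (or noting the conclusion is trivial otherwise) gives $\langle \bar g, h\rangle_\mu \leq \langle \bar g, h_K\rangle_\mu$, which by the first step is exactly $\Var(g, P) \leq \Var(g, K)$.

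The only genuine subtlety is ensuring that $I-K$ is invertible on $L^2_0(\mu)$, i.e.\ that $K$ does not have $-1$ as an eigenvalue; for $P$ aperiodic this is automatic (e.g.\ since $P$ inherits a positive-diagonal entry through the reversibilisation), and otherwise one applies the argument to the lazy kernel $(1-\epsilon)K + \epsilon I$ and lets $\epsilon \downarrow 0$. Everything else reduces to the three short manipulations above.
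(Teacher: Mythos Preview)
Your argument is correct and self-contained, whereas the paper proceeds differently: it writes $P=K+Q$ with $Q=(P-P^*)/2$, verifies that $K$ is $\mu$-reversible, that $Q$ is $\mu$-antisymmetric with zero row sums, and then invokes Lemma~2 of \cite{chen2013accelerating} as a black box. Your route is essentially the variational argument that underlies that cited lemma: express the asymptotic variance through the Poisson solution $h=(I-P)^{-1}\bar g$ on $L^2_0(\mu)$, use anti-self-adjointness of $A=(P-P^*)/2$ to obtain $\langle \bar g,h\rangle_\mu=\langle (I-K)h,h\rangle_\mu$, and then apply Cauchy--Schwarz in the semi-inner product induced by $I-K$. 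The paper's version is shorter to write but defers the substance to an external reference; yours makes the mechanism explicit and is arguably more informative.

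One minor slip in your closing paragraph: the relevant condition for invertibility of $I-K$ on $L^2_0(\mu)$ is that $K$ does not have eigenvalue $+1$ there (not $-1$). This is automatic: since $K(x,y)\geq P(x,y)/2$ entrywise and $P$ is irreducible, $K$ is irreducible, so $1$ is a simple eigenvalue with constant eigenvector and is therefore absent from $L^2_0(\mu)$. No lazy-kernel limit is needed. Likewise, the identity $\Var(g,P)+\Var_\mu(g)=2\langle \bar g,(I-P)^{-1}\bar g\rangle_\mu$ holds for all irreducible finite chains (periodic or not) via the fundamental-matrix expression for the asymptotic variance, so you need not assume aperiodicity to justify the Poisson-equation representation.
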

\begin{proof}
Consider the decomposition
$P = K + Q$, $Q = \frac{1}{2}P-\frac{1}{2}P^*$.
By construction, $K$ is a $\mu$-reversible transition matrix. Moreover, by definition of adjoint we have that $Q$ is antisymmetric with respect to $\mu$, 
which means that 
$\mu(x)Q(x,y) = -\mu(y)Q(y,x)$ for all $x,y\in\sX$. Finally, for every $x\in \sX$ we have that
\[
\sum_{y \in \sX}Q(x, y) = \frac{1}{2}\sum_{y \in \sX}P(x, y)-\frac{1}{2}\sum_{y \in \sX}P^*(x, y) = 0
\]
and thus each row of $Q$ sums up to zero. Therefore by Lemma $2$ in \cite{chen2013accelerating} we have that $\Var(g, P) = \Var(g, K + Q) \leq \Var(g, K)$ for all $g:\sX\to\R$.
\end{proof}

\subsubsection{Result with general notation}\label{sec:as_var_general}
Let $\pi$ be a probability distribution on a finite space $\C$. 
Let $D\in\N$, and 
$(K_{d,+1})_{d\in\{1,\dots,D\}}$ and $(K_{d,-1})_{d\in\{1,\dots,D\}}$ be Markov transition kernels on $\C$ such that 
\begin{align}\label{eq:single_coordinate_}
\pi(c)K_{d,+}(c, c') &= \pi(c')K_{d,-}(c', c) & \text{for all } c \neq c' \hbox{ and all }d=1,\dots,D\,.
\end{align}
Define the Markov transition kernel on $\C$ 
\begin{equation}\label{eq:symm_mixture_operator}
K_{R}(c, c') = \sum_{d = 1}^Dp_c(d)
K_d\left(c, c'\right)\,,
\end{equation}
where $K_d=(K_{d,+}+K_{d,-})/2$ and $p_c$ are weights such that $\sum_{d=1}^Dp_c(d)=1$ for all $c\in\C$ and
\begin{align}\label{eq:weights_condition}
p_c(d)&= p_{c'}(d)
& \text{if } K_{d}(c, c')>0\,.
\end{align}
Define the Markov transition kernel on $\C\times\{-1,1\}^D$
\begin{equation}\label{eq:lifted_mixture_operator}
K_{NR}((c,v), (c',v')) = \sum_{d = 1}^D p_c(d) \left(F_dK_{\text{lift},d}F_d\right)((c,v), (c',v')),
\end{equation}
where $F_d$ is the flipping operator defined as
\[
F_d((c,v), (c',v')) = \mathbbm{1}(c = c')\biggl[(1-\alpha)\mathbbm{1}(v = v') + \alpha \mathbbm{1}(v_{-d} = v'_{-d}, v'_d = -v_d) \biggr]
\]
for some fixed $\alpha\in[0,1]$ and
\begin{align}\label{eq:lifted_mixture_operator}
K_{\text{lift},d}((c,v), (c',v')) &= 
K_{d,v_d}\left(c, c'\right)Q_{d,c,c'}(v,v')
\end{align}
with
\begin{equation}\label{eq:lifted_mixture_operator_}
\begin{aligned}
Q_{d,c,c'}(v,v')=
\mathbbm{1}(v_{-d}=v'_{-d})&\biggl[\mathbbm{1}(c\neq c')\mathbbm{1}(v_d=v'_d)+\mathbbm{1}(c= c')\mathbbm{1}(v_d=-v'_d)
\biggr].
\end{aligned}
\end{equation}
Here $\alpha$ plays the role of a refresh rate. One could also think at the case $\alpha=0$ for simplicity, where $F_d$ becomes the identity operator and can thus be ignored.

\begin{lemma}\label{lemma:lifting}
Under \eqref{eq:single_coordinate_}-\eqref{eq:lifted_mixture_operator_}, we have that
\begin{itemize}
    \item[(a)] $K_R$ is $\pi$-reversible.
    \item[(b)] $K_{NR}$ is $\tilde{\pi}$-invariant, with $\tilde{\pi}(c,v)=\pi(c)2^{-D}$.
    \item[(c)]
    $\Var(\tilde{g}, K_{NR})\leq \Var(g, K_{R})$ for all $g:\C\times \{ -1, + 1\}^D\to\R$ and $\tilde{g}:\C\to\R$ such that $g(c,v)=\tilde{g}(c)$ for all $(c,v)\in\C\times \{ -1, + 1\}^D$.
\end{itemize}
\end{lemma}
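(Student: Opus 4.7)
The plan is to treat the three parts in order: (a) and (b) follow fairly directly from the skew-reversibility \eqref{eq:single_coordinate_} and the locally-constant weights condition \eqref{eq:weights_condition} with routine bookkeeping, while (c) is the main content and proceeds by reducing the non-reversible lifted kernel to its reversible projection onto the subspace of functions that depend on $c$ only. Throughout, I abuse notation and view a test function on $\mathcal{C}$ also as a function on $\mathcal{C}\times\{-1,+1\}^D$ that is constant in $v$.

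For part (a), I would first observe that each symmetrized kernel $K_d=(K_{d,+}+K_{d,-})/2$ is $\pi$-reversible: indeed \eqref{eq:single_coordinate_} yields $\pi(c)K_{d,+}(c,c')=\pi(c')K_{d,-}(c',c)$ and the $+/-$ swap, and summing gives detailed balance for $K_d$ on pairs $c\neq c'$. Reversibility of the mixture $K_R$ then follows because, whenever $K_d(c,c')>0$, condition \eqref{eq:weights_condition} gives $p_c(d)=p_{c'}(d)$, so $\pi(c)p_c(d)K_d(c,c')=\pi(c')p_{c'}(d)K_d(c',c)$. For part (b) I would use a modular approach: first verify $\tilde{\pi}$-invariance of each $K_{\mathrm{lift},d}$ by direct summation over $(c,v)$ (treating $c=c'$ and $c\neq c'$ separately and using skew-reversibility in the latter), then observe that $F_d$ is $\tilde{\pi}$-reversible since it only randomizes $v_d$. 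Hence $F_dK_{\mathrm{lift},d}F_d$ is $\tilde{\pi}$-invariant, and the state-dependent mixture $K_{NR}$ inherits $\tilde{\pi}$-invariance by the same trick as in (a).

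The crux is (c), which I would split into two steps. First, by Lemma~\ref{lemma:symmetrization}, $\Var(\tilde g,K_{NR})\leq \Var(\tilde g,\bar{K}_{NR})$, where $\bar{K}_{NR}:=(K_{NR}+K_{NR}^*)/2$ is the $\tilde{\pi}$-symmetrization of $K_{NR}$. Second, I would show that on functions depending only on $c$ the operator $\bar{K}_{NR}$ acts exactly as $K_R$, so that $\Var(\tilde g,\bar{K}_{NR})=\Var(g,K_R)$. For this I would establish the adjoint identity
\[
K_{NR}^*((c,v),(c',v'))=K_{NR}((c,-v),(c',-v')),
\]
i.e.\ the $\tilde{\pi}$-adjoint of $K_{NR}$ equals $K_{NR}$ with all velocities reversed --- the standard ``reverse-time equals flip velocities'' property of lifted kernels, verified by direct computation in the cases $c=c'$ and $c\neq c'$, invoking \eqref{eq:weights_condition} to move $p_c(d)$ across transitions. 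Then I would compute the $c$-marginal
\[
\sum_{v'}K_{NR}((c,v),(c',v'))=\sum_d p_c(d)\left[(1-\alpha)K_{d,v_d}(c,c')+\alpha K_{d,-v_d}(c,c')\right],
\]
which, combined with the adjoint identity, yields $\sum_{v'}\bar{K}_{NR}((c,v),(c',v'))=K_R(c,c')$ after the refresh parameter $\alpha$ cleanly cancels. Consequently $\bar{K}_{NR}\tilde g(c,v)=K_Rg(c)$, so the subspace of functions depending only on $c$ is invariant under $\bar{K}_{NR}$ and the restriction acts exactly as $K_R$. By induction $\bar{K}_{NR}^t\tilde g(c,v)=K_R^tg(c)$ for every $t\geq 0$, hence $\langle\tilde g,\bar{K}_{NR}^t\tilde g\rangle_{\tilde{\pi}}=\langle g,K_R^tg\rangle_\pi$, and since the asymptotic variance of a reversible chain is determined by these inner products we conclude $\Var(\tilde g,\bar{K}_{NR})=\Var(g,K_R)$.

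The main obstacle I anticipate is the bookkeeping inside the lifted kernel: carefully tracking which velocity component is active in each $K_{\mathrm{lift},d}$, separating self-loop cases (where $v_d$ flips due to rejection) from genuine transitions, and threading condition \eqref{eq:weights_condition} throughout to treat $p_c(d)$ as constant across individual transitions. Once the adjoint identity and the marginalization $\sum_{v'}\bar{K}_{NR}((c,v),(c',v'))=K_R(c,c')$ are in hand, the rest of part (c) is formal; the cancellation of $\alpha$ in the marginalization is the cleanest manifestation of why the refresh mechanism does not interfere with the base dynamics on the $c$-coordinate.
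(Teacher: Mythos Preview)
Your proposal is correct and follows the same overall architecture as the paper: parts (a) and (b) are handled identically, and for (c) both you and the paper symmetrize $K_{NR}$ via Lemma~\ref{lemma:symmetrization}, show that the symmetrization $\bar K_{NR}$ acts on functions of $c$ alone exactly as $K_R$, and deduce equality of asymptotic variances by induction on powers.

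The only genuine difference is in how the action of $\bar K_{NR}$ on $c$-only functions is established. The paper works componentwise: it computes $K_{\mathrm{lift},d}^*$ explicitly in the cases $c=c'$ and $c\neq c'$, obtains the closed form $\bar K_{NR,d}((c,v),(c',v'))=K_d(c,c')\,Q_{d,c,c'}(v,v')$, and then checks that $F_d\bar K_{NR,d}F_d$ applied to a $c$-only function returns $K_d\tilde g$. You instead establish the global velocity-flip identity $K_{NR}^*((c,v),(c',v'))=K_{NR}((c,-v),(c',-v'))$ and combine it with the $v'$-marginal $\sum_{v'}K_{NR}((c,v),(c',v'))=\sum_d p_c(d)[(1-\alpha)K_{d,v_d}+\alpha K_{d,-v_d}](c,c')$ to obtain $\sum_{v'}\bar K_{NR}((c,v),(c',v'))=K_R(c,c')$ directly, with $\alpha$ cancelling after averaging over the two velocity signs. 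Your route is arguably a bit cleaner conceptually (it isolates the standard ``time-reversal $=$ velocity-flip'' property of lifted chains and avoids displaying $\bar K_{NR,d}$ explicitly), while the paper's route is more concrete and makes the structure $\bar K_{NR,d}=K_d\otimes Q_d$ visible; both rely on \eqref{eq:weights_condition} at the same junctures and arrive at the same identity $\bar K_{NR}\tilde g=K_R g$.
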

\begin{proof}
Consider first part (a). 
By \eqref{eq:single_coordinate_}, for every $c\neq c'$ we have
\[
\begin{aligned}
2\pi(c)K_d(c, c') &= \pi(c)K_{d, +}(c, c') + \pi(c)K_{d, -}(c, c')\\
&= \pi(c')K_{d, -}(c', c) + \pi(c')K_{d, +}(c', c) = 2\pi(c')K_d(c', c).
\end{aligned}
\]
and thus by \eqref{eq:symm_mixture_operator} and \eqref{eq:weights_condition} we have
\[
\begin{aligned}
\pi(c)K_R(c, c') &= \sum_{d = 1}^Dp_c(d)\pi(c)K_d(c, c')\\
& = \sum_{d =1}^Dp_{c'}(d)\pi(c')K_d(c', c) = \pi(c')K_d(c', c)\,,
\end{aligned}
\]
meaning that $K_R$ is $\pi$-reversible.

Consider now point (b). Let $(c',v') \in \C\times \{-1, + 1\}^D$.
If $v = v'$,
by \eqref{eq:lifted_mixture_operator}, $Q_{d,c,c'}(v,v')=
\mathbbm{1}(c\neq c')$ and \eqref{eq:single_coordinate_} we have
\begin{align*}
\sum_{c\in\C}\tilde{\pi}(c, v)K_{\text{lift},d}((c,v), (c',v')) &=
\sum_{c \in\C}\pi(c)K_{d, v_d}(c, c')Q_{d,c,c'}(v,v')2^{-D}
\\
&= 
\sum_{c \neq c'}\pi(c)K_{d, v'_d}(c, c')2^{-D}\\
&= \sum_{c \neq c'}\pi(c')K_{d, -v'_d}(c', c)2^{-D} = \tilde{\pi}(c', v')\left[1-K_{d, -v'_d}(c', c') \right].
\end{align*}
Similarly, if $v_{-d} = v'_{-d}$ and $v_d = -v'_d$, by $Q_{d,c,c'}(v,v')=
\mathbbm{1}(c= c')$
we have that
\begin{align*}
\sum_{c\in\C}\tilde{\pi}(c, v)K_{\text{lift},d}((c,v), (c',v')) &= \sum_{c \in\C}\pi(c)K_{d, v_d}(c, c')Q_{d,c,c'}(v,v')2^{-D}
\\
&= \tilde{\pi}(c')K_{d, -v'_d}(c', c')2^{-D}= \tilde{\pi}(c', v')K_{d, -v'_d}(c', c').
\end{align*}
Summing the two expressions above, and using the fact that 
 $K_{\text{lift},d}((c,v), (c',v'))=0$ if $v_{-d}\neq v'_{-d}$, we have 
\[
\sum_{c, v}\tilde{\pi}(c, v)K_{\text{lift},d}((c,v), (c',v')) = \tilde{\pi}(c', v'),
\]
which implies that $K_{\text{lift},d}$ is $\tilde{\pi}$-invariant. Since $F_d$ is also trivially $\tilde{\pi}$-invariant and composition of invariant kernels remains invariant, then $F_dK_{\text{lift},d}F_d$ is $\tilde{\pi}$-invariant. Finally, using \eqref{eq:weights_condition}, we have
\[
\begin{aligned}
\sum_{c, v}\tilde{\pi}(c, v)K_{NR}((c,v), (c',v')) & = \sum_{d = 1}^D\sum_{c, v}p_c(d)\tilde{\pi}(c, v)\left( F_dK_{\text{lift},d}F_d\right)((c,v), (c',v'))\\
& = \sum_{d = 1}^Dp_{c'}(d)\sum_{c, v}\tilde{\pi}(c, v)\left( F_dK_{\text{lift},d}F_d\right)((c,v), (c',v'))\\
& = \sum_{d = 1}^Dp_{c'}(d)\tilde{\pi}(c', v') = \tilde{\pi}(c', v'),
\end{aligned}
\]
and therefore $K_{NR}$ is $\tilde{\pi}$-invariant.

Consider now point (c). Let $\bar{K}_R=(K_{NR}+K^*_{NR})/2$, where $K^*_{NR}$ is the $\tilde{\pi}$-adjoint of $K_{NR}$.  
Since $F_d^* = F_d$, which is easy check by definition of $F_d$, we have that $\left(F_dK_{\text{lift},d}F_d\right)^*=F^*_dK^*_{\text{lift},d}F^*_d=F_dK^*_{\text{lift},d}F_d$, which implies
\begin{align*}
    K^*_{NR} 
((c,v), (c',v')) = 
\sum_{d = 1}^D p_c(d) \left(F_dK^*_{\text{lift},d}F_d\right)((c,v), (c',v'))
\end{align*}
and thus
\begin{align*}
\bar{K}_R((c,v), (c',v')) &= \frac{1}{2}\sum_{d = 1}^Dp_c(d)\left(F_dK_{\text{lift},d}F_d\right)((c,v), (c',v')) +\frac{1}{2}\sum_{d = 1}^Dp_c(d)\left(F_dK^*_{\text{lift},d}F_d\right)((c,v), (c',v'))\\
& = \sum_{d = 1}^Dp_c(d)\left( F_d\bar{K}_{NR, d}F_d\right)((c,v), (c',v'))
\end{align*}
with $\bar{K}_{NR, d}:=\frac{1}{2}K_{\text{lift}, d}+\frac{1}{2}K^*_{\text{lift}, d}$. 
By \eqref{eq:single_coordinate_} we have that for $c' \neq c$
\begin{align*}
K^*_{\text{lift}, d}((c,v), (c',v')) &= 
\frac{\tilde\pi(c',v')}{\tilde\pi(c,v)}
K_{\text{lift}, d}((c',v'), (c,v)) 
\\
&= 
\frac{\pi(c')}{\pi(c)}
K_{d,v'_d}\left(c', c\right)Q_{d,c',c}(v',v) 
\\
& = K_{d,-v'_d}\left(c, c'\right)Q_{d,c',c}(v',v)= K_{d,-v_d}\left(c, c'\right)Q_{d,c,c'}(v,v'),
\end{align*}
where we used the definition of $Q_{d,c',c}(v',v)$.
For $c' = c$ we have that
\begin{align*}
K^*_{\text{lift}, d}((c,v), (c,v')) &= K_{\text{lift}, d}((c,v'), (c,v)) 
\\
&= K_{d,v'_d}\left(c, c\right)Q_{d,c,c}(v',v)
= K_{d,-v_d}\left(c, c\right)Q_{d,c,c'}(v,v')\,
\end{align*}
where we used that $Q_{d,c,c}(v',v)>0$ implies $v'_d=-v_d$.
Thus 
\[
\begin{aligned}
\bar{K}_{NR, d}((c,v), (c',v')) 
&= 
\frac{1}{2}K_{d,v_d}\left(c, c'\right)Q_{d,c,c'}(v,v') + \frac{1}{2}K_{d,-v_d}\left(c, c'\right)Q_{d,c,c'}(v,v')
\\
&= 
\left[\frac{1}{2}K_{d,+}\left(c, c'\right) + \frac{1}{2}K_{d,-}\left(c, c'\right) \right]Q_{d,c,c'}(v,v')\\
&= K_d(c, c')Q_{d,c,c'}(v,v')\,.
\end{aligned}
\]
Let now $g:\C\times \{ -1, + 1\}^D\to\R$ and $\tilde{g}:\C\to\R$ such that $g(c,v)=\tilde{g}(c)$. Then, since $F_d$ leaves the first coordinate invariate and $K_d$ does not depend on $v$, we have that
\[
\begin{aligned}
    \left( F_d\bar{K}_{NR, d}F_dg\right)(c, v) &= \sum_{c', v'}\left( F_d\bar{K}_{NR, d}F_d\right)((c,v), (c',v'))\tilde{g}(c')\\
    & = \sum_{c'}K_d(c, c')\tilde{g}(c') = K_d\tilde{g}(c),
\end{aligned}
\]
which implies
\[
\begin{aligned}
    \bar{K}_{NR}g(c, v) &= \sum_{c', v'} \bar{K}_{NR}((c,v), (c',v'))g(c', v') = K_R\tilde{g}(c).
\end{aligned}
\]
By simple induction then $\bar{K}^t_{NR}g(c, v) = K_R^t\tilde{g}(c)$ for every $t$. It thus follows that 
$\Var(g, \bar{K}_{R})= \Var(\tilde{g}, K_{R})$. Point (c) then follows from $\Var(g, K_{NR})\leq \Var(g, \bar{K}_{R})$, which is a consequence of Lemma \ref{lemma:symmetrization}.
\end{proof}

\subsection{Proof of Lemma \ref{lemma:erg_Prev}}
\begin{proof}
Reversibility follows by Lemma \ref{lemma:lifting} (point (a)), with $K_R = \Prev$ and $(k, k') \in \mathcal{K}$ in place of $d \in [D]$. The only delicate condition to verify is given by \eqref{eq:weights_condition}, which follows since $P_{k, k'}(c, c') > 0$ implies that $n_k(c) + n_{k'}(c) = n_k(c') + n_{k'}(c')$ and therefore $p_c(k, k') = p_{c'}(k, k')$.

Since $\pi(c) > 0$ for every $c\in[K]^n$, we have $\pi(c_i = k \mid c_{-i}) > 0$ for every $c_{-i} \in [K]^{n-1}$. Combining this with the fact that $p_c(k, k') > 0$ for every $(k, k') \in \mathcal{K}$ such that $n_k(c) + n_{k'}(c) > 0$, we get that for every pair $c \neq c' \in [K]^n$ there exists a $T\in\N$ and a sequence $c = c^{(0)}, c^{(1)}, \dots, c^{(T)} = c'$ such that $\Prev\left(c^{(t-1)}, c^{(t)}\right) > 0$ for every $t = 1, \dots, T-1$. Thus, $\PNR$ is irreducible. It is also easy to see that $\Prev$ is aperiodic. Uniform ergodicity then follows from \citet[Theorem 4.9]{levin2017markov}.
\end{proof}

\subsection{Proof of Lemma \ref{lm:sampling_clusters}}
\begin{proof}
    Fix $(k, k') \in \mathcal{K}$ and let $(k_1, k_2)$ be the pair sampled in the first two lines of Algorithm \ref{alg:sampling_clusters}. Then a draw from the latter will have $(k, k')$ as realization if and only if $(k_1, k_2) = (k, k')$ or $(k_1, k_2) = (k', k)$. By construction
    \[
    \mathbb{P}(k_1 = k, k_2 = k') = \frac{n_k(c)}{(K-1)n}, \quad \mathbb{P}(k_1 = k', k_2 = k) = \frac{n_{k'}(c)}{(K-1)n},
    \]
    and thus $p_c(k, k') = \mathbb{P}(k_1 = k, k_2 = k') + \mathbb{P}(k_1 = k', k_2 = k)$, as desired.
\end{proof}

\subsection{Proof of Lemma \ref{lemma:ergodicity}}
\begin{proof}
Invariance follows by Lemma \ref{lemma:lifting} (point (b)), with $K_{NR} = \PNR$ and $(k, k') \in \mathcal{K}$ in place of $d \in [D]$. The condition \eqref{eq:weights_condition} is satisfied as shown in the proof of Lemma \ref{lemma:erg_Prev}.
Consider then irreducibility. For ease of notation, we use the notation $\sX = [K]^n\times \mathcal{K}$ and $x = (c, v) \in \sX$. 
If $\pi(c) > 0$, this implies that $\pi(c_i = k \mid c_{-i}) > 0$ for every $c_{-i} \in [K]^{n-1}$. Combining this with the fact that $p_c(k, k') > 0$ for every $(k, k') \in \mathcal{K}$ such that $n_k(c) + n_{k'}(c) > 0$, we get that for every pair $x \neq x' \in \sX$ there exists a $T\in\N$ and a sequence $x = x^{(0)}, x^{(1)}, \dots, x^{(T)} = x'$ such that $\PNR\left(x^{(t-1)}, x^{(t)}\right) > 0$ for every $t = 1, \dots, T-1$. Thus, $\PNR$ is irreducible. Moreover, if $\xi > 0$ it is immediate to deduce that $\PNR$ is aperiodic. Uniform ergodicity then follows from \citet[Theorem 4.9]{levin2017markov}.
\end{proof}

\subsection{Proof of Theorem \ref{thm:asymp_variances}}

\begin{proof}[Proof of Theorem \ref{thm:asymp_variances}]
The first inequality $\Var(g, \PNR) \leq \Var(g, \Prev)$ follows by point (c) of Lemma \ref{lemma:lifting}, with $K_{NR} = \PNR$ and $K_R = \Prev$.

In order to prove the other inequality in \eqref{eq:asymp_variances} it suffices to show that
\begin{equation}\label{eq:inequality_operators}
\Prev(c, c') \geq \frac{1}{c(K)}\PMG(c, c'), \quad c \neq c' \in [K]^n,
\end{equation}
by, e.g., Theorem $2$ in \cite{zanella2020informed}. 

In order to prove \eqref{eq:inequality_operators}, fix $c$ and $c'$ such that $c = (c_{-i}, k)$ and $c' = (c_{-i}, k')$ with $i \in [n]$ and $(k, k') \in \mathcal{K}$. Indeed for every other pair $(c, c')$ such that $c \neq c'$ we have that $\Prev(c, c') = \PMG(c, c') = 0$. 
By definition of $\PMG$ and $\Prev$ we have
\[
\PMG(c, c') = \frac{1}{n}\pi(c_i = k'\mid c_{-i}).
\]
and
\[
\Prev(c, c') = \frac{1}{2n_k}\frac{n_k+n_{k'}}{(K-1)n}\min \left\{1, \frac{n_{k}}{n_{k'} + 1}\frac{\pi(c_i = k'\mid c_{-i})}{\pi(c_i = k\mid c_{-i})} \right\},
\]
where $n_j = n_j(c)$ for every $j \in [K]$ and $n_k\geq 1$ by definition of $c$. Thus
\[
\begin{aligned}
\Prev(c, c') &= \frac{1}{2(K-1)n}\min\left\{\frac{n_k+n_{k'}}{n_{k}},  \frac{n_{k}+n_{k'}}{n_{k'} + 1}\frac{\pi(c_i = k'\mid c_{-i})}{\pi(c_i = k\mid c_{-i})}\right\}\\
&\geq \frac{1}{2(K-1)n}\min\left\{1,  \frac{\pi(c_i = k'\mid c_{-i})}{\pi(c_i = k\mid c_{-i})}\right\}\\
&= \frac{\pi(c_i = k'\mid c_{-i})}{2(K-1)n}\min\left\{\frac{1}{\pi(c_i = k'\mid c_{-i})},  \frac{1}{\pi(c_i = k\mid c_{-i})}\right\}\\
&\geq \frac{1}{2(K-1)}\frac{\pi(c_i = k'\mid c_{-i})}{n} = \frac{1}{2(K-1)}\PMG(c, c'),
\end{aligned}
\]
which is exactly \eqref{eq:inequality_operators}.
\end{proof}

\subsection{Proof of Proposition \ref{prop:asymp_variances_cond}}
\begin{proof}
Let $\tPMG$ be the $\pi(c,\bm{w},\bm{\theta})$-reversible Markov kernel on $[K]^n \times \Theta^K \times \Delta_{K-1}$ that, given $(c^{(t)},\bm{w}^{(t)},\bm{\theta}^{(t)})$ generates $(c^{(t+1)},\bm{w}^{(t+1)},\bm{\theta}^{(t+1)})$  by
\[
c^{(t+1)}\sim \PMG\left(c^{(t)}, \cdot\right), \quad (\bm{w}^{(t+1)},\bm{\theta}^{(t+1)})\sim \pi\left(\bm{w},\bm{\theta}\mid c=c^{(t)}\right).
\]
By construction, $\Var(g, \PMG)=\Var(g,\tPMG)$ for any $g$ that is a function of $c$ alone, because the marginal process on $[K]^n$ induced by $\tPMG$ is a Markov chain with kernel $\PMG$.

We now compare $\tPMG$ and $\PCD$. Let
\[
\langle f,g\rangle_\pi = \int_{\sX}f(x) g(x) \pi(\d x), \quad \sX = [K]^n \times \Theta^K \times \Delta_{K-1},
\]
be the $L^2(\pi)$ inner product. Then for any 
$g\in L^2(\pi)$ and $(c,\bm{w},\bm{\theta})\sim \pi$ we have
\begin{equation}\label{eq:Dirichlet_form}
\begin{aligned}
&
\langle (I-\PCD)g,g\rangle_\pi
\\
&=\frac{1}{n+1}\sum_{i=1}^n
\E[\Var(g(c,\bm{w},\bm{\theta})\mid c_{-i},\bm{w},\bm{\theta})]
+
\frac{1}{n+1}
\E[\Var(g(c,\bm{w},\bm{\theta})\mid c)]
\\
&\leq 
\frac{1}{n+1}\sum_{i=1}^n
\E[\Var(g(c,\bm{w},\bm{\theta})\mid c_{-i})]
+
\frac{1}{n+1}
\sum_{i=1}^n
\frac{\E[\Var(g(c,\bm{w},\bm{\theta})\mid c_{-i})]}{n}
\\
&=
\frac{1}{n}\sum_{i=1}^n
\E[\Var(g(c,\bm{w},\bm{\theta})\mid c_{-i})]
=
\langle (I-\tPMG)g,g\rangle_\pi\,,
\end{aligned}
\end{equation}
where the middle inequality follows from the fact that
\[
\begin{aligned}
\E[\Var(g(c,\bm{w},\bm{\theta})\mid c)] & = \E\left[\E[\Var(g(c,\bm{w},\bm{\theta})\mid c)\mid c_{-i}]\right]\leq \E[\Var(g(c,\bm{w},\bm{\theta})\mid c_{-i})],
\end{aligned}
\]
for every $i = 1, \dots, n$ by the law of total variance. 
We thus have $\langle (I-\PCD)g,g\rangle_\pi\leq \langle (I-\tPMG)g,g\rangle_\pi $ for every $g\in L^2(\pi)$, which implies 
$\Var(g, \tPMG)\leq\Var(g,\PCD)$ for all $g$ (see e.g.\ the proof of \citealp[Theorem 4]{tierney1998note}). 
We thus have 
$\Var(g, \PMG)=\Var(g, \tPMG)\leq\Var(g,\PCD)$ for all $g$
for functions $g$ that depend only on $c$.
\end{proof}

\subsection{Proof of Theorem \ref{thm:diffusion_rev}}
\begin{proof}
By \eqref{eq:marg_priori} for every $\bm{x} \in \Delta_{K-1}$ we have that, as $n\to\infty$,
\[
\begin{aligned}
\E\left[X_{t+1, k} -x_k \mid \bm{X}_t = \bm{x} \right] &= \frac{1-x_k}{n}\frac{\alpha_k + nx_k}{|\bm{\alpha}|+n-1}-\frac{x_k}{n}\frac{|\bm{\alpha}|-\alpha_k + n(1-x_k)}{|\bm{\alpha}|+n-1}\\
& = \frac{\alpha_k -|\bm{\alpha}|x_k}{n(|\bm{\alpha}|+n-1)} = \frac{2}{n^2}\left[\frac{\alpha_k}{2} -|\bm{\alpha}|\frac{x_k}{2} + o(1) \right]
\end{aligned}
\]
and
\[
\begin{aligned}
\E\left[\left(X_{t+1, k} -x_k\right)^2 \mid \bm{X}_t = \bm{x} \right] &= \frac{1-x_k}{n^2}\frac{\alpha_k + nx_k}{|\bm{\alpha}|+n-1}+\frac{x_k}{n^2}\frac{|\bm{\alpha}|-\alpha_k + n(1-x_k)}{|\bm{\alpha}|+n-1}\\
& = \frac{2}{n^2}\left[x_k(1-x_k) + o(1) \right]
\end{aligned}
\]
and
\[
\begin{aligned}
\E\left[\left(X_{t+1, k} -x_k\right)\left(X_{t+1, k'} -x_{k'}\right) \mid \bm{X}_t = \bm{x} \right] &= \frac{-x_k}{n^2}\frac{\alpha_{k'} + nx_{k'}}{|\bm{\alpha}|+n-1}+\frac{x_{k'}}{n^2}\frac{\alpha_k + nx_k}{|\bm{\alpha}|+n-1}\\
& = \frac{2}{n^2}\left[-x_kx_{k'} + o(1) \right],
\end{aligned}
\]
 and $n^2\E\left[\left(X_{t+1, k} -x_k\right)^3 \mid \bm{X}_t = \bm{x} \right] = o(1)$ for $k \neq k' \in [K]$. By a second-order Taylor expansion, this means that
\[
\sup_{\bm{x} \in \Delta_{K+1}} \, \left \lvert \E\left[ g(\bm{X}_{t-1})\mid \bm{X}_t = \bm{x}\right] - g(\bm{x}) - \A g(\bm{x})\right\rvert \to 0,
\]
as $n \to \infty$ for every $g$ twice differentiable real-valued function. The result then follows by Corollary $8.9$ in \cite[Chapter 4]{ethier2009markov}.
\end{proof}

\subsection{Proof of Theorem \ref{thm:diffusion_nonrev}}

\begin{proof}
Fix $\bm{z} = (\bm{x}, \bm{v}) \in E_M \times V$. Notice that
\[
\begin{aligned}
\E\biggl[X^{(M)}_{t+1, k}-x_k &\mid \bm{X}_t^{(M)} = \bm{x}, \bm{V}^{(M)}_t = \bm{v} \biggr]  \\
&=\left(1-\frac{\xi}{n} \right)\frac{1}{n}\left[\sum_{k' \, : \, v_{k', k} = +1}\frac{x_k + x_{k'}}{K-1}\alpha(\bm{x}, k, k') - \sum_{k' \, : \, v_{k', k} = -1}\frac{x_k + x_{k'}}{K-1}\alpha(\bm{x}, k', k)\right]\\
& -\frac{\xi}{n}\frac{1}{n}\left[\sum_{k' \, : \, v_{k', k} = +1}\frac{x_k + x_{k'}}{K-1}\alpha(\bm{x}, k', k) - \sum_{k' \, : \, v_{k', k} = -1}\frac{x_k + x_{k'}}{K-1}\alpha(\bm{x}, k, k')\right],
\end{aligned}
\]
where
\[
\begin{aligned}
\alpha(\bm{x}, k, k') &= \min \left\{1, \left(\frac{\alpha_{k} + n x_k}{nx_k + 1}\right)\left(\frac{nx_{k'}}{\alpha_{k'} + nx_{k'} - 1}\right) \right\}\\
& = 1-\frac{\beta(x_k, x_{k'}, v_{k', k})}{n}+ o\left(\frac{1}{n}\right),
\end{aligned}
\]
from which we deduce that
\begin{equation}\label{eq:nonrev_scaling1}
\E\biggl[X^{(M)}_{t+1, k}-x_k \mid \bm{X}^{(M)}_t = \bm{x}, \bm{V}^{(M)}_t = \bm{v} \biggr] = \frac{\Phi_k(\bm{z})}{n} + o\left(\frac{1}{n}\right).
\end{equation}
Similarly we get that
\begin{equation}\label{eq:nonrev_scaling2}
\E\biggl[\left(X^{(M)}_{t+1, k}-x_k\right)\left(X^{(M)}_{t+1, k'}-x_{k'}\right) \mid \bm{X}^{(M)}_t = \bm{x}, \bm{V}^{(M)}_t = \bm{v} \biggr] = o\left(\frac{1}{n}\right), 
\end{equation}
for every $(k, k') \in [K]^2$. Moreover
\begin{equation}\label{eq:nonrev_scaling3}
\begin{aligned}
\E\biggl[g&\left( \bm{x}, \bm{V}^{(M)}_{t+1}\right)- g(\bm{x}, \bm{v})\mid \bm{X}^{(M)}_t = \bm{x}, \bm{V}^{(M)}_t = \bm{v} \biggr] \\
&
\begin{aligned}
    = \sum_{k \neq k'}\left[g(\bm{z_{(k, k')}})-g(\bm{z}) \right]\frac{x_k+x_{k'}}{2(K-1)}\biggl[&\left(1-\alpha(\bm{x}, k, k')\right)\left(1-\frac{\xi}{n} \right)^2 + \alpha(\bm{x}, k, k')\left(1-\frac{\xi}{n}\right)\frac{\xi}{n}\\
    &+\left(1-\alpha(\bm{x}, k', k)\right)\left(\frac{\xi}{n} \right)^2 + \alpha(\bm{x}, k', k)\frac{\xi}{n}\left(1-\frac{\xi}{n}\right)\biggr]
\end{aligned}\\
& = \sum_{k \neq k'}\frac{g(\bm{z_{(k, k')}})-g(\bm{z})}{n}\frac{x_k+x_{k'}}{2(K-1)}\left[\beta(x_k, x_{k'}, v_{k, k'}) + 2\xi \right] + o\left(\frac{1}{n} \right)\\
& = \frac{\lambda(\bm{z})}{n}\sum_{k\neq k'}q(k, k')\left[g(\bm{z_{(k, k')}})-g(\bm{z}) \right] + o\left(\frac{1}{n} \right).
\end{aligned}
\end{equation}
By a Taylor expansion we have that
\[
\begin{aligned}
\E\left[ g(\bm{Z}^{(M}_{t+1})\mid \bm{Z}^{(M)}_t = \bm{z}\right] &- g(\bm{z}) = \E\left[ g(\bm{x}, \bm{V}^{(M)}_{t+1})\mid \bm{Z}^{(M)}_t = \bm{z}\right] - g(\bm{z})\\
& + \sum_{k = 1}^K\E\left[ \left(X^{(M)}_{t+1, k} - x_k \right) \mid \bm{Z}^{(M)}_t = \bm{z}\right]\frac{\partial}{\partial z_{1, k}}g(\bm{z}) + o\left(\frac{1}{n} \right),
\end{aligned}
\]
that, combined with \eqref{eq:nonrev_scaling1}, \eqref{eq:nonrev_scaling2} and \eqref{eq:nonrev_scaling3}, implies
\[
\sup_{\bm{z} \in E_M \times [K]^2} \, \left \lvert \E\left[ g(\bm{Z}^{(M)}_{t+1})\mid \bm{Z}_t = \bm{z}\right] - g(\bm{z}) - \B g(\bm{z})\right\rvert \to 0,
\]
as $n \to \infty$ for every $g \, : \, \Delta_{K-1}\times [K]^2 \, \to \, \R$ twice continuously differentiable in the first argument. The result then follows by Corollary $8.9$ in \citet[Chapter 4]{ethier2009markov}.
\end{proof}

\end{appendix}

\end{document}